\newtheorem{theorem}{Theorem}
\newtheorem{corollary}[theorem]{Corollary}
\newtheorem{adefinition}[theorem]{Definition}
\newtheorem{lemma}[theorem]{Lemma}
\newtheorem{aremark}[theorem]{Remark}
\newenvironment{remark}{\begin{aremark}\rm}{\end{aremark}}
\newenvironment{proof}[1][Proof]{\noindent\textbf{#1.} }{\ \rule{0.5em}{0.5em}}
\numberwithin{equation}{section} \numberwithin{theorem}{section}
\begin{document}

\title{On Random Matrices Arising \\
in Deep Neural Networks: Gaussian Case}
\date{}
\author{{\small L. Pastur} \\
{\small B.Verkin Institute }\\
{\small for Low Temperature Physics and Engineering} \\
{\small Kharkiv, Ukraine}}

\maketitle

\begin{abstract}
The paper deals with distribution of singular values of product of
random matrices arising in the analysis of deep neural networks.
The matrices resemble the product analogs of the sample covariance
matrices, however, an important difference is that the population
covariance matrices, which are assumed to be non-random in the
standard setting of statistics and random matrix theory, are now
random, moreover, are certain functions of random data matrices.
The problem has been considered in recent work \cite{Pe-Co:18} by
using the techniques of free probability theory. Since, however,
free probability theory deals with population matrices which are
independent of the data matrices, its applicability in this case
requires an additional justification. We present this
justification by using a version of the standard techniques of
random matrix theory under the assumption that the entries of data
matrices are independent Gaussian random variables. In the
subsequent paper \cite{Pa-Sl:20} we extend our results to the case
where the entries of data matrices are just independent
identically distributed random variables with several finite
moments. This, in particular, extends the property of the
so-called macroscopic universality on the considered random
matrices.
\end{abstract}

\maketitle


\section{Introduction}

Deep neural networks with multiple hidden layers have achieved remarkable
performance in a wide variety of domains, see e.g. \cite%
{Be-Co:12,Bu:17,Ca-Ch:18,Le-Co:15,Sc:15,Sh-Ma:19} for reviews. Among
numerous research directions of the field those using random matrices of
large size are of considerable amount and interest. They treat random
untrained networks (allowing for the study their initialization and learning
dynamics, the information propagation through generic deep random neural
networks, etc.), the expressivity and the geometry of neural networks, the
analysis of the Bayesian approach, etc., see e.g. \cite%
{Gi-Co:16,Li-Qi:00,Ma-Ma:17,Ma-Co:16,Pe-Co:18,Po-Co:16,Sa-Co:11,Sc-Wa:17,Sc-Co:17}
and references therein.

Consider an untrained, feed-forward, fully connected neural network with $L$
layers of width $n_{l}$ for the $l$th layer and pointwise nonlinearities $%
\varphi $. Let%
\begin{equation}
x^{0}=\{x_{j_{0}}^{0}\}_{j_{0}=1}^{n_{0}}\in \mathbb{R}^{n_{0}}  \label{x0}
\end{equation}%
be the input to the network, and $x^{L}=\{x_{j_{L}}^{L}\}_{j_{L}=1}^{n_{L}}%
\in \mathbb{R}^{n_{L}}$ be its output. The components of the activations $%
x^{l}$ in the $l$th layer and the post-affine transformations $y^{l}$ of $%
x^{l}$ are $\{x_{j_{l}}^{l}\}_{j_{l}=1}^{n_{l}}$ and $\{y_{j_{l}}^{l}%
\}_{j_{l}=1}^{n_{l}}$ respectively and are related as follows
\begin{equation}
y^{l}=W^{l}x^{l-1}+b^{l},\;x_{j_{l}}^{l}=\varphi
(y_{j_{l}}^{l}),\;j_{l}=1,...,n_{l},\;l=1,...,L,  \label{rec}
\end{equation}%
where
\begin{equation}
W^{l}=\{W_{j_{l}j_{l-1}}^{l}\}_{j_{l},j_{l-1}=1}^{n_{l},n_{l-1}},\;l=1,...,L
\label{wl}
\end{equation}%
are $n_{l}\times n_{l-1}$ rectangular weight matrices,%
\begin{equation}
b^{l}=\{b_{j_{l}}^{l}\}_{j_{l}=1}^{n_{l}},\;l=1,2,...,L  \label{bl}
\end{equation}%
are $n_{l}$-component bias vectors and $\varphi :\mathbb{R}\rightarrow
\mathbb{R}$ is the component-wise nonlinearity.

Assume that the biases components $\{b_{j_{l}}^{l}\}_{j_{l}=1}^{n_{l}}$ are
the Gaussian random variables such that:
\begin{equation}
\mathbf{E}\{b_{j_{l}}^{l}\}=0,\;\mathbf{E}%
\{b_{j_{l_{1}}}^{l_{1}}b_{j_{l_{2}}}^{l_{2}}\}=\sigma _{b}^{2}\delta
_{l_{1}l_{2}}\delta _{j_{l_{1}}j_{l_{2}}}.  \label{bga}
\end{equation}%
As for the weight matrices $W^{l},\;l=1,2,...,L$, it is assumed that
\begin{eqnarray}
W^{l}
&=&n_{l-1}^{-1/2}X^{l}=n_{l-1}^{-1/2}\{X_{j_{l}j_{l-1}}^{l}%
\}_{j_{l},j_{l-1}=1}^{n_{l},n_{l-1}},  \label{wga} \\
\mathbf{E}\{X_{j_{l}j_{l-1}}^{l}\} &=&0,\;\mathbf{E}%
\{X_{j_{l_{1}}j_{l_{1}-1}}^{l_{1}}X_{j_{l_{2}}j_{l_{2}-1}}^{l_{2}}\}=\delta
_{l_{1}l_{2}}\delta _{j_{l_{1}}j_{l_{1}-1}}\delta _{j_{l_{2}}j_{l_{2}-1}},
\notag
\end{eqnarray}%
the matrices $X^{l},\;l=1,2,...,L$ are independent and identically
distributed and for every $l$ we view $X^{l}$ as the upper left rectangular
block of the semi-infinite random matrix%
\begin{equation}
\{X_{j_{l}j_{l-1}}^{l}\}_{j_{l},j_{l-1}=1}^{\infty ,\infty }  \label{xinf}
\end{equation}%
with the standard Gaussian entries.

Likewise, for every $l$ we view $b^{l}$ in (\ref{bl}) as the first $n_{l}$
components of the semi-infinite vector
\begin{equation}
\{b_{j_{l}}^{l}\}_{j_{l}=1}^{\infty }  \label{binf}
\end{equation}%
whose components are Gaussian random variables normalized by (\ref{bga})
with $n_{l}=\infty ,\;l=1,2,...,L$.

As a result of this form of weights and biases of the $l$th layer they are
for all $n_{l}=1,2,...$ defined on the same infinite-dimensional product
probability space $\Omega ^{l}$ generated by (\ref{xinf}) -- (\ref{binf}).
Let also
\begin{equation}  \label{oml}
\Omega _{l}=\Omega ^{l}\times \Omega ^{l-1}\times ...\times \Omega
^{1},\;l=1,...,L
\end{equation}%
be the infinite-dimensional probability space on which the recurrence (\ref%
{rec}) is defined for a given $l$ (the number of layers). This will allow us
to formulate our results on the large size asymptotics of the eigenvalue
distribution of matrices (\ref{JJM}) as those valid with probability 1 in $%
\Omega _{L}$.

Note that matrices $W^{l}(W^{l})^{T}$ of (\ref{wl}) and (\ref{wga}) are
known in statistics as the Wishart matrices \cite{Mu:05}.

Consider the input-output Jacobian
\begin{equation}
J_{\mathbf{n}_{L}}^{L}:=\left\{ \frac{\partial x_{j_{L}}^{L}}{\partial
x_{j_{0}}^{0}}\right\}
_{j_{0},j_{L}=1}^{n_{0},n_{L}}=\prod_{l=1}^{L}n^{-1/2}_{l-1} D^{l}X^{l},\;\mathbf{n}%
_{L}=(n_{1},...,n_{L})  \label{jac}
\end{equation}%
i.e., a $n_{L}\times n_{0}$ random matrix, where
\begin{equation}
D^{l}=\{D_{j_{l}}^{l}\delta
_{j_{l}k_{l}}\}_{j_{l},k_{l}=1}^{n_{l}},\;D_{j_{l}}^{l}=\varphi ^{\prime }%
\Big(n_{l-1}^{-1/2}%
\sum_{j_{l-1}=1}^{n_{l-1}}X_{j_{l}j_{l-1}}^{l}x_{j_{l-1}}^{l-1}+b_{j_{l}}^{l}%
\Big)  \label{D}
\end{equation}%
are diagonal random matrices.

We are interested in the spectrum of singular values of $J_{\mathbf{n}%
_{L}}^{L}$, i.e., the square roots of eigenvalues of
\begin{equation}
M_{\mathbf{n}_{L}}^{L}:=J_{\mathbf{n}_{L}}^{L}(J_{\mathbf{n}_{L}}^{L})^{T}
\label{JJM}
\end{equation}%
for networks with the above random weights and biases and for large $%
\{n_{l}\}_{l=1}^{L}$, i.e., for deep networks with wide layers, see \cite%
{Li-Qi:00,Ma-Co:16,Pe-Ba:17,Pe-Co:18,Po-Co:16,Sc-Co:17} for motivations and
settings. More precisely, we will study in this paper the asymptotic case
determined by the simultaneous limits
\begin{equation}
\lim_{N_{l}\rightarrow \infty }\frac{n_{l-1}}{n_{l}}=c_{l}\in (0,\infty
),\;n_{l}\rightarrow \infty ,\;l=1,...,L  \label{asf1}
\end{equation}%
denoted below as
\begin{equation}
\lim_{\mathbf{n}_{L}\rightarrow \infty }.  \label{limn}
\end{equation}%
%
Denote $\{\lambda _{t}^{L}\}_{t=1}^{n_{L}}$ the eigenvalues of the $n_L
\times n_L$ random matrix $M_{\mathbf{n}_{L}}^{L}$ and define its Normalized
Counting Measure (NCM) as
\begin{equation}
\nu _{M_{\mathbf{n}_{L}}^{L}}:=n_{L}^{-1}\sum_{t=1}^{N_{L}}\delta _{\lambda
_{t}^{L}}.  \label{ncm}
\end{equation}%
We will deal with the leading term of $\nu _{M_{\mathbf{n}_{L}}^{L}}$in the
asymptotic regime (\ref{asf1}) -- (\ref{limn}), i.e., with the limit%
\begin{equation}
\nu _{M^{L}}:=\lim_{\mathbf{n}_{L}\rightarrow \infty }\nu _{M_{\mathbf{n}%
_{L}}^{L}}  \label{ids}
\end{equation}%
if any. Note that since $\nu _{M_{\mathbf{n}_{L}}^{L}}$ is random, the
meaning of the limit has to be stipulated.

The problem was considered in \cite{Pe-Co:18} (see also \cite%
{Li-Qi:00,Pe-Ba:17}) in the case where all $b^{l}$ and $X^{l},\;l=1,2,...,L$
in (\ref{bga}) -- (\ref{wga}) are Gaussian and have the same size $n$ and $%
n\times n$ respectively, i.e.,%
\begin{equation}  \label{nequ}
n:=n_{0}=...=n_{L}.
\end{equation}%
We will write in this case $n$ instead of $n_l, \; l=0,...,L$. In \cite{Pe-Co:18}
compact formulas for the limit
\begin{equation}
\overline{\nu }_{M^{L}}:=\lim_{n\rightarrow \infty }\overline{\nu }%
_{M_{n}^{L}},\;\overline{\nu }_{M_{n}^{L}}:=\mathbf{E}\{\nu _{M_{n}^{L}}\}
\label{mids}
\end{equation}
and its Stieltjes transform
\begin{equation}
f_{M^{L}}(z)=\int_{\infty }^{\infty }\frac{\overline{\nu }_{M^{L}}(d\lambda )%
}{\lambda -z},\;\Im z\neq 0  \label{stm}
\end{equation}%
were proposed. The formula for $\overline{\nu} _{M^{L}}$ is given in (\ref%
{nucon}) below. To write the formula for $f_{M^{L}}$ it is convenient to use
the moment generating function
\begin{equation}
m_{M^{L}}(z)=\sum_{k=1}^{\infty }m_{k}z^{k},\;m_{k}=\int_{\infty }^{\infty
}\lambda ^{k}\overline{\nu }_{M^{L}}(d\lambda ),  \label{mgen}
\end{equation}
related to $f_{M^{L}}$ as
\begin{equation}
m_{M^{L}}(z)=-1-z^{-1}f_{M^{L}}(z^{-1}).  \label{stmg}
\end{equation}%
Let
\begin{equation}
K_{n}^{l}:=(D_{n}^{l})^{2}=\{(D_{j_{l}}^{l})^{2}\}_{j_{l}=1}^{n}  \label{kan}
\end{equation}%
be the square of the $n\times n$ random diagonal matrix (\ref{D}) with $n_{l}=n$,
denoted $D_{n}^{l}$ to make explicit its dependence on $n$ of (\ref{nequ}), and let $m_{K^{l}}$ be the moment generating function of the $n\rightarrow
\infty $ limit $\overline{\nu }_{K^{l}}$ of the expectation of the NCM of $%
K_{n}^{l}$. Then we have according to formulas (14) and (16) in \cite%
{Pe-Co:18} in the case where $\overline{\nu }_{K^{l}}$, hence $m_{K^{l}}$, do not depend on $l$
(see Remark \ref{r:penn} (i))
\begin{eqnarray}
m_{M^{L}}(z) &=&m_{K}(z^{1/L}\Psi _{L}(m_{M^{L}}(z))),  \label{penfo} \\
\Psi _{L}(z) &=&(1+z)^{1/L}z^{1-1/L}.  \notag
\end{eqnarray}%
i.e., $f_{M^{L}}$ of (\ref{stm})) satisfies a certain functional equation,
the standard situation in random matrix theory and its applications, see
\cite{Pa-Sh:11} for general results and \cite{Go-Co:15,Mu:02} for results on
the products of random matrices. Note that our notation is different from
that of \cite{Pe-Co:18}: our $f_{M^{L}}(z)$ of (\ref{stm}) is $-G_{X}(z)$ of
(7) in \cite{Pe-Co:18} and our $m_{M^{L}}(z)$ of (\ref{mgen}) is $M_{X}(1/z)$
of (9) in \cite{Pe-Co:18}.

The derivation of (\ref{penfo}) and the corresponding formula for the
limiting mean NCM\ $\overline{\nu }_{M^{L}}$ in \cite{Pe-Co:18} was based on
the claimed there asymptotic freeness of diagonal matrices $%
D_{n_l}^{l}=\{D_{j_{l}}^{l}\}_{j_{l}=1}^{n_l},\;l=1,2...,L$ of (\ref{D})
and Gaussian matrices $X_{n_l}^{l},\;l=1,2,...,L$ of (\ref{wl}) -- (%
\ref{wga})
(see, e.g. \cite%
{Ch-Co:18,Mi-Sp:17,Pe-Hi:00} for the definitions and properties of
asymptotic freeness). This leads directly to (\ref{penfo}) in view of the
multiplicative property of the moment generating functions (\ref{mids})
and the so-called $S$-transforms of $\overline{\nu }_{K^{l}}$
and of $\nu_{MP}$, the mean limiting NCM's of $K_{n_l}^{l}$ and of $n^{-1}X_{n_l}^l (X_{n_l}^{l})^T$ in the regime (\ref{asf1}), see Remark \ref{r:penn} (ii) and Corollary \ref{c:conv}.

There is, however, a delicate point in the proof of the above results in
\cite{Pe-Co:18}, since, to the best of our knowledge, the asymptotic
freeness has been established so far for the Gaussian random matrices $X^{l}_{n_l}$ of (\ref{wga}) and deterministic (more generally, random but $%
X^{l}_{n_l}$-independent) diagonal matrices, see e.g. \cite{Ch-Co:18,Mi-Sp:17,Pe-Hi:00} and also
\cite{Go-Co:15,Mu:02} treating the product matrices of form (\ref{JJM}) with  $X^l_n$-independent diagonal matrices.
On the other
hand, the diagonal matrices $D_{n}^{l}$ in (\ref{D}) depend explicitly on $%
(X_{n}^{l},b_{n}^{l})$ of (\ref{wl}) -- (\ref{bl}) and, implicitly, via $%
x^{l-1}$, on the all "preceding" $(X_{n}^{l^{\prime }},b_{n}^{l^{\prime
}}),\;l^{\prime }=l-1,...,1$. Thus, the proof of validity of (\ref{penfo})
requires an additional reasoning. The goal of this paper is to provide this
reasoning, thereby justifying the basic formula (\ref{penfo}) and the
corresponding formulas for the mean limiting NCM $\overline{\nu }_{M^{L}}$
of (\ref{mids}), see formula (13) of \cite{Pe-Co:18} and formula (\ref%
{nucone}) below. Moreover, we prove that the formula (\ref{ids}) is valid
not only in the mean (see (\ref{mids})), but also with probability 1 in $%
\Omega _{L}$ of (\ref{oml}) (recall that the measures in the r.h.s. of (\ref%
{ids})\ are random) and that the limiting measure $\nu _{M^{L}}$ in the
l.h.s. of (\ref{ids}) coincides with $\overline{\nu }_{M^{L}}$ of (\ref{mids}%
), i.e., $\nu _{M^{L}}$ is nonrandom.

Note that a possible version of the proof of the above assertions could be
carried out by extending the corresponding proofs in free probability (see, e.g. \cite{Mi-Sp:17,Pe-Hi:00}) to the case
where the diagonal matrices are given by (\ref{D}). We will prefer, however,
another approach based on the standard techniques of random matrix theory, see e.g. \cite{Pa-Sh:11}. There the main technical tools are some differentiation
formulas (see, e.g. (\ref{difg})), providing certain identities for
expectations of essential spectral characteristics, and bounds (Poincar\'e,
martingale) for the variance of these characterizes, guaranteing the
vanishing of their fluctuations in the large size (layer width) limit,
thereby allowing for the conversion of the obtained identities into
functional equations for the characteristics in question, the Stieltjes
transform of the limiting NCM in particular. This, however, has to be
complemented (in fact, preceded) by a certain assertion (see Lemma \ref%
{l:inter}) justifying the asymptotic replacement of random $X_{n_l}^l$-dependent
matrices $D_{n_l}^l$ in (\ref{jac}) -- (\ref{D}) by certain random but $X_{n_l}^l$%
-independent matrices (see (\ref{mnbf}) -- (\ref{kab1})) and allowing us not
only to substantiate the results of \cite{Pe-Co:18}, but also to extend them
to the case of i.i.d. but not necessarily Gaussian $(X_{n_l}^{l},b_{n_l}^{l}),%
\;l=1,...,L$ \cite{Pa-Sl:20}.

The paper is organized as follows. In the next section we prove the validity
of (\ref{ids}) with probability 1 in $\Omega_L$ of (\ref{oml}), formula (\ref%
{penfo}) and the corresponding formula for $\nu _{M^{L}}=\overline{\nu }%
_{M^{L}}$ of \cite{Pe-Co:18}. The proofs are based on a natural inductive
procedure allowing for the passage from the $l$th to the $(l+1)$th layer. In
turn, the induction procedure is based on a formula relating the limiting
(in the layer width) Stieltjes transforms of the NCM's of two subsequent
layers. The formula is more or less standard both in its form and its
derivation in the case where the matrices $D_{n}^{l}$ in (\ref{jac}) are
deterministic or random but independent of $(X_{n}^{l^{\prime
}},b_{n}^{l^{\prime }}),\;l^{\prime }=l,l-1,...,1$, see e.g. \cite{Co-Ha:14,Pa-Sh:11} and references therein.%
The case of dependent $D_{n}^{l}$ as in (\ref{D}) is treated in Section 3.

It follows from the results of the section that the coincidence of the limiting eigenvalue distribution of matrices of two indicated cases is due to the form of
dependence of $D_{n}^{l}$ on $(X_{n}^{l^{\prime }},b_{n}^{l^{\prime
}}),\;l^{\prime }=l,l-1,...,1$ given by (\ref{D}), which is, so to say,
"slow varying" and does not contribute to the leading term (the limit (\ref%
{ids})) of the corresponding eigenvalue distribution.

\section{Main Result and its Proof.}
As was already mention
ed in Introduction, our goal is to present a more complete
proof of the results of work \cite{Pe-Co:18} by using random matrix theory.
Thus, to formulate our results, we need several facts of the theory.

Consider for every positive integer $n$: (i) the $n\times n$ random matrix $%
X_{n}$ with entries which are independent standard (mean zero and variance
1) Gaussian random variables; (ii) positive definite matrices $\mathsf{K}_{n}$ and $\mathsf{R}_{n}$ that may also be random but independent of $X_n$ and such that their Normalized Counting
Measures $\nu _{\mathsf{K}_{n}}$ and  $\nu _{\mathsf{R}_{n}}$ (see (\ref{ncm}))
converge weakly (with probability 1 if random) as $n\rightarrow
\infty $ to non-random measures $\nu _{\mathsf{K}}$ and
$\nu _{\mathsf{R}}$. Set $%
\mathsf{M}_{n}=n^{-1}\mathsf{R}_{n}^{1/2}X^{T}\mathsf{K}_{n}X_{n}
\mathsf{R}_{n}^{1/2}$. According to random matrix theory (see, e.g. Lemma \ref{l:rkx} below, \cite{Co-Ha:14} and references therein), in this case and under some conditions the  Normalized Counting Measure $\nu _{\mathsf{M}_{n}}$ of $\mathsf{M}_{n}$
converges weakly with probability 1 as $n\rightarrow \infty $ to a non-random
measure $\nu _{\mathsf{M}}$ which is uniquely determined by the limiting measures
 $\nu _{\mathsf{K}}$ and $\nu _{\mathsf{R}}$
via a certain analytical procedure (see, e.g.
formulas (\ref{stm}) and (\ref{fh}) -- (\ref{kh}) below). We can write down this fact as%
\begin{equation}
\nu _{\mathsf{M}}=\nu _{\mathsf{K}}\diamond \nu _{\mathsf{R}} \label{opdia}
\end{equation}
and say that the procedure defines a binary operation in the set
of non-negative measures with the total mass 1 and a support belonging to the
positive semiaxis
 (see more details in Remark \ref{r:penn} (ii), Lemma \ref{l:rkx} and Corollary \ref{c:conv}).

This is the result of \cite{Pe-Co:18} confirmed in the present paper that the
limiting Normalized Counting Measure (\ref{ids}) of random matrices
(\ref{JJM}), where the role of $\mathsf{K}_{n}$ plays the matrix given by (\ref{kan}) and (\ref{D}) and depending on the Gaussian matrices $X^{l}$'s of (\ref{wga}), can be found as the "product" with respect the operation (\ref{opdia}) of $L$ measures $\nu_{K^l}, \; l=1,...,L$ which are indicated in Theorem \ref{t:main} and are the limiting Normalized Counting Measures of special random matrices that do not depend on $X^l$'s of (\ref{wga}), see (\ref{kab1}) and the subsequent text.

Note that (\ref{opdia}) is a version of the so-called multiplicative
convolution of free probability theory \cite{Mi-Sp:17,Pe-Hi:00}, having the
above random matrices as a basic analytic model.

We will follow \cite{Pe-Co:18} and confine
ourselves to the case (\ref{nequ}) where all the weight matrices and bias
vectors are of the same size $n$. The general case of
different sizes is essentially the same (see, e.g. Remark \ref{r:rra}
(iii)). 

\begin{theorem}
\label{t:main} Let $M_{n}^{L}$ be the random matrix (\ref{JJM}) defined by (%
\ref{rec}) -- (\ref{D}) and (\ref{nequ}), where the biases $b^{l}$ and
weights $W^{l}$ are random Gaussian variables satisfying (\ref{bga}) -- (\ref{wga})
and the input vector $x^{0}$ (\ref{x0}) (deterministic or random) is such
that there exists a finite limit%
\begin{equation}
q^{1}:=\lim_{n\rightarrow \infty }q_{n}^{1}>\sigma
_{b}^{2}>0,\;\;q_{n}^{1}=n^{-1}\sum_{j_{0}=1}^{n}(x_{j_{0}}^{0})^{2}+\sigma
_{b}^{2}.  \label{q0}
\end{equation}%
Assume also that the nonlinearity $\varphi $ in (\ref{rec}) is a piecewise
differentiable function such that $\varphi ^{\prime }$ is not zero
identically and denote%
\begin{equation}
\sup_{t\in \mathbb{R}}|\varphi (t)|=\Phi _{0}<\infty ,\;\sup_{t\in \mathbb{R}%
}|\varphi ^{\prime }(t)|=\Phi _{1}<\infty .\;\;  \label{phi1}
\end{equation}%
Then the Normalized Counting Measure (NCM) $\nu _{M_{n}^{L}}$ of $M_{n}^{L}$
(see (\ref{ncm})) converges weakly with probability 1 in the probability
space $\Omega _{L}$ of (\ref{oml}) to the non-random measure
\begin{equation}
\nu _{M^{L}}=\nu _{K^{1}}\diamond \nu _{K^{2}}...\diamond \nu
_{K^{L}}\diamond \delta _{1},  \label{nucon}
\end{equation}%
where the operation "$\diamond $" is defined in (\ref{opdia}) (see also Lemma \ref{l:rkx} and Corollary \ref{c:conv}),
$\delta _{1}$ is the unit measure
concentrated at 1 and $\nu _{K^{l}},\;l=1,...,L$ is the probability
distribution of the random variable \ $(\varphi ^{\prime }(\gamma \sqrt{q^{l}%
}))^{2}$ with the standard Gaussian random variable $\gamma $ and $q^{l}$
determined by the recurrence
\begin{equation}
q^{l}=\mathbf{E}\{\varphi ^{2}(\gamma \sqrt{q^{l-1}})\},\;l\geq 2,
\label{qlql}
\end{equation}%
with $q^{1}$ given by (\ref{q0}).
\end{theorem}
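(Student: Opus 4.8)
The plan is to argue by induction on the number of layers $L$, passing from layer $l-1$ to layer $l$. The key conceptual step, isolated as Lemma \ref{l:inter} in the paper, is to replace the $X^{l}_{n}$-dependent diagonal matrix $D^{l}_{n}$ of (\ref{D}) by an $X^{l}_{n}$-independent surrogate without changing the limiting NCM. The heuristic is that the argument of $\varphi'$ in (\ref{D}), namely $n_{l-1}^{-1/2}\sum_{j_{l-1}}X^{l}_{j_{l}j_{l-1}}x^{l-1}_{j_{l-1}}+b^{l}_{j_{l}}$, is, conditionally on the data from layers $1,\dots,l-1$, a Gaussian vector whose components are asymptotically independent with common variance $q^{l}_{n}=n_{l-1}^{-1}\sum_{j_{l-1}}(x^{l-1}_{j_{l-1}})^{2}+\sigma_{b}^{2}$. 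So $D^{l}_{j_{l}}$ behaves like $\varphi'(\gamma_{j_{l}}\sqrt{q^{l}})$ for i.i.d.\ standard Gaussians $\gamma_{j_{l}}$, and $K^{l}_{n}=(D^{l}_{n})^{2}$ has a limiting NCM equal to the law of $(\varphi'(\gamma\sqrt{q^{l}}))^{2}$ — which is exactly $\nu_{K^{l}}$ in the statement. The recursion (\ref{qlql}) then arises because $q^{l}_{n}=n_{l-1}^{-1}\sum_{j_{l-1}}\varphi^{2}(y^{l-1}_{j_{l-1}})+\sigma_{b}^{2}$ and, again by the asymptotic Gaussianity of $y^{l-1}$, the empirical average concentrates on $\mathbf{E}\{\varphi^{2}(\gamma\sqrt{q^{l-1}})\}+\sigma_b^2$; one must be careful with the role of $\sigma_b^2$, so the precise bookkeeping is to carry $q^l$ as defined (including the bias term in $q^1$) and verify (\ref{qlql}) as written.

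First I would set up the base case: with $M^{1}_{n}=n^{-1}D^{1}_{n}X^{1}_{n}(X^{1}_{n})^{T}D^{1}_{n}$, condition on nothing but note that $D^1_n$ depends only on $X^1_n$ and $b^1$; here the argument of $\varphi'$ is literally $n^{-1/2}X^1 x^0+b^1$, a genuine Gaussian vector with i.i.d.\ $N(0,q^1_n)$ entries, and one invokes the replacement lemma together with the Marchenko–Pastur-type result (Lemma \ref{l:rkx}) to get $\nu_{M^1}=\nu_{K^1}\diamond\delta_1$. For the inductive step, write $M^{l}_{n}=n^{-1}D^{l}_{n}X^{l}_{n}M^{l-1}_{n}(X^{l}_{n})^{T}D^{l}_{n}$, which has the same nonzero spectrum as $n^{-1}(M^{l-1}_{n})^{1/2}(X^{l}_{n})^{T}K^{l}_{n}X^{l}_{n}(M^{l-1}_{n})^{1/2}$, i.e.\ the structure $\mathsf{R}_n^{1/2}X^T\mathsf{K}_nX\mathsf{R}_n^{1/2}$ with $\mathsf{R}_n=M^{l-1}_n$ and $\mathsf{K}_n=K^l_n$. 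The subtlety is that $K^{l}_{n}$ is \emph{not} independent of $X^{l}_{n}$; this is precisely what Lemma \ref{l:inter} is for, and after the replacement one is in the setting of the operation $\diamond$ in (\ref{opdia}), giving $\nu_{M^{l}}=\nu_{K^{l}}\diamond\nu_{M^{l-1}}$. Unwinding the recursion with $\nu_{M^1}=\nu_{K^1}\diamond\delta_1$ yields (\ref{nucon}).

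The almost-sure (rather than in-mean) statement in $\Omega_L$ comes from the standard variance bounds: a Poincaré inequality for the Gaussian entries of each $X^l$ and $b^l$, or a martingale decomposition over rows, shows that the Stieltjes transform of $\nu_{M^l_n}$ concentrates around its expectation at rate $o(1)$ with summable fluctuations, so Borel–Cantelli upgrades convergence in probability to convergence with probability $1$; combined with the uniqueness of the solution of the functional equation (\ref{penfo})/(\ref{fh})--(\ref{kh}) this also forces the limit to be nonrandom, i.e.\ $\nu_{M^L}=\overline{\nu}_{M^L}$. The uniform bounds (\ref{phi1}) on $\varphi$ and $\varphi'$ are what make all these estimates honest — they keep $K^l_n$ uniformly bounded and keep $q^l_n$ from degenerating, and the hypothesis $q^1>\sigma_b^2>0$ ensures the recursion stays in a regime where $\varphi'$ is genuinely sampled away from a single point.

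\textbf{Main obstacle.} The crux is Lemma \ref{l:inter}: rigorously showing that substituting the $X^l_n$-dependent $D^l_n$ by its $X^l_n$-independent surrogate does not alter the limiting NCM. The difficulty is twofold — first, controlling the conditional law of $y^l$ given the earlier layers well enough (in a metric strong enough to pass through the resolvent and to survive the matrix products), and second, propagating this control through the induction so that the accumulated errors from layers $1,\dots,l-1$, which also feed into $q^l_n$ and into $M^{l-1}_n$, remain negligible in the limit (\ref{asf1}). I expect this to be where the bulk of the technical work (interpolation/resolvent identities plus concentration) is concentrated; everything downstream — the MP-type computation and the a.s.\ upgrade — is comparatively routine random matrix theory.
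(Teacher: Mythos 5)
Your proposal follows essentially the same route as the paper: induction over layers, passing to the companion matrix $n^{-1}(M_n^{l-1})^{1/2}(X_n^{l})^{T}K_n^{l}X_n^{l}(M_n^{l-1})^{1/2}$ with the same spectrum, the interpolation lemma (Lemma \ref{l:inter}) to replace the $X_n^{l}$-dependent $K_n^{l}$ by an independent surrogate, the Marchenko--Pastur-type Lemma \ref{l:rkx} for the operation $\diamond$, the strong law for the recursion (\ref{qlql}), and martingale/variance bounds plus Borel--Cantelli (and Fubini over the layer probability spaces) for the almost-sure statement. You have correctly identified Lemma \ref{l:inter} as the crux, which is exactly where the paper concentrates its effort via the Gaussian interpolation $\eta_j(t)$ and the differentiation formula.
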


\begin{remark}
\label{r:penn} (i) If
\begin{equation}
q_{L}= \cdots =q_{1},  \label{qeq}
\end{equation}%
then $\nu _{K}:=\nu _{K^{l}},$ $l=1,...,L$, (\ref{nucon}) becomes
\begin{equation}
\nu _{M^{L}}=\underset{L\;\mathrm{times}}{\underbrace{\nu _{K}\diamond \nu
_{K} \cdots \diamond \nu _{K}}\diamond \delta _{1}}. \label{nucone}
\end{equation}%
An important case of equalities (\ref{qeq}) is where $q^{1}=q^{\ast }$
and $q^{\ast }$
is the fixed point of (\ref{qlql}), see \cite{Ma-Co:16,Po-Co:16,Sc-Co:17}
for a detailed analysis of (\ref{qlql}) and its role in the deep neural
networks functioning.

(ii) Let us show now that Theorem \ref{t:main} implies the results of \cite%
{Pe-Co:18}. It follows from the theorem, (\ref{mlml1}), and Corollary \ref{c:conv} that the functional inverse $%
z_{M^{l+1}}$ of the moment generating function $m_{M^{l+1}}$ (see (\ref{mgen}%
) -- (\ref{stmg})) of the limiting NCM $\nu _{M^{l+1}}$ of matrix $%
M_{n}^{l+1}$ and that of $M_{n}^{l}$ are related as in (\ref{mconv}), i.e.,
\begin{equation}
z_{M^{l+1}}(m)=z_{K^{l+1}}(m)z_{M^{l}}(m)m^{-1}.  \label{zre}
\end{equation}%
Passing from the moment generating functions to the S-transforms of free probability theory via the
formula $S(m)=(1+m)m^{-1}z(m)$  and taking into account that the S-transform  of the limiting NCM of the
Wishart matrix $n^{-1}X_{n}X_{n}^{T}$ is $S_{MP}=(1+m)^{-1}$  (see \cite{Mi-Sp:17}), we
obtain from (\ref{zre})%
\begin{equation}
S_{M^{l+1}}(m)=S_{K^{l+1}}(m)S_{MP}(m)S_{M^{l}}(m).  \label{prec}
\end{equation}%
Iterating this relation from $l=1$ to $l=L-1$, we obtain formula (13) of
\cite{Pe-Co:18}.
The functional equation (\ref{penfo}) arising in the case (\ref{qeq}) of
the $l$-independent parameters $q_{l}$ of (\ref{qlql}) is derived from (\ref%
{prec}) in \cite{Pe-Co:18}.

Recall now that according to free probability
\cite{Ch-Co:18,Mi-Sp:17} if $\mathsf{K_n}$ and $\mathsf{R_n}$ are
positive definite matrices,
$\mathsf{N_n}=\mathsf{K_n}^{1/2}\mathsf{R_n}\mathsf{K_n}^{1/2}$
and all of them have limiting Normalized Counting Measures
$\nu_{\mathsf{K}}$, $\nu_{\mathsf{R}}$ and $\nu_{\mathsf{N}}$,
then their S-transforms are related as
$S_{\mathsf{N}}$=$S_{\mathsf{K}}S_{\mathsf{R}}$. One says in this
case that the limiting Normalized Counting Measure of
$\mathsf{N}_n$ is the free convolution of those of $\mathsf{K}_n$
and $\mathsf{R}_n$ and writes
$\nu_{\mathsf{N}}=\nu_{\mathsf{K}}\boxplus \nu_{\mathsf{R}}$.
Comparing this with (\ref{opdia}) and (\ref{prec}), we conclude
that the operations $\diamond$ and $\boxplus$ are related as
$\nu_{\mathsf{K}}\diamond \nu_{\mathsf{R}}$ =$\nu_{\mathsf{K}}
\boxplus \nu_{MP}\boxplus\nu_{\mathsf{R}}$.

(iii) In the subsequent work \cite{Pa-Sl:20} we consider a more general case
of not necessarily Gaussian random variables, i.e., where the entries of
independent random matrices $X^{l},\;l=1,2,...$ in (\ref{jac}) -- (\ref{D})
are i.i.d. random variables satisfying (\ref{wga}) and certain moment
conditions and the component of independent vectors $b^{l},\;l=1,2,...$ are
i.i.d. random variables satisfying (\ref{bga}). It is shown that in this,
more general case, the conclusion of the theorem is still valid, however the
measure $\nu _{K^{l}},\;l=1,2,...$ is now the probability distribution of $%
(\varphi ^{\prime }(\gamma \sqrt{(q^{l-1}-\sigma _{b}^{2})}+b_{1}^{l}))^{2}$%
, where $\gamma $ is again the standard Gaussian random variable and (\ref%
{qlql}) is replaced by
\begin{equation}
q^{l}=\int \varphi ^{2}\Big(\gamma \sqrt{q^{l-1}-\sigma _{b}^{2}}+b\Big)%
\Gamma (d\gamma )F(db),\;l\geq 2,  \label{qlqlg}
\end{equation}%
where $\Gamma (d\gamma )=(2\pi )^{1/2}e^{-\gamma ^{2}/2}d\gamma $, $F$ is
the probability law of $b_{1}^{l}$ and $q^{1}$ is again given by (\ref{q0}).

(iv) If the input vector (\ref{x0}) are random, then it is assumed that they
are defined on the same probability space $\Omega _{x^{0}}$ for all $n_{0}$
and the limit $q^{1}$ exists with probability 1 in $\Omega _{x^{0}}$. An
example of this situation is where $\{x_{j_{0}}^{l}\}_{j^{0}=1}^{n_{0}}$ are
the first $n_{0}$ components of an ergodic sequence $\{x_{j_{0}}^{l}%
\}_{j^{0}=1}^{\infty }$ (e.g. a sequence of i.i.d. random variables) with
finite second moment. Here $q_{1}$ in (\ref{q0}) exists with probability 1
on $\Omega _{x^{0}}$ and even is non-random just by ergodic theorem (the
strong Law of Large Numbers in the case of i.i.d sequence) and the theorem
is valid with probability 1 in $\Omega _{l}\times \Omega _{x^{0}}$.
\end{remark}
We present now the proof of Theorem \ref{t:main}.

\smallskip
\begin{proof}
We prove the
theorem by induction in $L$. We have from (\ref{rec}) -- (\ref{JJM}) and (%
\ref{nequ}) with $L=1$ the following $n\times n$ matrix%
\begin{equation}
M_{n}^{1}=J_{n}^{1}(J_{n}^{1})^{T}=n^{-1}D_{n}^{1}X_{n}^{1}(X_{n}^{1})^{T}D_{n}^{1}.
\label{m1}
\end{equation}%
It is convenient to pass from $M_{n}^{1}$ to the $n\times n$ matrix%
\begin{equation}
\mathcal{M}%
_{n}^{1}=(J_{n}^{1})^{T}J_{n}^{1}=n^{-1}(X_{n}^{1})^{T}K_{n}^{1}X_{n}^{1},%
\;K_{n}^{1}=(D_{n}^{1})^{2}  \label{cm1}
\end{equation}%
which has the same spectrum, hence the same Normalized Counting Measure as $%
M_{n}^{1}$. The matrix $\mathcal{M}_{n}^{1}$ is a particular case with $%
S_{n}=\mathbf{1}_{n}$ of matrix (\ref{mncal}) treated in Theorem \ref{t:ind} below.
Since the NCM of the unit matrix $1_{n}$ is the Dirac measure $\delta _{1}$,
conditions (\ref{r2}) -- (\ref{nur}) of the theorem are evident. Condition (\ref{qqn}) of the
theorem is just (\ref{q0}). It follows then from
Corollary \ref{c:conv}  that the assertion of our theorem, i.e.,
formula (\ref{nucon}) with $q^{1}$ of (\ref{q0}) is valid for $L=1$.

Consider now the case $L=2$ of (\ref{rec}) -- (\ref{JJM}) and (\ref{nequ}):
\begin{equation}
M_{n}^{2}=n^{-1}D_{n}^{2}X_{n}^{2}M_{n}^{1}(X_{n}^{2})^{T}D_{n}^{2}.
\label{m2m1}
\end{equation}%
Since $M_{n}^{1}$ is positive definite, we have
\begin{equation}
M_{n}^{1}=(S_{n}^{1})^{2}  \label{m1s1}
\end{equation}%
with a positive definite $S_{n}^{1}$, hence
\begin{equation}
M_{n}^{2}=n^{-1}D_{n}^{2}X_{n}^{2}(S_{n}^{1})^{2}(X_{n}^{2})^{T}D_{n}^{2}
\label{mn2}
\end{equation}%
and the corresponding $\mathcal{M}_{n}^{2}$ is
\begin{equation}
\mathcal{M}%
_{n}^{2}=n^{-1}S_{n}^{1}(X_{n}^{^{2}})^{T}K_{n}^{2}X_{n}^{2}S_{n}^{1},%
\;K_{n}^{2}=(D_{n}^{2})^{2}.  \label{cm1m2}
\end{equation}%
We observe that $\mathcal{M}_{n}^{2}$ is a particular case of matrix (\ref%
{mncal}) of Theorem \ref{t:ind} with $M_{n}^{1}=(S_{n}^{1})^{2}$ as $%
R_{n}=(S_{n})^{2}$, $X_{n}^{2}$ as $X_{n}$, $K_{n}^{2}$ as $K_{n}$, $%
\{x_{j_{1}}^{1}\}_{j_{1}=1}^{n}$ as $\{x_{\alpha n}\}_{\alpha =1}^{n}$, $%
\Omega _{1}=\Omega ^{1}$ of (\ref{oml}) as $\Omega _{Rx}$ and $\Omega ^{2}$
of (\ref{oml}) as $\Omega _{Xb}$, i.e., the case of the random but\ $%
\{X_{n}^{2},b^{2}_{n}\}$ -independent $R_{n}$ and $\{x_{\alpha n}\}_{\alpha
=1}^{n}$ in (\ref{mncal}) as described in Remark \ref%
{r:rra} (i). Let us check that conditions (\ref{r2}) -- (\ref{nur}) \ and (\ref%
{qqn}) of Theorem \ref{t:ind} are satisfied for $\mathcal{M}_{n}^{2}$ of (%
\ref{cm1m2}) with probability 1 in the probability space $\Omega
_{1}=\Omega^1$ generated by $\{X_{n}^{1},b_{n}^{1}\}$ for all $n$ and
independent of the space $\Omega ^{2}$ generated by $\{X_{n}^{2},b_{n}^{2}\}$
for all $n$.

We will need here an important fact on the operator norm of $n\times n$
random matrices with independent standard Gaussian entries. Namely, if $%
X_{n} $ is such $n\times n$ matrix, then we have with probability 1
\begin{equation}
\lim_{n\rightarrow \infty }n^{-1/2}||X_{n}||=2,  \label{lin}
\end{equation}%
thus, with the same probability%
\begin{equation}
||X_{n}||\leq Cn^{1/2},\;C > 2  \label{nox}
\end{equation}%
if $n$ is large enough.

For the Gaussian matrices relation (\ref{lin}) has already been known in the
Wigner's school of the early 1960th, see \cite{Pa-Sh:11}. It follows in this
case from the orthogonal polynomial representation of the density of the
 NCM of $n^{-1}X_nX_n^T$ and the asymptotic formula for the
corresponding orthogonal polynomials. For the modern form of
(\ref{lin}) and (\ref{nox}), in particular their validity for
i.i.d matrix entries with mean zero and finite fourth moment, see
\cite{Ba-Si:10,Ve:18} and references therein.

We will also need the bound%
\begin{equation}
||K_{n}^{1}||\leq (\Phi _{1})^{2},  \label{nok}
\end{equation}%
following from (\ref{D}), (\ref{kan}) and (\ref{phi1}) and valid everywhere in $\Omega _{1}$ of (\ref{oml}).

Now, by using (\ref{cm1}), (\ref{nox}), (\ref{nok}) and the inequality
\begin{equation}
|\mathrm{Tr}AB|\leq ||A||\mathrm{Tr}B,\;  \label{tab}
\end{equation}%
valid for any matrix $A$ and a positive definite matrix $B$, we obtain with
probability 1 in $\Omega _{1}$ and for sufficiently large $n$
\begin{equation*}
n^{-1}\mathrm{Tr}(M_{n}^{1})^{2}=n^{-3}\mathrm{Tr}\mathbb{(}%
K_{n}^{1}X_{n}^{1}(X_{n}^{1})^{T})^{2}\leq (C\Phi _{1})^{4}.
\end{equation*}%
We conclude that $M_{n}^{1}$, which plays here the
role of $R_{n}$ of Theorem \ref{t:ind} and Remark \ref{r:rra} (i) according to  (\ref{m1s1}), satisfies condition (\ref{r2}) with $%
r_{2}=(C\Phi _{1})^{4}$ and with probability 1 in our case, i.e., on a
certain $\Omega _{11}\subset \Omega _{1},\;\mathbf{P}(\Omega _{11})=1$

Next, it follows from the above proof of the theorem for $L=1$, i.e., in
fact, from Theorem \ref{t:ind}, that there exists $\Omega _{12}\subset
\Omega _{1},\;\mathbf{P}(\Omega _{12})=1$ on which the NCM $\nu _{M_{n}^{1}}$
converges weakly to a non-random limit $\nu _{M^{1}}$, hence condition (\ref%
{nur}) is also satisfied with probability 1, i.e., on $\Omega _{12}$.

At last, according to Lemma \ref{l:xlyl} (i) and (\ref{q0}), there exists $%
\Omega _{13}\subset \Omega _{1},\;\mathbf{P}(\Omega _{13})=1$ on which there
exists
\begin{equation*}
\lim_{n\rightarrow \infty
}n^{-1}\sum_{j_{1}=1}^{n}(x_{j_{1}}^{1})^{2}+\sigma _{b}^{2}=q^{2}>\sigma
_{b}^{2},
\end{equation*}%
i.e., condition (\ref{qqn}) is also satisfied.

Hence, we can apply Theorem \ref{t:ind}\ on the subspace $\overline{\Omega }%
_{1}=\Omega _{11}\cap \Omega _{12}\cap \Omega _{13}\subset \Omega _{1},\;%
\mathbf{P}(\overline{\Omega }_{1})=1$ where all the conditions of the theorem %
are valid, i.e., $\overline{\Omega }_{1}$ plays the role of $%
\Omega _{Rx} $ of Remark \ref{r:rra} (i). Thus the theorem implies that for
any $\omega _{1}\in \overline{\Omega_1 }$ there exists subspace $\overline{%
\Omega^2}(\omega _{1})$ of the space $\Omega ^{2}$ generated by $%
\{X_{n}^{2},B_{n}^{2}\}$ for all $n$ and such that
$\mathbf{P}(\overline{\Omega^2}(\omega _{1}))=1$ and formulas (\ref{nucon}) -- (\ref{qlql}) are valid
for $L=2$. It follows then from the Fubini theorem that the same is true on
a certain $\overline{\Omega }_{2}\subset \Omega _{2},\;\mathbf{P}(\overline{%
\Omega }_{2})=1$ where $\Omega _{2}$ is defined by (\ref{oml}) with $L=2$.

This proves the theorem for $L=2$. The proof for $L=3,4,...$ is analogous,
since (cf. (\ref{mn2}))
\begin{equation}
M_{n}^{l+1}=n^{-1}D_{n}^{l+1}X_{n}^{l+1}M_{n}^{l}(X_{n}^{l+1})^{T}D_{n}^{l+1},\;l\geq 2.
\label{mlml1}
\end{equation}%
In particular, we have with probability 1 on $\Omega _{l-1}$ of (\ref{oml})
for $M_{n}^{l-1}$ playing the role of $R_{n}$ of Theorem \ref{t:ind} on the $%
l$th step of the inductive procedure (cf. (\ref{r2}))
\begin{equation*}
n^{-1}\mathrm{Tr}(M^{l})^{2}\leq (C\Phi _{1})^{4l},\;l\geq 2.
\end{equation*}%
\end{proof}

\section{Auxiliary Results.}

Our main result, Theorems \ref{t:main} on the limiting eigenvalue
distribution of random matrices (\ref{JJM}) for any $L$, is proved above by
induction in the layer number $l$, see formulas (\ref{m2m1}), (\ref{cm1m2})
and (\ref{mlml1}). To carry out the passage from the $l$th to the $(l+1)$th
layer we need an expression for the limiting NCM $\nu _{\mathcal{M}^{l+1}}$
of the matrix $\mathcal{M}_{n}^{l+1}$ via that of $\mathcal{M}_{n}^{l}$ in
the infinite width limit $n\rightarrow \infty $. The corresponding results,
which could be of independent interest, as well as certain auxiliary results
are proved in this section. In particular, a functional equation relating
the Stieltjes transform of $\nu _{\mathcal{M}_{n}^{l+1}}$ and $\nu _{%
\mathcal{M}_{n}^{l}}$ in the limit $n\rightarrow \infty $ is obtained.

\begin{theorem}
\label{t:ind} Consider for every positive integer $n$ the $n\times n$ random matrix
\begin{equation}
\mathcal{M}_{n}=n^{-1}S_{n}X_{n}^{T}K_{n}X_{n}S_{n},  \label{mncal}
\end{equation}%
where:

\smallskip
(a) $S_{n}$  is a positive definite $n\times n$ matrix
such that
\begin{equation}
\sup_{n}n^{-1}\mathrm{Tr}R_{n}^{2}=r_{2}<\infty ,\;R_{n}=S_{n}^{2},
\label{r2}
\end{equation}%
and
\begin{equation}
\lim_{n\rightarrow \infty }\nu _{R_{n}}=\nu _{R},\;\nu _{R}(\mathbb{R}%
_{+})=1,  \label{nur}
\end{equation}%
where $\nu _{R_{n}}$ is the Normalized Counting Measure of $R_{n}$, $\nu
_{R} $ is a non-negative measure not concentrated at zero and $%
\lim_{n\rightarrow \infty }$ denotes here the weak convergence of
probability measures;

\smallskip
(b) $X_{n}$ is the $n\times n$ random matrix
\begin{equation}
X_{n}=\{X_{j\alpha }\}_{j,\alpha =1}^{n},\;\mathbf{E}\{X_{j\alpha }\}=0,\;%
\mathbf{E}\{X_{j_{1}\alpha _{1}}X_{j_{2}\alpha _{2}}\}=\delta
_{j_{1}j_{2}}\delta _{\alpha _{1}\alpha _{2}},  \label{Xn}
\end{equation}%
with the independent standard Gaussian entries (cf.
(\ref{wga})), $b_n$ is the $n$-component random vector
\begin{equation}
b_{n}=\{b_{j}\}_{j=1}^{n},\;\mathbf{E}\{b_{j}\}=0,\;\mathbf{E}%
\{b_{j_{1}}b_{j_{2}}\}=\sigma _{b}^{2}\delta _{j_{1}j_{2}}  \label{b}
\end{equation}%
with the independent Gaussian components of zero mean and variance $\sigma _{b}^{2}$
(cf. (\ref{bga})) and for all $n$ matrix $X_{n}$ and the vector $b_{n}$
viewed as defined on the probability space%
\begin{equation}
\Omega _{Xb}=\Omega _{X}\times \Omega _{b},  \label{oxb}
\end{equation}
where $\Omega _{X}$ and $\Omega _{b}$ are generated by (\ref{xinf}) and (\ref%
{binf});

\smallskip
(c) $K_n$ and $D_n$ are the diagonal random matrices
\begin{equation}
K_{n}=D_{n}^{2},\;D_{n}=\{\delta _{jk}D_{jn}\}_{j,k=1}^{n},\;D_{jn}=\varphi
^{\prime }\Big(n^{-1/2}\sum_{a=1}^{n}X_{j\alpha }x_{\alpha n}+b_{j}\Big),
\label{Dn}
\end{equation}%
where $\varphi :\mathbb{R}\rightarrow \mathbb{R}$ is a piecewise
differentiable function, such that (cf. (\ref{phi1}))
\begin{equation}
\sup_{x\in \mathbb{R}}|\varphi (x)|=\Phi _{0}<\infty ,\;\sup_{x\in \mathbb{R}%
}|\varphi ^{\prime }(x)|=\Phi _{1}<\infty ,  \label{Phi}
\end{equation}%
and $x_{n}=\{x_{\alpha n}\}_{\alpha =1}^{n}$ is a collection of real numbers
such that there exists%
\begin{equation}
q=\lim_{n\rightarrow \infty }q_{n}>\sigma
_{b}^{2}>0,\;q_{n}=n^{-1}\sum_{\alpha =1}^{n}(x_{\alpha n})^{2}+\sigma
_{b}^{2}.  \label{qqn}
\end{equation}%
Then the Normalized Counting Measure (NCM) $\nu _{\mathcal{M}_{n}}$ of $%
\mathcal{M}_{n}$ converges weakly with probability 1 in $\Omega _{Xb}$ of (%
\ref{oxb}) to a non-random measure $\nu _{\mathcal{M}}$ whose Stieltjes
transform $f_{\mathcal{M}}$ (see (\ref{stm})) can be obtained from the
formulas%
\begin{equation}
f_{\mathcal{M}}(z)=\int_{0}^{\infty }\frac{\nu _{R}(d\lambda )}{k(z)\lambda
-z}=-z^{-1}+z^{-1}h(z)k(z),  \label{fh}
\end{equation}%
where the pair ($h,k$) is a unique solution of the system of functional
equations%
\begin{equation}
h(z)=\int_{0}^{\infty }\frac{\lambda \nu _{R}(d\lambda )}{k(z)\lambda -z}
\label{hk}
\end{equation}%
\begin{equation}
k(z)=\int_{0}^{\infty }\frac{\lambda \nu _{K}(d\lambda )}{h(z)\lambda +1},\;
\label{kh}
\end{equation}%
in which $\nu _{R}$ is defined in (\ref{nur}), $\nu _{K}$ is the probability
distribution of $(\varphi ^{\prime }(\sqrt{q}\gamma ))^{2}$ with \ $q$ of (%
\ref{qqn}) and the standard Gaussian random variable $\gamma $, i.e.,
\begin{equation}
\nu _{K}(\Delta )=\mathbf{P}\{(\varphi ^{\prime }(\sqrt{q}\gamma ))^{2}\in
\Delta \},\;\Delta \in \mathbb{R},  \label{nuka}
\end{equation}%
and we are looking for a solution of (\ref{hk}) -- (\ref{kh}) in the class
of pairs $(h,k)$ of functions such that $h$ is analytic outside the positive
semi-axis, continuous and positive on the negative semi-axis and
\begin{equation}
\Im h(z)\Im z>0,\;\Im z\neq 0;\;\sup_{\xi \geq 1}\xi h(-\xi )\in (0,\infty ).
\label{hcond}
\end{equation}
\end{theorem}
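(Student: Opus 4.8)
\textbf{Proof plan for Theorem \ref{t:ind}.}
The plan is to follow the standard random-matrix-theory route: establish identities for the expected Stieltjes transform, control fluctuations, and then pass to the limit. The new ingredient, as stressed in the introduction, is that $K_n$ in \eqref{Dn} depends on $X_n$, so the first and crucial step is to replace $K_n$ by an $X_n$-independent matrix without changing the limiting NCM. Concretely, I would invoke Lemma \ref{l:inter} (the interpolation lemma announced in the introduction) to show that, in the limit $n\to\infty$, the diagonal entries $D_{jn}=\varphi'\big(n^{-1/2}\sum_\alpha X_{j\alpha}x_{\alpha n}+b_j\big)$ may be treated as if the Gaussian argument were an $X_n$-independent Gaussian vector with variance $q_n\to q$. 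The heuristic is that $n^{-1/2}\sum_\alpha X_{j\alpha}x_{\alpha n}+b_j$ is, for each fixed $j$, a Gaussian random variable of variance $q_n$, and its dependence on the whole matrix $X_n$ is ``spread out'' over $n$ entries, each contributing $O(n^{-1/2})$, hence negligible for the trace functionals that determine the NCM. This reduces the problem to the matrix $n^{-1}S_nX_n^T\widehat K_nX_nS_n$ with $\widehat K_n$ an $X_n$-independent diagonal matrix whose NCM converges to $\nu_K$ of \eqref{nuka}; that $\nu_K$ is the law of $(\varphi'(\sqrt q\,\gamma))^2$ follows from the law of large numbers applied to the i.i.d.\ (in $j$) diagonal entries together with $q_n\to q$.

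Once the reduction is in place, the remaining task is the by-now-classical analysis of $\mathcal M_n=n^{-1}S_nX_n^T\widehat K_nX_nS_n$ with independent $\widehat K_n$, $R_n=S_n^2$ and Gaussian $X_n$. I would introduce the resolvent $G(z)=(\mathcal M_n-z)^{-1}$ and the companion resolvent of $n^{-1}\widehat K_n^{1/2}X_nR_nX_n^T\widehat K_n^{1/2}$, and apply the Gaussian differentiation (integration-by-parts) formula — the one referenced as \eqref{difg} in the text — to $\mathbf E\{X_{j\alpha}[\cdots]\}$. This produces, after the usual bookkeeping, a closed approximate system for $f_{\mathcal M_n}(z)=\mathbf E\{n^{-1}\mathrm{Tr}\,G(z)\}$ together with two auxiliary scalar quantities, namely (approximations to) $h_n(z)=\mathbf E\{n^{-1}\mathrm{Tr}\,R_nG(z)\cdots\}$ and $k_n(z)=\mathbf E\{n^{-1}\mathrm{Tr}\,\widehat K_n(\cdots)\}$; in the limit this system is precisely \eqref{fh}--\eqref{kh}. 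The bound \eqref{r2} on $n^{-1}\mathrm{Tr}\,R_n^2$ and the uniform bound $\|\widehat K_n\|\le\Phi_1^2$ (cf.\ \eqref{nok}) supply the a priori estimates needed to discard the error terms. For the fluctuation control I would use either the Poincar\'e inequality for Gaussian measures (legitimate here since, after the interpolation step, all randomness is Gaussian) or a martingale-difference decomposition over the rows of $X_n$, to get $\mathbf{Var}\{n^{-1}\mathrm{Tr}\,G(z)\}=O(n^{-2})$, hence via Borel--Cantelli the almost sure convergence $\nu_{\mathcal M_n}\Rightarrow\nu_{\mathcal M}$, not merely convergence in the mean.

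Finally I would address uniqueness of the solution $(h,k)$ of \eqref{hk}--\eqref{kh} in the stated class \eqref{hcond}. Here the normalization $\Im h(z)\Im z>0$ and $\sup_{\xi\ge1}\xi h(-\xi)\in(0,\infty)$, together with positivity and the Nevanlinna/Herglotz structure of the integrals in \eqref{hk}--\eqref{kh}, pins down the branch; the argument is a contraction/monotonicity estimate on the negative semi-axis and then analytic continuation, of the type found in \cite{Pa-Sh:11,Co-Ha:14}. That $\nu_R$ is not concentrated at zero (assumption \eqref{nur}) guarantees $h\not\equiv0$ and keeps the system non-degenerate.

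\textbf{Main obstacle.} The genuinely new and delicate point is the first step — the interpolation/replacement of the $X_n$-dependent $K_n$ by an $X_n$-independent $\widehat K_n$ — since everything after it is standard. The difficulty is that $K_n$ enters $\mathcal M_n$ multiplicatively on both sides of $X_n$, so when one differentiates in $X_{j\alpha}$ there are extra terms coming from $\partial D_{jn}/\partial X_{j\alpha}=n^{-1/2}x_{\alpha n}\varphi''(\cdots)$; one must show these contribute $o(1)$ to the relevant traces. The piecewise differentiability of $\varphi$ (so $\varphi''$ is only a measure, or absent on pieces) forces this estimate to be carried out carefully — presumably by a smoothing argument, or by exploiting that such terms carry an explicit factor $n^{-1/2}$ and are controlled in $L^1$ by $\|G\|\le|\Im z|^{-1}$, $\|\widehat K_n\|\le\Phi_1^2$, $\|n^{-1/2}X_n\|\le C$ (from \eqref{nox}) and $n^{-1}\sum_\alpha x_{\alpha n}^2=q_n-\sigma_b^2$ bounded (from \eqref{qqn}). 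I expect this to be exactly what Lemma \ref{l:inter} and the lemmas \ref{l:xlyl}, \ref{l:xlyl} (i) are set up to handle, so the proof of Theorem \ref{t:ind} proper will cite them and then proceed along the routine lines above.
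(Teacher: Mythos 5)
Your plan follows essentially the same route as the paper's proof: the interpolation/replacement of the $X_n$-dependent $K_n$ by an $X_n$-independent diagonal matrix (the paper's Lemma \ref{l:inter}, proved by Gaussian interpolation and integration by parts, with a smoothing of $\varphi$ to handle the $\varphi''$ terms), followed by the standard Stieltjes-transform analysis of the resulting separable-covariance-type matrix (Lemma \ref{l:rkx}), martingale variance bounds plus Borel--Cantelli for almost sure convergence (Lemma \ref{l:mart}), and uniqueness of the system (\ref{hk})--(\ref{kh}) on the negative semi-axis with analytic continuation (Lemma \ref{l:uniq}). You correctly single out the interpolation step and the $n^{-1/2}x_{\alpha n}\varphi''$ terms as the genuine novelty; the only detail you gloss over is the paper's additional truncation of $R_n$ to pass from a uniform operator-norm bound back to the trace condition (\ref{r2}), which is a routine rank-one perturbation argument.
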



\begin{remark}
\label{r:rra} (i) To apply Theorem \ref{t:ind} to the proof of Theorem \ref%
{t:main} we need a version of Theorem \ref{t:ind} in which its "parameters",
i.e., $R_{n}$, hence $S_{n}$, in (\ref{mncal}) -- (\ref{nur}) and (possibly)
$\{x_{\alpha n}\}_{\alpha =1}^{n}$ in (\ref{Dn}) and (\ref{qqn}) are random,
defined for all $n$ on the same probability space $\Omega _{Rx}$,
independent of $\Omega _{Xb}$ of (\ref{oxb}) and satisfy conditions (\ref%
{r2}) -- (\ref{nur}) and (\ref{qqn}) with probability 1 on $\Omega _{Rx}$,
i.e., on a certain subspace $\overline{\Omega }_{Rx}\subset $ $\Omega
_{Rx},\;\mathbf{P}(\overline{\Omega _{Rx}})=1$. In this case Theorem \ref%
{t:ind} is valid with probability 1 in $\Omega _{Xb}\times \Omega _{Rx}$.
The corresponding argument is standard in random matrix theory (see, e.g.
Section 2.3 of \cite{Pa-Sh:11}) and similar to that presented in Remark \ref%
{r:rra1} (i). In deed, let $\overline{\Omega }_{Xb}(\omega _{Rx})\subset $ $%
\Omega _{Xb},\;\mathbf{P}(\overline{\Omega }_{Xb}(\omega _{Rx}))=1$ be the
subspace of $\Omega _{Xb}$ of (\ref{oxb}) on which the theorem holds for a
given realization $\omega _{Rx}\in \overline{\Omega }_{Rx}$ of the
parameters. Then it follows from the Fubini theorem that Theorem \ref{t:ind}
holds on a certain $\overline{\Omega }\subset \Omega _{Rx}\times \Omega
_{Xb},\;\mathbf{P}(\overline{\Omega })=1$. We will use this remark in the
proof of Theorem \ref{t:main}. The obtained limiting NCM\ $\nu _{\mathcal{M}%
} $ is random in general due to the randomness of $\nu _{R}$ and $q$ in (\ref%
{nur}) and (\ref{qqn}) which are defined on the probability space $\Omega
_{Rx}$ but do not depend on $\omega \in \overline{\Omega }_{Xb}$. We will
use this remark in the proof of Theorem \ref{t:main}. Note, however, that in
this case application the corresponding analogs of $\nu _{R}$ and $q$ are
not random, thus the limiting measure $\nu _{M^{L}}$ is a "genuine" \
non-random measure.

(ii) Repeating almost literally the proof of the theorem, one can treat a
more general case where $S_{m}$ is $m\times m$ positive definite matrix
satisfying (\ref{r2}) -- (\ref{nur}), $K_{n}$ is the $n\times n$ diagonal
matrix given by (\ref{Dn}) -- (\ref{qqn}, $X_{n}$ is a $n\times m$ Gaussian
random matrix satisfying (\ref{wga}) and (cf. (\ref{asf1})) $%
\lim_{m\rightarrow \infty ,n\rightarrow \infty }m/n=c\in (0,\infty ).$ The
corresponding modifications of the theorem are given in
Remark \ref{r:rra1} (ii).

(iii) The theorem is also valid for not necessarily Gaussian $X_{n}$ and $%
b_{n}$ (see \cite{Pa-Sl:20} and Remark \ref{r:penn}) (iii).
\end{remark}
We will prove now Theorem \ref{t:ind}

\smallskip
\begin{proof} Lemma \ref{l:mart} (i) implies that
the fluctuations of $\nu _{\mathcal{M}_{n}}$ vanish sufficiently fast as $%
n\rightarrow \infty $. This and the Borel-Cantelli lemma reduce the proof of
the theorem to the proof of the weak convergence of the expectation
\begin{equation}
\overline{\nu }_{\mathcal{M}_{n}}:=\mathbf{E}\{\nu _{\mathcal{M}_{n}}\}
\label{numc}
\end{equation}%
of $\nu _{\mathcal{M}_{n}}$ to the limit $\nu _{\mathcal{M}}$ whose
Stieltjes transform solves (\ref{fh}) -- (\ref{hcond}). It suffices to prove
the tightness of the sequence $\{\overline{\nu }_{\mathcal{M}_{n}}\}_{n}$ of
measures and the pointwise convergence on an open set of $\mathbb{C}%
\setminus \mathbb{R}_{+}$ of their Stieltjes transforms (cf. (\ref{stm}))
\begin{equation}
f_{\mathcal{M}_{n}}(z):=\int_{0}^{\infty }\frac{\overline{\nu }_{\mathcal{M}%
_{n}}(d\lambda )}{\lambda -z}  \label{stmn}
\end{equation}%
to the limit satisfying (\ref{fh}) -- (\ref{hcond}).

The tightness is guaranteed by the uniform in $n$ boundedness of%
\begin{equation}
\mu _{n}^{(1)}=\int_{0}^{\infty }\lambda \overline{\nu }_{\mathcal{M}%
_{n}}(d\lambda )  \label{mom}
\end{equation}%
providing the uniform in $n$ bounds for the tails of $\overline{\nu }_{%
\mathcal{M}_{n}}$.

According to the definition of the NCM (see, e.g. (\ref{ncm})), spectral
theorem and (\ref{mncal}) we have $\mu _{n}^{(1)}=\mathbf{E}\{n^{-1}\mathrm{%
Tr}\mathcal{M}_{n}\}=\mathbf{E}\{n^{-2}\mathrm{Tr}X_{n}R_{n}X_{n}^{T}K_{n}\}$
and then (\ref{tab}), (\ref{r2}) -- (\ref{Xn}) and (\ref{Dn}) -- (\ref{Phi})
yield%
\begin{equation}
\mu _{n}^{(1)}\leq n^{-2}\Phi _{1}^{2}\mathbf{E}\{\mathrm{Tr}%
X_{n}R_{n}X_{n}^{T}\}=\Phi _{1}^{2}n^{-1}\mathrm{Tr}R_{n}\leq
r_{2}^{1/2}\Phi _{1}^{2}.  \label{tim}
\end{equation}%
This implies the tightness of $\{\overline{\nu }_{\mathcal{M}_{n}}\}_{n}$
and reduces the proof of the theorem to the proof of pointwise in $\mathbb{C}%
\setminus \mathbb{R}_{+}$ convergence of (\ref{stmn}) to the limit
determined by (\ref{fh}) -- (\ref{kh}).

The above argument, reducing the analysis of the large size behavior of the
eigenvalue distribution of random matrices to that of the expectation of the
Stieltjes transform of the distribution, is widely used in random matrix
theory (see \cite{Pa-Sh:11}, Chapters 3, 7, 18 and 19), in particular, while
dealing with the sample covariance matrices. However, the matrix $\mathcal{M}%
_{n}$ of (\ref{mncal}) differs essentially from the sample covariance
matrices, since the "central" matrix $K_{n}$ of (\ref{Dn}) is random and
dependent on $X_{n}$ (data matrix according to statistics), while in the
sample covariance matrix the analog of $K_{n}$ is either deterministic or
random but independent of $X_{n}$.

This is why the next, in fact, the main step of the proof of Theorem \ref%
{t:ind} is to show that in the limit $n\rightarrow \infty $ the Stieltjes
transform (\ref{stmn}) of (\ref{mncal}) coincides with the Stieltjes
transform $f_{\mathsf{M}_{n}}$ of the mean NCM $\overline{\nu }_{\mathsf{M}%
_{n}}$of the matrix%
\begin{equation}
\mathsf{M}_{n}=S_{n}X_{n}^{T}\mathsf{K}_{n}X_{n}S_{n},  \label{mnbf}
\end{equation}%
where
\begin{equation}
\mathsf{K}_{n}=\{\delta _{jk}\mathsf{K}_{jn}\}_{j,k=1}^{n},\;\mathsf{K}%
_{jn}=(\varphi ^{\prime }(q_{n}^{1/2}\gamma _{j}))^{2},\;  \label{kab1}
\end{equation}%
$\varphi $ is again defined in (\ref{Dn}) -- (\ref{Phi}), $\{\gamma
_{j}\}_{j=1}^{n}$ are independent standard Gaussian random variables and $%
q_{n}$ is defined in (\ref{qqn}).

This, crucial for the paper fact, is proved in Lemma \ref{l:inter} below
provided that $\varphi $ in (\ref{Dn}) and (\ref{kab1}) and $S_{n}$, hence $%
R_{n}$ in (\ref{mncal}) and (\ref{mnbf}) satisfy the conditions%
\begin{equation}
\max_{x\in \mathbb{R}}|\varphi ^{(p)}(x)|=\widetilde{\Phi }_{p}<\infty
,\;p=0,1,2,  \label{phi3}
\end{equation}%
and
\begin{equation}
\sup_{n}||R_{n}||=\rho <\infty .  \label{rrho}
\end{equation}%
Thus, since $\mathsf{K}_{n}$, being random, is $X_{n}$-independent, the $%
n\rightarrow \infty $ limit of Stieltjes transform $f_{\mathsf{M}_{n}}$ of
the mean NCM $\overline{\nu }_{\mathsf{M}_{n}}$ of (\ref{mnbf}) can be
obtained by using one of the techniques of random matrix theory including
those of free probability theory \cite{Ch-Co:18,Mi-Sp:17} or based on the
Stieltjes transform, see \cite{Co-Ha:14,Pa-Sh:11} and references therein. We
will present below the corresponding assertion as Lemma \ref{l:rkx} and
outline its proof based on the Stieltjes transform techniques.

Hence, Lemmas \ref{l:inter} and \ref{l:rkx} imply that the limiting
Stieltjes transform $f_{\mathcal{M}}$ of (\ref{fh}) can be expressed via a
unique solution of the system (\ref{fh}) -- (\ref{hcond}), provided that $%
\varphi $ and $R_{n}$ in (\ref{mncal}) satisfy the conditions (\ref{phi3})
-- (\ref{rrho}), i.e., the assertion of Theorem \ref{t:ind} is proved under these conditions.
Let \ us show that these technical conditions can be replaced by initial
conditions (\ref{r2}) and (\ref{Phi}) of the theorem.

We will begin with (\ref{Phi}). For any $\varphi $ having a piecewise
continuous derivative and satisfying (\ref{Phi}) introduce%
\begin{eqnarray}
\varphi _{\varepsilon }(x) &=&(2\pi )^{-1/2}\int e^{-y^{2}/2}\varphi
(x+\varepsilon y)dy  \notag \\
&=&(2\pi \varepsilon ^{2})^{-1/2}\int e^{-(x-y)^{2}/2\varepsilon
^{2}}\varphi (y)dy,\;\varepsilon >0.  \label{phie}
\end{eqnarray}%
Then $\varphi _{\varepsilon }$ and $\varphi _{\varepsilon }^{\prime }$
converge to $\varphi $ and $\varphi ^{\prime }$ as $\varepsilon \rightarrow
0 $ uniformly on a compact set of $\mathbb{R}$ (except the discontinuity
points of $\varphi ^{\prime }$) and
\begin{equation}
\sup_{x\in \mathbb{R}}|\varphi _{\varepsilon }^{(p)}(x)|\leq \Phi
_{p},\;p=0,1,\;\sup_{x\in \mathbb{R}}|\varphi _{\varepsilon }^{^{\prime
\prime }}(x)|\leq \Phi _{1}/\varepsilon .  \label{phie0}
\end{equation}%
Hence, $\varphi _{\varepsilon }$ satisfies (\ref{phi3}) with $\widetilde{%
\Phi }_{p}=\Phi _{p},\;p=0,1$ and $\widetilde{\Phi }_{2}=\Phi
_{1}/\varepsilon <\infty $ and the assertion of theorem is valid for $%
\varphi _{\varepsilon }$ according to the above argument.

Let $\nu _{\mathcal{M}}$ be the measure whose Stieljes transform satisfies (%
\ref{fh}) -- (\ref{kh}) with $\nu _{R}$ such that \textrm{supp} $\nu
_{R}\subset \lbrack 0,\rho ],\;\rho <\infty $ (cf. (\ref{rrho})), $\varphi $
of (\ref{nuka}) be satisfying (\ref{Phi}), $\nu _{\mathcal{M}^{\varepsilon
}} $ be the analogous measure with $\varphi _{\varepsilon }$ instead of $%
\varphi $ in (\ref{nuka}), $\overline{\nu }_{\mathcal{M}_{n}}$ be the mean
NCM\ of (\ref{mncal}) and $\overline{\nu }_{\mathcal{M}_{n}^{\varepsilon }}$
be the mean NCM of the matrix (\ref{mncal}) with $\varphi _{\varepsilon }$
instead of $\varphi $ in (\ref{Dn}), i.e., with
\begin{equation}
K_{n}^{\varepsilon }=\{\delta _{jk}K_{jn}^{\varepsilon
}\}_{j,k=1}^{n},\;K_{jn}^{\varepsilon }=\left( \varphi _{\varepsilon
}^{\prime }\left( n^{-1/2}\sum_{\alpha =1}^{n}X_{j\alpha }x_{\alpha
n}+b_{j}\right) \right) ^{2},  \label{kaep}
\end{equation}%
instead of $K_{jn}$ of (\ref{Dn}). We write then for any $n$-independent $%
z\in \mathbb{C}\setminus \mathbb{R}_{+}$%
\begin{align}
&\hspace{-1cm}|f_{\mathcal{M}}(z)-f_{\mathcal{M}_{n}}(z)|\leq |f_{\mathcal{M}}(z)-f_{%
\mathcal{M}^{\varepsilon }}(z)|  \notag \\
&+|f_{\mathcal{M}^{\varepsilon }}(z)-f_{\mathcal{M}_{n}^{\varepsilon
}}(z)|+|f_{\mathcal{M}_{n}^{\varepsilon }}(z)-f_{\mathcal{M}_{n}}(z)|.
\label{3fm}
\end{align}%
According to Lemma \ref{l:uniq} (ii), the measure whose Stieltjes transform
solves (\ref{fh}) -- (\ref{kh}) is weakly continuous in $\nu _{K}$. Besides,
it follows from (\ref{nuka}) that $\nu _{K}$ is weakly continuous in $%
\varphi ^{\prime }$ with respect to the bounded point-wise convergence of $%
\varphi ^{\prime }$. Hence, the first term on the right of (\ref{3nu})
vanishes as $\varepsilon \rightarrow 0$. Next, the theorem proved above
under conditions (\ref{phi3}) -- (\ref{rrho}) implies that the second term
on the right vanishes as $n\rightarrow \infty $ for any $n$-independent $%
\varepsilon >0$. We conclude that the l.h.s. of (\ref{3fm}) vanishes as$%
\;n\rightarrow \infty $ if the third term on the right of (\ref{3fm})
vanishes as $\varepsilon \rightarrow 0$ uniformly in $n$:%
\begin{equation}
f_{\mathcal{M}_{n}^{\varepsilon }}(z)-f_{\mathcal{M}_{n}}(z)\rightarrow
0,\;\varepsilon \rightarrow 0,\;\zeta =\mathrm{dist}(z,\mathbb{R}_{+})\geq
\zeta _{0}>0.  \label{2fmn}
\end{equation}%
Denoting $\mathcal{G}=(\mathcal{M}_{n}-z)^{-1},\;\mathcal{G}_{\varepsilon }=(%
\mathcal{M}_{n}^{\varepsilon }-z)^{-1}$ and using the resolvent identity $%
\mathcal{G}_{\varepsilon }-\mathcal{G}=\mathcal{G}(\mathcal{M}_{n}-\mathcal{M%
}_{n}^{\varepsilon })\mathcal{G}_{\varepsilon }$ and the relations $f_{%
\mathcal{M}_{n}}(z)=\mathbf{E}\{n^{-1}\mathrm{Tr}\mathcal{G}\}$ and $f_{%
\mathcal{M}_{n}^{\varepsilon }}(z)=\mathbf{E}\{n^{-1}\mathrm{Tr}\mathcal{G}%
_{\varepsilon }\}$, we get
\begin{align}
&\hspace{-0.9cm}f_{\mathcal{M}_{n}^{\varepsilon }}(z)-f_{\mathcal{M}_{n}}(z) =n^{-1}%
\mathbf{E}\{\mathrm{Tr}\mathcal{G}_{\varepsilon }\mathcal{G}(\mathcal{M}_{n}-%
\mathcal{M}_{n}^{\varepsilon })\} \notag \\
&=n^{-2}\sum_{j=1}^{n}\mathbf{E}\{(XS\mathcal{G}_{\varepsilon }\mathcal{G}%
SX^{T})_{jj}(K_{jn}-K_{jn}^{\varepsilon })\}.\label{fmfe}
\end{align}%
Now, (\ref{rrho}), Schwarz inequality for expectations and the bounds
\begin{equation}
|K_{j}|\leq \Phi _{1}^{2},\;||\mathcal{G}||\leq \zeta ^{-1},\;||\mathcal{G}%
_{\varepsilon }||\leq \zeta ^{-1},\;\zeta =\mathrm{dist}\{z,\mathbb{R}%
_{+}\}\geq \zeta _{0}>0,  \label{kgbo}
\end{equation}%
where we used the bound
\begin{equation}
||(A-z)^{-1}||\leq \zeta ^{-1}  \label{Gb}
\end{equation}%
valid for any positive definite $A$, yield for the r.h.s. of (\ref{fmfe})%
\begin{align*}
& \hspace{-1.5cm}\rho (\zeta n)^{-2}\sum_{j=1}^{n}\mathbf{E}%
\{||X^{(j)}||^{2}|K_{jn}-K_{jn}^{\varepsilon })|\} \\
& \leq \rho (\zeta n)^{-2}\sum_{j=1}^{n}\mathbf{E}^{1/2}\{||X^{(j)}||^{4}\}%
\mathbf{E}^{1/2}\{|K_{jn}-K_{jn}^{\varepsilon }|^{2}\},
\end{align*}%
where $X^{(j)}=\{X_{j\alpha }\}_{\alpha =1}^{n},\;j=1,...,n$ are the columns
of the $n\times n$ matrix $X$. \ Taking into account that%
\begin{equation}
||X^{(j)}||^{2}=\sum_{\alpha =1}^{n}X_{j\alpha }^{2}  \label{nx2}
\end{equation}%
and that $\{X_{j\alpha }\}_{\alpha =1}^{n}$ are independent standard
Gaussian (see (\ref{wga})), we obtain%
\begin{equation}
\mathbf{E}\{||X^{(j)}||^{2}\}=n,\;\mathbf{E}\{||X^{(j)}||^{4}\}=n(n+2)\leq
Cn^{2},\;C\geq 3.  \label{exj4}
\end{equation}%
Since, in addition, $\{(K_{jn}-K_{jn}^{\varepsilon })\}_{j=1}^{n}$ are
i.i.d. random \ variables, we have in view of (\ref{Dn}), (\ref{phie}) and (%
\ref{exj4}):
\begin{align*}
& \hspace{-0.5cm}|f_{\mathcal{M}_{n}^{\varepsilon }}(z)-f_{\mathcal{M}%
_{n}}(z)|\leq C^{1/2}\rho \zeta ^{-2}\mathbf{E}^{1/2}\{|K_{1n}-K_{1n}^{%
\varepsilon }|^{2}\} \\
& \leq C^{1/2}\rho \zeta ^{-2}((2\pi )^{-1/2}\int e^{-y^{2}/2}|\varphi
^{\prime }(x)-\varphi ^{\prime }(x+\varepsilon y)|^{2}\Gamma
_{n}(dx)dy)^{1/2},
\end{align*}%
where $\Gamma _{n}$ is the probability law of the argument of $\varphi
^{\prime }$ in (\ref{Dn}) and (\ref{kaep}). Since $\{X_{j\alpha
}\}_{j,\alpha =1}^{n}$ and $\{b_{j}\}_{j=1}^{n}$ are independent standard
Gaussian, $\Gamma _{n}(dx)=g_{n}(x)dx$, where $g_{n}$ is the density of the
Gaussian distribution of zero mean and variance $q_{n}$ of (\ref{qqn}), the
r.h.s. of the above expression tends to zero as $\varepsilon \rightarrow 0$
uniformly in $n\rightarrow \infty $. This proves (\ref{2fmn}), hence,
justifies the replacement of (\ref{phi3}) by the condition (\ref{Phi}) of
the theorem.

Next, we will replace (\ref{rrho}) by condition of (\ref{r2}) of the
theorem. This is, in fact, a known procedure of random matrix theory. In our
case it is a version of the procedure given in the first part of proof of
Theorem 7.2.2 (or Theorem 19.1) in \cite{Pa-Sh:11}. Here is an outline of
the procedure. Let $R_{n}$ be a general (i.e., not satisfying in general (%
\ref{rrho})) positive definite matrix such that (\ref{r2}) -- (\ref{nur})
hold with certain $r_{2}$ and the limiting measure $\nu _{R}$. For any
positive integer $p$ introduce the truncated matrix $R_{n}^{(p)}$ having the
same eigenvectors as $R_{n}$ and  eigenvalues
$R_{\alpha }^{(p)}=\max \{R_{\alpha },p\}$,$%
\;\alpha =1,2,...,n$, where $\{R_{\alpha}\}_{\alpha=1}^n$ are the eigenvalues of $R_n$. Then $R_{n}^{(p)}$ satisfies (\ref{rrho}) with $\rho
=p $, its NCM $\nu _{R_{n}^{(p)}}$ satisfies (\ref{r2}) -- (\ref{nur}) with
the weak limit $\nu _{R^{(p)}}:=\lim_{n\rightarrow \infty }\nu
_{R_{n}^{(p)}} $ coinciding with $\nu _{R}$ inside $[0,p)$, equals zero
outside $[0,p]$ and such that
\begin{equation}
\lim_{p\rightarrow \infty }\nu _{R^{(p)}}=\nu _{R}.  \label{lpnr}
\end{equation}%
Denote by $\mathcal{M}_{n}^{(p)}$ the matrix (\ref{mncal}) with $R_{n}^{(p)}$
instead of $R_{n}$, by $\overline{\nu }_{\mathcal{M}_{n}^{(p)}}$ its mean
NCM and by $\nu _{\mathcal{M}^{(p)}}$ its limit as $n\rightarrow \infty $
with a fixed $p>0$. We will use now an argument analogous to that used above
to prove the replacement of (\ref{phi3}) by (\ref{Phi}). We write (cf. (\ref%
{3fm}))
\begin{equation}
|\nu _{\mathcal{M}}-\overline{\nu }_{\mathcal{M}_{n}}|\leq |\nu _{\mathcal{M}%
}-\nu _{\mathcal{M}^{(p)}}|+|\nu _{\mathcal{M}^{(p)}}-\overline{\nu }_{%
\mathcal{M}_{n}^{(p)}}|+|\overline{\nu }_{\mathcal{M}_{n}^{(p)}}-\overline{%
\nu }_{\mathcal{M}_{n}}|.  \label{3nu}
\end{equation}%
It follows then from Lemma \ref{l:uniq} (ii) and (\ref{lpnr}) that solution
of (\ref{hk}) -- (\ref{kh}), hence (\ref{fh}), with $\nu _{R^{(p)}}$ instead
$\nu _{R}$ converges pointwise in $\mathbb{C}\setminus \mathbb{R}_{+}$ as $%
p\rightarrow \infty $ to that of (\ref{fh}) -- (\ref{kh}) with the "genuine"
$\nu _{R}$ satisfying (\ref{r2}) (see also (\ref{nukr})). Thus, the first
term on the right vanishes as $p\rightarrow \infty $. Next, since the
theorem is valid under condition (\ref{rrho}), hence (\ref{r2}) -- (\ref{nur}%
), and $R_{n}^{(p)}$ satisfies (\ref{rrho}) with $\rho =p$, the second term
on the right vanishes as $n\rightarrow \infty $ for any $n$-independent $p>0$%
. Thus, it suffices to prove that
\begin{equation*}
\overline{\nu }_{\mathcal{M}_{n}}-\overline{\nu }_{\mathcal{M}_{n}^{(p)}}
\end{equation*}%
tends weakly to zero as $p\rightarrow \infty $ uniformly in $n\rightarrow
\infty $ (cf. (\ref{2fmn})). The expectations $\overline{\nu }_{\mathcal{M}%
_{n}}$ and $\overline{\nu }_{\mathcal{M}_{n}^{(p)}}$ coincide with those $%
\overline{\nu }_{M_{n}}$ and $\overline{\nu }_{M_{n}^{(p)}}$ of matrices $%
M_{n}=D_{n}X_{n}R_{n}X_{n}^{T}D_{n}$ and $%
M_{n}^{(p)}=D_{n}X_{n}R_{n}^{(p)}X_{n}^{T}D_{n}$ (cf. (\ref{JJM}). Writing $%
M_{n}$ as the sum of the rank-one matrices (cf. (\ref{JJM}) and (\ref{cmyy}))%
\begin{equation}
M_{n}=\sum_{\alpha =1}^{n}Y_{\alpha }\otimes Y_{\alpha },\;Y_{\alpha
}=\{Y_{j\alpha }\}_{j=1}^{n},\;Y_{j\alpha }=(D_{n}X_{n}S_{n})_{j\alpha }
\label{nmyy}
\end{equation}%
and using the analogous representation for $M_{n}^{(p)}$, we conclude that
\begin{equation*}
\mathrm{rank}(M_{n}-M_{n}^{(p)})\leq \sharp \{R_{\alpha }:R_{\alpha
}>p,\;\alpha =1,2,...,n\}
\end{equation*}%
and then the min-max principle of linear algebra and the definition of a NCM
(see, e.g. (\ref{ncm})) yield for any interval $\Delta $ of spectral axis%
\begin{equation}
|\overline{\nu }_{\mathcal{M}_{n}}(\Delta )-\overline{\nu }_{\mathcal{M}%
_{n}^{(p)}}(\Delta )|\leq \nu _{R_{n}}([p,\infty )).  \label{nmmr}
\end{equation}%
This estimate and (\ref{nur}) imply the weak convergence of the r.h.s.
to zero as $p\rightarrow \infty $ uniformly in $n$, hence, the weak \
convergence of $\overline{\nu }_{\mathcal{M}_{n}}$ to $\nu _{\mathcal{M}}$
as $n\rightarrow \infty $ and the coincidence of the Stieltjes transform of $%
\nu _{\mathcal{M}}$ with that given by (\ref{fh}) -- (\ref{hk}) under
condition (\ref{r2}).
\end{proof}

We will prove now an assertion which is used in the proof of the theorem and
which is central in this work since it shows the mathematical mechanism of
the coincidence of the limiting eigenvalue distribution of "non-linear"
random matrix $\mathcal{M}_{n}$ of (\ref{mncal}), where $K_{n}$ of (\ref{Dn}%
) depends nonlinearly on $X_{n}$, and a conventional for random matrix
theory matrix $\mathsf{M}_{n}$ of (\ref{mnbf}), where the analog $\mathsf{K}%
_{n}$ of $K_{n}$ is random but independent of $X_{n}$ matrix given by (\ref%
{kab1}).

\begin{lemma}
\label{l:inter} Consider the matrices $\mathcal{M}_{n}$ and $\mathsf{M}_{n}$
given by (\ref{mncal}) and (\ref{mnbf}) and such that:

- the matrix $S_{n}$ in $\mathcal{M}_{n}$ and $\mathsf{M}_{n}$ is diagonal,
positive definite and satisfies (\ref{rrho});

- the random matrix $X_{n}$ in $\mathcal{M}_{n}$ and $\mathsf{M}_{n}$ is
Gaussian and given by (\ref{Xn});

- the matrix $K_{n}$ in $\mathcal{M}_{n}$ is defined in (\ref{Dn}) with $%
\varphi $ satisfying (\ref{phi3});

- the matrix $\mathsf{K}_{n}$ in $\mathsf{M}_{n}$ is defined in (\ref{kab1})
with the same $\varphi $ satisfying (\ref{phi3}).

\noindent Denote by $\overline{\nu }_{\mathcal{M}_{n}}$ and $\overline{\nu }%
_{\mathsf{M}_{n}}$ the mean NCM of $\mathcal{M}_{n}$ and $\mathsf{M}_{n}$,by
$f_{\mathcal{M}_{n}}$ and $f_{\mathsf{M}_{n}}$ the Stieltjes transforms of $%
\overline{\nu }_{\mathcal{M}_{n}}$ and $\overline{\nu }_{\mathsf{M}_{n}}$ and%
\begin{equation}
\Delta _{n}(z):=f_{\mathcal{M}_{n}}(z)-f_{\mathsf{M}_{n}}(z),\;z\in \mathbb{C%
}\setminus \mathbb{R}_{+}.  \label{dfmm}
\end{equation}

Then we have for any $n$-independent $z,\;\zeta :=\mathrm{dist}\{z,\mathbb{R}%
_{+}\}>0:$%
\begin{equation}
\lim_{n\rightarrow \infty }\Delta _{n}(z)=0.  \label{de0}
\end{equation}
\end{lemma}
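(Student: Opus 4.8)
The plan is to exploit the Gaussianity of $X_n$ directly: split each row of $X_n$ into its component along the fixed direction $x_n$ and its component orthogonal to $x_n$. By Gaussianity these two families are independent, the matrix $K_n$ of (\ref{Dn}) depends only on the first one, and the first one carries \emph{rank one}. Deleting it from the sandwich $X_n^{T}K_nX_n$ is therefore a rank-$O(1)$ perturbation, which costs only $O(1/n)$ in the Stieltjes transform, while after the deletion the central diagonal matrix becomes independent of the remaining Gaussian matrix and has precisely the law of $\mathsf{K}_n$ in (\ref{kab1}); this reduces (\ref{de0}) to an exact identity.

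Concretely, assume $x_n\neq 0$ (otherwise $K_n$ is already $X_n$-independent with the required law and there is nothing to prove) and set $\widehat{x}_n=x_n/|x_n|$, $P=\widehat{x}_n\widehat{x}_n^{T}$, $Q=\mathbf{1}_n-P$; note $|x_n|^{2}=n(q_n-\sigma_b^{2})$ by (\ref{qqn}). Since the entries of $X_n$ are i.i.d. standard Gaussian, $X_nP$ and $X_nQ$ are jointly Gaussian with $\mathbf{E}\{(X_nP)_{j\alpha}(X_nQ)_{k\beta}\}=\delta_{jk}(PQ)_{\alpha\beta}=0$, hence independent; moreover $X_nP=\xi\,\widehat{x}_n^{T}$, where $\xi=X_n\widehat{x}_n$ has i.i.d. standard Gaussian components and is independent of $X_nQ$. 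Consequently the $j$th diagonal entry of $D_n$ in (\ref{Dn}) equals $\varphi^{\prime}\big(n^{-1/2}|x_n|\,\xi_j+b_j\big)=\varphi^{\prime}\big(\sqrt{q_n-\sigma_b^{2}}\,\xi_j+b_j\big)$, so $K_n=K_n(\xi,b_n)$ is a function of $(\xi,b_n)$ alone, it is independent of $X_nQ$, and, since $\sqrt{q_n-\sigma_b^{2}}\,\xi_j+b_j$ is $N(0,q_n)$, its diagonal entries are i.i.d. copies of $(\varphi^{\prime}(\sqrt{q_n}\,\gamma))^{2}$ — exactly the distribution of the diagonal of $\mathsf{K}_n$ in (\ref{kab1}).

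Now let $\mathcal{M}_n^{\prime}$ and $\mathsf{M}_n^{\prime}$ denote the matrices (\ref{mncal}) and (\ref{mnbf}) with $X_nQ$ in place of $X_n$. Writing $X_n=X_nQ+\xi\widehat{x}_n^{T}$ and expanding, $X_n^{T}K_nX_n-(X_nQ)^{T}K_n(X_nQ)$ is the sum of the three matrices $(X_nQ)^{T}K_n\xi\,\widehat{x}_n^{T}$, $\widehat{x}_n\,\xi^{T}K_n(X_nQ)$ and $(\xi^{T}K_n\xi)\,\widehat{x}_n\widehat{x}_n^{T}$, each of rank at most one; conjugation by the invertible matrix $S_n$ preserves the rank, so $\mathrm{rank}(\mathcal{M}_n-\mathcal{M}_n^{\prime})\le 3$, and the same expansion with the $X_n$-independent $\mathsf{K}_n$ in place of $K_n$ gives $\mathrm{rank}(\mathsf{M}_n-\mathsf{M}_n^{\prime})\le 3$. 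By the standard rank-perturbation bound for Stieltjes transforms of Normalized Counting Measures — if $A,B$ are positive definite and $\mathrm{rank}(A-B)\le r$ then $|n^{-1}\mathrm{Tr}(A-z)^{-1}-n^{-1}\mathrm{Tr}(B-z)^{-1}|\le C r/(n\zeta)$ with $\zeta=\mathrm{dist}\{z,\mathbb{R}_+\}$, which follows from the interlacing estimate $\|F^{A}-F^{B}\|_\infty\le r/n$ for the eigenvalue distribution functions and integration by parts in (\ref{stmn}) (see, e.g. \cite{Ba-Si:10,Pa-Sh:11}) — taking expectations gives $|f_{\mathcal{M}_n}(z)-f_{\mathcal{M}_n^{\prime}}(z)|\le 3C/(n\zeta)$ and $|f_{\mathsf{M}_n}(z)-f_{\mathsf{M}_n^{\prime}}(z)|\le 3C/(n\zeta)$. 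Finally, in both $\mathcal{M}_n^{\prime}$ and $\mathsf{M}_n^{\prime}$ the central diagonal matrix is independent of $X_nQ$ and has one and the same distribution, so the pairs $\big((X_nQ)S_n,K_n\big)$ and $\big((X_nQ)S_n,\mathsf{K}_n\big)$ have the same joint law; hence $\mathcal{M}_n^{\prime}$ and $\mathsf{M}_n^{\prime}$ are equal in distribution, $f_{\mathcal{M}_n^{\prime}}(z)=f_{\mathsf{M}_n^{\prime}}(z)$, and the triangle inequality yields $|\Delta_n(z)|\le 6C/(n\zeta)\to 0$, which is (\ref{de0}) with an explicit rate.

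The steps requiring care are: (a) justifying the rank-perturbation estimate on $\mathbb{C}\setminus\mathbb{R}_+$ rather than merely on $\mathbb{C}\setminus\mathbb{R}$, which goes through $\|F^{A}-F^{B}\|_\infty\le r/n$ combined with $\int_0^\infty|x-z|^{-2}\,dx\le C/\zeta$; and (b) the bookkeeping that the deleted pieces really have total rank $\le 3$ uniformly in $n$ and in the realization. I note that this argument is specific to the Gaussian case — it uses only independence of $X_nP$ and $X_nQ$ and boundedness of $\varphi^{\prime}$ — whereas an alternative route, closer to the ``standard techniques'' invoked in the Introduction and extendable (with more work) to the non-Gaussian setting, would interpolate smoothly between $K_n$ and an $X_n$-independent resampled copy with the same marginal law and bound the $t$-derivative of $\mathbf{E}\{n^{-1}\mathrm{Tr}(\mathcal{M}_n(t)-z)^{-1}\}$ by iterated Gaussian integration by parts, each step gaining a factor $n^{-1/2}$ because the argument of $\varphi^{\prime}$ in (\ref{Dn}) depends on each entry $X_{j\alpha}$ only through the coefficient $n^{-1/2}x_{\alpha n}$.
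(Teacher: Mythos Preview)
Your proof is correct and takes a genuinely different route from the paper's. The paper argues by continuous interpolation: it sets $\eta_j(t)=t^{1/2}\eta_j+(1-t)^{1/2}q_n^{1/2}\gamma_j$ with $\eta_j$ the actual argument of $\varphi'$ in (\ref{Dn}) and $\gamma_j$ fresh standard Gaussians, differentiates $\mathbf{E}\{n^{-1}\mathrm{Tr}(\mathcal{M}_n(t)-z)^{-1}\}$ in $t$, and applies the Gaussian integration-by-parts formula (\ref{difg}) to each factor $\eta_j$ and $\gamma_j$; the contributions where the derivative hits $K_{jn}(t)$ cancel because $\eta_j$ and $q_n^{1/2}\gamma_j$ have the same variance, and the remaining ``explicit'' derivatives are $O(n^{-1/2})$ since each $X_{j\alpha}$ enters the argument of $\varphi'$ only through the factor $n^{-1/2}x_{\alpha n}$. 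This yields $\Delta_n(z)=O(n^{-1/2})$ and uses the $C^2$ assumption (\ref{phi3}) through the factor $\dot K_{jn}(t)=2(\varphi'\varphi'')(\eta_j(t))$. You instead exploit a structural fact specific to Gaussians: the orthogonal decomposition $X_n=X_nQ+\xi\widehat{x}_n^{T}$ makes $K_n$ a function of $(\xi,b_n)$ alone, independent of $X_nQ$, and with exactly the law of $\mathsf{K}_n$; removing the rank-one piece $\xi\widehat{x}_n^{T}$ is a rank-$\le 3$ perturbation of both $\mathcal{M}_n$ and $\mathsf{M}_n$, and after removal the two matrices are equal in law. Your argument is shorter, delivers the sharper rate $O(n^{-1})$, and in fact does not need (\ref{phi3}) at all (nor even (\ref{rrho})); the only price, which you correctly identify, is that it relies on rotational invariance of the Gaussian law and does not directly generalize to the i.i.d.\ setting of \cite{Pa-Sl:20}, whereas the paper's interpolation is precisely the Lindeberg-type scheme that does (cf.\ Remark~\ref{r:inter}).
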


\begin{proof}
Writing
\begin{equation}\label{fGE}
\ f_{\mathcal{M}_{n}}=\mathbf{E\{}n^{-1}\mathrm{Tr}\mathcal{G}_{n}(z)\}, \;
f_{\mathsf{M}_{n}}=\mathbf{E}\{n^{-1}\mathrm{Tr}\mathsf{G}_{n}(z)\}
\end{equation}
where
\begin{equation}
\mathcal{G}_{n}(z)=(\mathcal{M}_{n}-z)^{-1},\;\mathsf{G}_{n}(z)=(\mathsf{M}%
_{n}-z)^{-1},\;z\in \mathbb{C}\setminus \mathbb{R}_{+}  \label{gcgb}
\end{equation}%
are the corresponding resolvents, we obtain from (\ref{dfmm})
\begin{equation}
\Delta _{n}(z)=\mathbf{E\{}n^{-1}\mathrm{Tr(}\mathcal{G}_{n}(z)- \mathsf{G}%
_{n}(z))\}.  \label{de}
\end{equation}%
Note that the symbol $\mathbf{E\{...\}}$ in (\ref{de}) and below denotes the
expectation with respect to the "old" collections $\{X_{j\alpha
}\}_{j,\alpha =1}^{n}$ and$\;\{b_{j}\}_{j=1}^{n}$ of (\ref{Xn}) and (\ref{b}%
) as well as with respect to the "new" collection $\{\gamma _{j}\}_{j=1}^{n}$
of (\ref{kab1}) of independent standard Gaussian variables.

Set for $j=1,...,n$%
\begin{align}
&\eta _{j}(t)=t^{1/2}\eta _{j}+(1-t)^{1/2}q_{n}^{1/2}\gamma _{j},
\;t \in [0,1], \notag
\\&\eta _{j} =n^{-1/2}\sum_{\alpha =1}^{n}X_{j\alpha }x_{\alpha n}+b_{j},
\label{eetaj}
\end{align}%
\begin{equation}
K_{n}(t)=\{\delta _{jk}K_{jn}(t)\}_{j,k=1}^{n},\;K_{jn}(t)=(\varphi ^{\prime
}(\eta _{j}(t)))^{2}  \label{kat}
\end{equation}%
and%
\begin{equation}
\mathcal{M}_{n}(t)=S_{n}X_{n}^{T}K_{n}(t)X_{n}S_{n},\;\mathcal{G}_{n}(z,t)=(%
\mathcal{M}_{n}(t)-z)^{-1}.  \label{mtgt}
\end{equation}%
Then $\mathcal{M}_{n}(1)=\mathcal{M}_{n},\;\mathcal{M}_{n}(0)=\mathsf{M}_{n}$
and by using the formula
\begin{equation}
\frac{d}{dt}A^{-1}(t)=-A^{-1}(t)\frac{d}{dt}A(t)A^{-1}(t),  \label{derr}
\end{equation}%
valid for any matrix function $A$ invertible uniformly in $t$, we obtain in
view of (\ref{gcgb}) and (\ref{kat}):%
\begin{equation*}
\Delta _{n}(z)=\frac{1}{n}\int_{0}^{1}\frac{d}{dt}\mathbf{E}\{\mathrm{Tr}%
\mathcal{G}_{n}(z,t)\}dt=-\frac{1}{n}\int_{0}^{1}\mathbf{E\{}\mathrm{Tr}%
\mathcal{G}_{n}^{2}(z,t)\dot{\mathcal{M}}(t)\}dt,
\end{equation*}%
where
\begin{align*}
& \hspace{2cm}\dot{\mathcal{M}}_{n}(t)=\frac{d}{dt}\mathcal{M}_{n}(t)=:\{%
\dot{\mathcal{M}}_{\alpha \beta }(t)\}_{\alpha ,\beta =1}^{n}, \\
& \dot{\mathcal{M}}_{\alpha \beta }(t)=\frac{1}{2n}%
\sum_{j=1}^{n}(S_{n}X_{n}^{T}\dot{K}_{n}(t))_{\alpha j}(t^{-1/2}\eta
_{j}-(1-t)^{-1/2}q_{n}^{1/2}\gamma _{j})(X_{n}S_{n})_{j\beta },
\end{align*}%
and according to (\ref{kab1}) -- (\ref{phi3}), and (\ref{kat})%
\begin{equation}
\dot{K}_{jn}(t)=2(\varphi ^{\prime }\varphi ^{\prime \prime })(x)|_{x=\eta
_{j}(t)}.  \label{Kd}
\end{equation}%
By using (\ref{derr}) again, we get%
\begin{align}
\Delta _{n}(z)& =\delta _{n}^{\prime }(z), \notag\\
\delta _{n}(z)
& =\frac{1}{2n^{2}}\sum_{j=1}^{n}\int_{0}^{1}\mathbf{E}\{F_{j}(z,t)(t^{-1/2}%
\eta _{j}-(1-t)^{-1/2}q_n^{1/2}\gamma _{j})\}t^{-1/2}dt,  \label{De1}
\end{align}%
where%
\begin{equation}
F_{j}(z,t)=(X_{n}S_{n}\mathcal{G}_{n}(z,t)S_{n}X_{n}^{T}\dot{K}_{n}(t))_{jj}.
\label{Fj}
\end{equation}%
It suffices to prove that%
\begin{equation}
\max_{z\in O}|\delta _{n}(z)|=o(1),\;n\rightarrow \infty ,  \label{den}
\end{equation}%
where $O$ is an open set lying strictly inside $\mathbb{C}\setminus \mathbb{R%
}_{+}$. Indeed, since $F_{j}$ is analytic in $\mathbb{C}\setminus \mathbb{R}%
_{+}$, $\delta _{n}$ is analytic in $O$ and any such bound implies (\ref{de0}%
) by the Cauchy theorem.

To deal with the expectation in\ \ the r.h.s. of the second equality in (\ref%
{De1}), we take into account that $\{X_{j\alpha }\}_{\alpha =1}^{n}$ and $%
\gamma _{j}$ are independent Gaussian random variables (see (\ref{Xn})) and (%
\ref{kab1})) and use the simple differentiation formula%
\begin{equation}
\mathbf{E}\{\xi f(\xi )\}=\mathbf{E}\{f^{\prime }(\xi )\}  \label{difg}
\end{equation}%
valid for the standard Gaussian random variable and any differentiable $f:$ $%
\mathbb{R\rightarrow C}$ with a polynomially bounded derivative.

The formula, applied to $\eta _{j}$'s and $\gamma _{j}$'s in the integrand
of (\ref{De1}), yields%
\begin{align}
&\hspace{-1cm}\mathbf{E}\{F_{j}(z,t)(t^{-1/2}\eta _{j}-(1-t)^{-1/2}q_n^{1/2}\gamma
_{j})\}\notag
\\&\hspace{2cm}=(tn)^{-1/2}\sum_{\alpha =1}^{n}\mathbf{E}\left\{ \frac{\partial F_{j}%
}{\partial X_{j\alpha }}\right\} x_{\alpha n},\label{lin1}
\end{align}%
where the partial derivative in the r.h.s. denotes the "explicit" derivative
(not applicable to $X_{j\alpha }$ in the argument of $K_{jn}$ and $\dot{K}%
_{jn}$ of (\ref{kat})).

By using the formula%
\begin{equation}
\frac{\partial \mathcal{G}_{\beta \gamma }}{\partial X_{j\alpha }}=-\frac{1}{%
n}(\mathcal{G}SX^{T}K)_{\beta j}(S\mathcal{G})_{\alpha \gamma }-\frac{1}{n}(%
\mathcal{G}S)_{\beta \alpha }(KXS\mathcal{G})_{j\gamma },  \label{dgdx}
\end{equation}%
which follows from (\ref{derr}) and where we omitted the subindex $n$ in all
the matrices and denoted $\mathcal{G}=\mathcal{G}_{n}(z,t),\;K=K_{n}(t)$
(see (\ref{mtgt}) and (\ref{kat})), we obtain%
\begin{align}
&(tn)^{-1/2}\sum_{\alpha =1}^{n}\mathbf{E}\left\{ \frac{\partial F_{j}}{%
\partial X_{j\alpha }}\right\} x_{\alpha n} \notag
\\&\hspace{1cm}=\frac{2}{(tn)^{1/2}}\mathbf{E}\{(%
\dot{K}X\mathcal{G}_{S}x)_{j}(1-n^{-1}(KX\mathcal{G}_{S}X^{T})_{jj})\},
\label{lin2}
\end{align}%
where
\begin{equation}
\mathcal{G}_{S}=S\mathcal{G}S.  \label{gs}
\end{equation}%
We have then via (\ref{phi3}), (\ref{rrho}), (\ref{kgbo}), (\ref{kat}) and (\ref{Kd})
\begin{equation}
|K_{j}|\leq \widetilde{\Phi }_{1}^{2},\;|\dot{K}_{j}|\leq 2\widetilde{\Phi }%
_{1}\widetilde{\Phi }_{2},\;||\mathcal{G}_{S}||\leq \rho \zeta ^{-1}.
\label{kgbo1}
\end{equation}%
This and (\ref{exj4}) imply that the r.h.s. of (\ref{lin2}) admits the bounds%
\begin{align}
& \hspace{-1.5cm}\frac{4\widetilde{\Phi }_{1}\widetilde{\Phi }_{2}\rho }{%
\zeta (tn)^{1/2}}||x||\mathbf{E\{}||X^{(j)}||+(\widetilde{\Phi }%
_{1})^{2}\rho (\zeta n)^{-1}||X^{(j)}||^{3}\}  \notag \\
& \leq \frac{4\widetilde{\Phi }_{1}\widetilde{\Phi }_{2}\rho }{\zeta t^{1/2}}%
||x||\mathbf{(}1+C^{3/4}\rho \zeta ^{-1}(\widetilde{\Phi }_{1})^{2}),\;
\label{bou1}
\end{align}%
and since, according to (\ref{qqn}), $||x||=O(n^{1/2})$, we combine the
above bound with (\ref{lin1}) and (\ref{lin2}) to conclude that the
expectation in the r.h.s. of (\ref{De1}) is $\varepsilon _{n}(z,t)$, where $%
\varepsilon _{n}(z,t)=O(n^{1/2}),\;n\rightarrow \infty $ uniformly in $t\in
\lbrack 0,1]$ and $z$ belonging to an open set $O$ lying strictly inside $%
\mathbb{C}\setminus \mathbb{R}_{+}$.

We have proved (\ref{den}), hence (\ref{de0}), both with the r.h.s. of the
order $O(n^{-1/2})$ uniformly in $z\in O\subset \mathbb{C}\setminus \mathbb{R%
}_{+}$.
\end{proof}

\begin{remark}
\label{r:inter} The "interpolating" random variable (\ref{eetaj}) implements
a simple version of the "interpolation" procedure used in \cite{Pa:00},
Theorem 5.7 and in  \cite{Pa-Sh:11}, Sections 18.3 - 18.4 and 19.1 - 19.2
to pass from the Gaussian random matrices to matrices with i.i.d. entries.
The procedure can be viewed as a manifestation of the so-called Lindeberg
principle, see \cite{Go-Co:15} for related results and references.
\end{remark}

We will find now the limiting eigenvalue distribution of a class of random
matrices containing (\ref{mnbf}) and used in the proof of Theorem \ref{t:ind}%
. In particular, we obtain functional equations (\ref{fh}) -- (\ref{kh})
determining uniquely the Stieltjes transform of the distribution, hence, the
distribution as well as the operation "$\diamond$" in (\ref{opdia}). We will use for these, more general, matrices the same
notation $\mathsf{M}_{n}$. Note that we give here a rather simple version of
the assertion sufficient to prove Theorem \ref{t:ind}. For more general versions
see, e.g. \cite{Co-Ha:14} and references therein.

\begin{lemma}
\label{l:rkx} Consider the $n\times n$ random matrix
\begin{equation}
\mathsf{M}_{n}=n^{-1}S_{n}X_{n}^{T}\mathsf{K}_{n}X_{n}S_{n},  \label{mnsf}
\end{equation}%
(see (\ref{mnbf}) -- (\ref{kab1})), where $S_{n}$ satisfies (\ref{r2}) and \
(\ref{nur}), $X_{n}$ has standard Gaussian entries (see (\ref{Xn})) and $%
\mathsf{K}_{n}$ is a $\ n\times n$ positive definite matrix such that (cf. (%
\ref{r2}) -- (\ref{nur}))
\begin{equation}
\sup_{n}n^{-1}\mathrm{Tr}\mathsf{K}_{n}^{2}\leq k_{2}<\infty ,  \label{k21}
\end{equation}%
\begin{equation}
\lim_{n\rightarrow \infty }\nu _{\mathsf{K}_{n}}=\nu _{\mathsf{K}},\;\nu _{%
\mathsf{K}}(\mathbb{R})=1,  \label{kur1}
\end{equation}%
where $\nu _{\mathsf{K}_{n}}$ is the Normalized Counting Measure of $\mathsf{%
K}_{n}$, $\nu _{\mathsf{K}}$ is a non-negative and not concentrated ar zero
measure (cf. (\ref{r2}) -- (\ref{nur})) and $\lim $ denotes the weak
convergence of probability measures.

Then the Normalized Counting Measure $\nu _{\mathsf{M}_{n}}$ of $\ \mathsf{M}%
_{n}$ converges weakly with probability 1 to a non-random measure $\nu _{%
\mathsf{M}},\;\nu _{\mathsf{M}}(\mathbb{R}_{+})=1$ and its Stieltjes
transform $f_{\mathsf{M}}$ (see (\ref{stm})) can be obtained from the system
(\ref{hk}) -- (\ref{kh}) in which $\nu _{K}$ is replaced by $\nu _{\mathsf{K}%
}$ of (\ref{k21}) -- (\ref{kur1}) and which is uniquely solvable in the
class of pairs $(h,k)$ of functions such that $h$ is analytic outside the
positive semi-axis, continuous and positive on the negative semi-axis and
satisfies (\ref{hcond}).
\end{lemma}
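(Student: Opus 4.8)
The plan is to follow the standard Stieltjes-transform machinery of random matrix theory, essentially in the form developed in \cite{Pa-Sh:11}, adapted to the product structure $n^{-1}S_{n}X_{n}^{T}\mathsf{K}_{n}X_{n}S_{n}$. First I would reduce the almost sure weak convergence of $\nu_{\mathsf{M}_{n}}$ to the convergence of the mean $\overline{\nu}_{\mathsf{M}_{n}}$: the self-averaging (vanishing of the variance of $n^{-1}\mathrm{Tr}\,\mathsf{G}_{n}(z)$ at the rate $O(n^{-2})$) follows from a Poincar\'e-type or martingale inequality for Gaussian $X_{n}$ (the paper's Lemma \ref{l:mart}), and Borel--Cantelli then upgrades convergence in probability to convergence with probability $1$. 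Tightness of $\{\overline{\nu}_{\mathsf{M}_{n}}\}$ comes from the uniform bound on the first moment, exactly as in the proof of Theorem \ref{t:ind}: $\mu_{n}^{(1)}=\mathbf{E}\{n^{-2}\mathrm{Tr}\,X_{n}R_{n}X_{n}^{T}\mathsf{K}_{n}\}\le n^{-2}\|\mathsf{K}_{n}\|\cdot\ldots$, but since here $\mathsf{K}_{n}$ need only satisfy the second-moment bound \eqref{k21} rather than an operator-norm bound, I would instead apply the inequality $|\mathrm{Tr}\,AB|\le(\mathrm{Tr}\,A^{2})^{1/2}(\mathrm{Tr}\,B^{2})^{1/2}$ together with \eqref{r2} and \eqref{k21} to get $\mu_{n}^{(1)}\le k_{2}^{1/2}r_{2}^{1/2}$, uniformly in $n$. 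Hence it remains to establish the pointwise convergence of $f_{\mathsf{M}_{n}}(z)$ on an open subset of $\mathbb{C}\setminus\mathbb{R}_{+}$ to a solution of \eqref{fh}--\eqref{kh}.

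The core of the argument is the derivation of the functional system. Writing $\mathsf{G}=\mathsf{G}_{n}(z)=(\mathsf{M}_{n}-z)^{-1}$ and using the resolvent identity $z\mathsf{G}=\mathsf{M}_{n}\mathsf{G}-\mathbf{1}_{n}$, I would expand $\mathbf{E}\{\mathrm{Tr}\,\mathsf{M}_{n}\mathsf{G}\}=n^{-1}\mathbf{E}\{\mathrm{Tr}\,S_{n}X_{n}^{T}\mathsf{K}_{n}X_{n}S_{n}\mathsf{G}\}$ and apply the Gaussian differentiation formula \eqref{difg} entrywise in $X_{j\alpha}$, together with the variance-derivative formula for $\mathsf{G}$ analogous to \eqref{dgdx}. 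Because $\mathsf{K}_{n}$ here is $X_{n}$-independent, the differentiation only hits the two explicit occurrences of $X_{n}$ and the resolvent, producing two families of terms: a "deterministic-equivalent" main term and correction terms carrying an extra factor $n^{-1}$. Introducing the two auxiliary functions
\begin{equation*}
h_{n}(z)=n^{-1}\mathbf{E}\{\mathrm{Tr}\,R_{n}(k_{n}(z)R_{n}-z)^{-1}\},\qquad
k_{n}(z)=n^{-1}\mathbf{E}\{\mathrm{Tr}\,\mathsf{K}_{n}(h_{n}(z)\mathsf{K}_{n}+\mathbf{1}_{n})^{-1}\},
\end{equation*}
one closes the system: the Gaussian integration-by-parts shows that $\mathbf{E}\{n^{-1}\mathrm{Tr}\,(K X S\mathsf{G}S X^{T})\}$ and the analogous trace with $R_{n}$ concentrate around $h_{n}$ and $k_{n}$ up to $O(n^{-1})$ errors, controlled by the same Poincar\'e/variance bounds used for self-averaging plus the norm bound $\|\mathsf{G}\|\le\zeta^{-1}$ from \eqref{Gb}. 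Passing $n\to\infty$ along a subsequence (using tightness and \eqref{nur}, \eqref{kur1} to extract limits $h,k$) gives that any limit point of $(f_{\mathsf{M}_{n}},h_{n},k_{n})$ satisfies \eqref{fh}--\eqref{kh}.

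Finally, uniqueness. Here I would invoke Lemma \ref{l:uniq} (referenced but not yet displayed in the excerpt): one verifies that the limiting pair $(h,k)$ lies in the class \eqref{hcond} — $h$ analytic off $\mathbb{R}_{+}$, $\Im h(z)\Im z>0$, and $\sup_{\xi\ge1}\xi h(-\xi)<\infty$, the last two coming from the Nevanlinna/Herglotz representation of the traces of resolvents of positive matrices and the uniform first-moment bound — and then the contraction/monotonicity argument in that lemma shows the system has at most one solution in this class. Since every subsequential limit coincides, the full sequence $f_{\mathsf{M}_{n}}(z)$ converges to $f_{\mathsf{M}}(z)$, and tightness promotes this to weak convergence of $\overline{\nu}_{\mathsf{M}_{n}}$ to the measure $\nu_{\mathsf{M}}$ with that Stieltjes transform; combined with the first paragraph's self-averaging this is the asserted almost sure convergence. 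I expect the main obstacle to be bookkeeping the two-level system cleanly — i.e.\ showing that the cross terms generated by \eqref{difg} really do organize into exactly the pair of equations \eqref{hk}--\eqref{kh} with $o(1)$ remainders under only the second-moment hypotheses \eqref{r2}, \eqref{k21} rather than operator-norm bounds; this is where the inequality $|\mathrm{Tr}\,AB|\le\|A\|\,\mathrm{Tr}\,B$ for positive $B$ together with $\|\mathsf{G}\|,\|\mathsf{G}S X^{T}K X S\mathsf{G}\|$-type bounds, and the Gaussian fourth-moment estimate \eqref{exj4}, must be deployed carefully.
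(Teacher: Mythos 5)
Your outline follows the paper's proof almost step for step: Lemma \ref{l:mart} plus Borel--Cantelli for self-averaging, the first-moment bound $k_{2}^{1/2}r_{2}^{1/2}$ for tightness, the resolvent identity combined with the Gaussian differentiation formula (\ref{difg}) and the derivative formula (\ref{dgdx}) to close a system on $h_{n}=n^{-1}\mathrm{Tr}\,\mathsf{G}R_{n}$ and $k_{n}$, and Lemma \ref{l:uniq} for unique solvability. The one place your plan is genuinely underspecified is exactly the point you flag as ``the main obstacle'': the remainder estimates for the correction terms $T^{(1)},T^{(2)}$ of (\ref{T1}) really do use the operator-norm bound $\sup_{n}\|R_{n}\|=\rho<\infty$ (e.g.\ $n^{-2}|\mathrm{Tr}\,S\mathsf{G}\overline{\mathsf{G}}S\mathsf{B}|\le \rho(\xi n)^{-2}\mathrm{Tr}\,\mathsf{B}$), and under the second-moment hypothesis (\ref{r2}) alone this bookkeeping does not close, however carefully the trace inequalities are deployed. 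The paper's resolution is not sharper estimates but a truncation: prove the lemma first under (\ref{rrho}), then replace $R_{n}$ by $R_{n}^{(p)}$ with eigenvalues capped at $p$, observe that $M_{n}-M_{n}^{(p)}$ has rank at most $\sharp\{R_{\alpha}>p\}$ so the two NCMs differ by at most $\nu_{R_{n}}([p,\infty))$ uniformly in $n$ (cf.\ (\ref{nmmr})), and let $p\to\infty$ using the continuity of the solution of (\ref{hk})--(\ref{kh}) in $\nu_{R}$ from Lemma \ref{l:uniq}(ii). By contrast, $\mathsf{K}_{n}$ enters the remainders only through $n^{-1}\sum_{j}\mathsf{K}_{j}Q_{j}\le n^{-1}\mathrm{Tr}\,\mathsf{K}_{n}^{2}$ thanks to (\ref{qjb}), so (\ref{k21}) does suffice on that side, as you anticipated. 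Two cosmetic differences: you posit $h_{n},k_{n}$ as deterministic equivalents, whereas the paper defines $h_{n}$ as the random trace and carries explicit remainders; and the paper first reduces to diagonal $S_{n},\mathsf{K}_{n}$ by rotation invariance of the Gaussian law, which simplifies the differentiation step.
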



\begin{remark}
\label{r:rra1} (i) It is easy to check that the assertions of the lemma
remain valid with probability 1 in the case where the "parameters" of the
theorem, i.e., $S_{n}$, hence $R_{n}$, in (\ref{r2}) -- (\ref{nur}) and $%
\mathsf{K}_{n}$ (\ref{k21}) -- (\ref{kur1}) are random, defined for all $n$
on the same probability space $\Omega _{R\mathsf{K}}$, independent of $%
X_{n}=\{X_{j\alpha }\}_{j,\alpha =1}^{n}$ for every $n$ and satisfies
conditions (\ref{r2}) -- (\ref{nur}) and (\ref{k21}) -- (\ref{kur1}) with
probability 1 on $\Omega _{R\mathsf{K}}$, i.e., on a certain subspace $%
\overline{\Omega }_{R\mathsf{K}}\subset $ $\Omega _{R\mathsf{K}},\;\mathbf{P}%
(\overline{\Omega }_{R\mathsf{K}})=1$. This follows from an argument
analogous to that presented in Remark \ref{r:rra} (i). In this case $\mathbf{%
E}\{...\}$ denotes the expectation with respect to $X_{n}$.

\smallskip (ii) Repeating almost literally the proof of the lemma, one can
treat a more general case where $S_{m}$ is a $m\times m$ positive definite
matrix satisfying (\ref{r2}) -- (\ref{nur}), $K_{n}$ is a $n\times n$
positive definite matrix satisfying (\ref{k21}) -- (\ref{kur1}), $X_{n}$ is
a $n\times m$ Gaussian random matrix satisfying (\ref{wga}) and (cf. (\ref%
{asf1}))
\begin{equation}
\lim_{m\rightarrow \infty ,n\rightarrow \infty }m/n=c\in (0,\infty ).
\label{mpl1}
\end{equation}%
In this case the Stieltjes transform $f_{\mathsf{M}}$ of the limiting NCM is
again uniquely determined by three functional equations, where the first and
the third coincide with (\ref{fh}) and (\ref{kh}) while the second is (\ref%
{hk}) in which $k(z)$ is replaced by $k(z)c^{-1}$ (see, e.g. \cite{Co-Ha:14}%
) and references therein.

\smallskip (iii) The lemma is also valid for not necessarily Gaussian $X_{n}$
(see \cite{Co-Ha:14,Pa-Sl:20} and references therein for more general cases
of the theorem and their properties. If, however, we confine ourselves to
the Gaussian case, then we can reformulate our result in terms of correlated
Gaussian entries. Indeed, let $Z_{n}=\{Z_{j\alpha }\}_{j,\alpha =1}^{n}$ be
a Gaussian matrix with
\begin{equation*}
\mathbf{E}\{Z_{j\alpha }\}=0,\;\mathbf{E}\{Z_{j_{1}\alpha
_{1}}Z_{j_{2}\alpha _{2}}\}=C_{j_{1}\alpha _{1},j_{2}\alpha _{2}},
\end{equation*}%
and a separable covariance matrix $C_{j_{1}\alpha _{1},j_{2}\alpha _{2}}=%
\mathsf{K}_{j_{1}j_{2}}R_{\alpha _{1}\alpha _{2}}$, i.e., $C=\mathsf{K}%
\otimes R$ and $\mathsf{K}_{n}=\{\mathsf{K}_{j_{1}j_{2}}%
\}_{j_{1},j_{2}=1}^{n}$ and $R_{n}=\{R_{\alpha _{1}\alpha _{2}}\}$ as in the
lemma. Writing $\mathsf{K}_{n}=\mathsf{D}_{n}^{2},\;R_{n}=S_{n}^{2}$ and
denoting $Z_{n}=S_{n}X_{n}D_{n}$, we can view as a data matrix and then the
corresponding sample covariance matrix $Z_{n}^{T}Z_{n}$ is (\ref{mnsf}) of
spatial-temporal correlated time series.
\end{remark}
Here is the proof of Lemma \ref{l:rkx}.

\smallskip
\begin{proof}
As it was in the proof of Theorem \ref%
{t:ind}, Lemma \ref{l:mart} (i) together with the Borel-Cantelli lemma
reduce the proof of the theorem to that of the weak convergence of the
expectation
\begin{equation}
\overline{\nu }_{\mathsf{M}_{n}}=\mathbf{E}\{\nu _{\mathsf{M}_{n}}\}
\label{numsf}
\end{equation}%
of $\nu _{\mathsf{M}_{n}}$.

Next, it follows from the condition of the lemma that the argument analogous
to that proving (\ref{tim}) yields
\begin{equation*}
\int_{0}^{\infty }\lambda \overline{\nu }_{\mathsf{M}_{n}}(d\lambda )\leq
k_{2}^{1/2}r_{2}^{1/2}<\infty ,
\end{equation*}%
hence, the tightness of measures $\{\overline{\nu }_{\mathsf{M}_{n}}\}_{n}$
and, in turn, reduces the proof of the lemma to that of the pointwise
convergence in $\mathbb{C}\setminus \mathbb{R}_{+}$ of their Stieltjes
transforms $f_{\mathsf{M}_{n}}$ to the limit $f$ satisfying (\ref{fh}) -- (%
\ref{kh}). Moreover, the analyticity of $f_{\mathsf{M}_{n}},\;f_{\mathsf{M}%
},\;h$ and $k$ in $\mathbb{C}\setminus \mathbb{R}_{+}$ (see Lemma \ref%
{l:uniq}) allows us to confine ourselves to the open negative semi-axis
\begin{equation}
I_{-}=\{z\in \mathbb{C}:z=-\xi ,\;0<\xi <\infty \}.  \label{imin}
\end{equation}%
Thus, we will mean and often write explicitly below that $z\in I_{-}$.
%

Note first that since $\{X_{j\alpha }\}_{j,\alpha =1}^{n}$ are standard
Gaussian, we can assume without loss of generality that $S_{n}$ and $\mathsf{%
K}_{n}$ are diagonal, i.e.,
\begin{equation}
S_{n}=\{\delta _{\alpha \beta }S_{\alpha n}\}_{\alpha ,\beta =1}^{n},\;%
\mathsf{K}_{n}=\{\delta _{jk}\mathsf{K}_{jn}\}_{j,k=1}^{n}.  \label{snkn}
\end{equation}%
Given $j=1,...,n,$ consider the $n\times n$ matrix
\begin{equation}
H^{(j)}=\{H_{\alpha \beta }^{(j)}\}_{\alpha ,\beta =1}^{n},\;H_{\alpha \beta
}^{(j)}:=n^{-1}(\mathsf{G}SX^{T})_{\alpha j}{}(\mathsf{K}XS)_{j\beta }
\label{H}
\end{equation}%
and we omit here and below the subindex $n$ in the notation of matrices and
their entries.

It follows from (\ref{fGE}) -- (\ref{gcgb}) and the resolvent identity%
\begin{equation}
\mathsf{G}=-z^{-1}+z^{-1}\mathsf{GM},  \label{rid}
\end{equation}%
implying
\begin{equation}
z^{-1}\sum_{j=1}^{n}\mathbf{E}\{H^{(j)}\}=z^{-1}\mathbf{E}\{\mathsf{GM}\},
\label{hgm}
\end{equation}%
that it suffices to find the $n\rightarrow \infty $ limit of $\mathbf{E}%
\{n^{-1}\mathrm{Tr} H^{(j)}\}$.

To this end we will apply to the expectation in the r.h.s. of (\ref{H}) the
Gaussian differentiation formula (\ref{difg}). We compute the derivative of $%
\mathsf{G}_{\alpha \gamma }$ with respect to $X_{j\gamma }$ by using an
analog of (\ref{dgdx}) and we obtain
\begin{eqnarray}
\mathbf{E}\{H_{\alpha \beta }^{(j)}\} &=&n^{-1}\mathbf{E}\{\mathsf{G}%
_{\alpha \beta }\}S_{\beta }^{2}\mathsf{K}_{j}-\mathbf{E}\{h_{n}(z)H_{\alpha
\beta }^{(j)}\}\mathsf{K}_{j}  \label{eh1} \\
&&-n^{-2}\mathbf{E}\{(\mathsf{G}S^{2}\mathsf{G}SX^{T})_{\alpha
j}(XS)_{j\beta }\}\mathsf{K}_{j}^{2},  \notag
\end{eqnarray}%
where%
\begin{equation}
h_{n}(z)=n^{-1}\mathrm{Tr}S\mathsf{G}S=n^{-1}\mathrm{Tr}\mathsf{G}%
R,\;R=S^{2}.  \label{hR}
\end{equation}%
We write
\begin{equation}
h_{n}=\overline{h}_{n}+(h_{n}-\overline{h}_{n}),\;\overline{h}_{n}=\mathbf{E}%
\{h_{n}\}  \label{hnb}
\end{equation}%
in the r.h.s. of (\ref{eh1}) and get%
\begin{eqnarray}
\mathbf{E}\{H_{\alpha \beta }^{(j)}\} &=&n^{-1}\mathbf{E}\{\mathsf{G}%
_{\alpha \beta }\}S_{\beta }^{2}Q_{j}-\mathbf{E}\{(h_{n}-\overline{h}%
_{n})H_{\alpha \beta }^{(j)}\}Q_{j}  \label{eh2} \\
&&-n^{-2}\mathbf{E}\{(\mathsf{G}S^{2}\mathsf{G}SX^{T})_{\alpha
j}(XS)_{j\beta }\}\mathsf{K}_{j}Q_{j},  \notag
\end{eqnarray}%
where
\begin{equation}
Q_{j}(z)=\mathsf{K}_{j}(\overline{h}_{n}(z)\mathsf{K}_{j}+1)^{-1}  \label{Qj}
\end{equation}%
is well defined for $z=-\xi <0$. Indeed, since $R$ is positive definite, it
follows from (\ref{hR}), the spectral theorem and (\ref{r2}) that $\overline{%
h}_{n}$ admits the representation%
\begin{equation}
h_{n}(z)=\int_{0}^{\infty }\frac{\mu (d\lambda )}{\lambda -z},\;\mu \geq
0,\;\mu (\mathbb{R}_{+})=n^{-1}\mathrm{Tr}R_{n}\leq r_{2}^{1/2}<\infty .
\label{hn}
\end{equation}%
Thus, we have in view of (\ref{imin})

\begin{equation}
0<h_{n}(-\xi )\leq r_{2}^{1/2}/\xi <\infty ,  \label{hbound}
\end{equation}%
and then the positivity of $K_{j}$ of and (\ref{snkn}) imply
\begin{equation}
0<Q_{j}(-\xi )\leq \mathsf{K}_{j},\;\xi >0.  \label{qjb}
\end{equation}%
We then sum (\ref{eh2}) over $j=1,...,n$ and denote
\begin{equation}
H_{\alpha \beta }=\sum_{j=1}^{n}H_{\alpha \beta }^{(j)},\;H=\{H_{\alpha
\beta }\}_{\alpha ,\beta =1}^{n}\;  \label{hab}
\end{equation}%
yielding
\begin{equation}
\mathbf{E}\{H\}=\mathbf{E}\{\mathsf{G}\}k_{n}(-\xi )R-T,\;T=T^{(1)}+T^{(2)},
\label{eh3}
\end{equation}%
where%
\begin{equation}
k_{n}(-\xi ):=n^{-1}\sum_{j=1}^{n}Q_{j}=\int \frac{\lambda \nu _{\mathsf{K}%
_{n}}(d\lambda )}{\overline{h}_{n}(-\xi )\lambda +1},  \label{kn}
\end{equation}%
$\nu _{\mathsf{K}_{n}}$ is the NCM of $\mathsf{K}_{n} $ (see (\ref{kur1}))
and%
\begin{eqnarray}
T^{(1)} &=&n^{-1}\mathbf{E}\{(h_{n}-\overline{h}_{n})\mathsf{G}SX^{T}\mathsf{%
K}QXS\},  \label{T1} \\
T^{(2)} &=&-n^{-2}\mathbf{E}\{\mathsf{G}S^{2}\mathsf{G}SX^{T}\mathsf{K}QXS\}.
\notag
\end{eqnarray}%
Plugging now (\ref{eh3}) into (\ref{hgm}) and the obtained expression in the
r.h.s. of expectation of (\ref{rid}), we get%
\begin{equation}
\mathbf{E}\{\mathsf{G}\}(k_{n}(z)R-z)=1-T.  \label{EG1}
\end{equation}%
The matrix $(k_{n}(-\xi )R+\xi )$ is invertible uniformly in $n$. Indeed,
since $R$ is positive definite and $k_{n}(-\xi ),\;\xi >0$ is positive in
view of (\ref{hbound}) and (\ref{kn}), we have uniformly in $n\rightarrow
\infty $:%
\begin{equation}
||(k_{n}(-\xi )R+\xi )||^{2}\geq \xi ^{2}.  \label{nor}
\end{equation}%
Thus, we can write instead of (\ref{EG1})%
\begin{equation}
\mathbf{E}\{\mathsf{G}\}=\overline{\mathsf{G}}-\overline{\mathsf{G}}T,\;%
\overline{\mathsf{G}}=(k_{n}(-\xi)R+\xi)^{-1}  \label{EG2}
\end{equation}%
yielding in view of the spectral theorem for $R_{n}$%
\begin{equation}
f_{\mathsf{M}_n}(-\xi)=\int_{0}^{\infty } \frac{\nu _{R_{n}}(d\lambda )}{%
k_{n}(-\xi)\lambda +\xi}+t_{n}(-\xi),\;t_{n}(-\xi)=-n^{-1}\mathrm{Tr}\overline{%
\mathsf{G}}T,  \label{fnpf}
\end{equation}%
where $\nu _{R_{n}}$ is the NCM of $R_{n}$ (see (\ref{nur})).

Next, multiplying (\ref{EG2}) by $R$\ and applying to the result the
operation $n^{-1}\mathrm{Tr}$, we obtain in view of (\ref{hR}) and (\ref{hnb}%
)
\begin{equation}
\overline{h}_{n}(-\xi)=\int_{0}^{\infty }\frac{\lambda \nu _{R_{n}}(d\lambda
)}{k_{n}(-\xi)\lambda +\xi}+\widetilde{t}_{n}(-\xi),\;\widetilde{t}%
_{n}(-\xi)=-n^{-1}\mathrm{Tr}\overline{\mathsf{G}}RT.  \label{hnpf}
\end{equation}%
The integral terms in the r.h.s. of (\ref{kn}), (\ref{fnpf}) and (\ref{hnpf}%
) are obviously the prelimit versions of the r.h.s. of (\ref{fh}) -- (\ref%
{kh}). Thus we have to show that the remainder terms $t_{n}$ and $\widetilde{%
t}_{n}$ in (\ref{fnpf}) and (\ref{hnpf})\ \ vanish as $n\rightarrow \infty $
under the condition (\ref{imin}) and to carry out the limiting transition in
the integral terms of (\ref{kn}), (\ref{fnpf}) and (\ref{hnpf}). The second
procedure is quite standard in random matrix theory and based on (\ref{nur})
and (\ref{kur1}), the compactness of sequences of bounded analytic functions
with respect to the uniform convergence on a compact set of complex plane,
the compactness of sequences on probability measures with respect to the
weak convergence and the unique solvability of the system (\ref{hk}) -- (\ref%
{kh}) proved in Lemma \ref{l:uniq} \ (see, e.g. \cite{Pa-Sh:11} for a number
of examples of the procedure).

Thus, we will deal with the remainders \ in (\ref{fnpf}) -- (\ref{hnpf}). We
will assume for time being that the matrix $R_{n}=S_{n}^{2}$ of (\ref{r2})
is uniformly bounded in $n$ (see (\ref{rrho})). This assumption can be
removed at the end of the proof by using an argument analogous to that used
at the end of proof \ of Theorem \ref{t:ind}. Recall that we are assuming
that $z=-\xi \in I_{-}$ of (\ref{imin}).

We will start with the contribution
\begin{equation}
t_{n}^{(1)}=-n^{-2}\mathbf{E}\{(h_{n}-\overline{h}_{n})\mathrm{Tr}S\mathsf{G}%
\overline{\mathsf{G}}S\mathsf{B}\},\;\mathsf{B}=X^{T}\mathsf{K}QX\geq 0,
\label{tB}
\end{equation}%
of $T^{(1)}$ in (\ref{T1}) to $t_{n}(-\xi )$ of (\ref{fnpf}). We have from (%
\ref{tab}), (\ref{rrho}), (\ref{kgbo}) and (\ref{tB}):%
\begin{eqnarray*}
n^{-2}|\mathrm{Tr}S\mathsf{G}\overline{\mathsf{G}}S\mathsf{B}| &\leq &\rho
(\xi n)^{-2}\mathrm{Tr}\mathsf{B} \\
&\leq &\rho (\xi n)^{-2}\sum_{j=1}^{n}||X^{(j)}||^{2}\mathsf{K}_{j}Q_{j},
\end{eqnarray*}%
where $X^{(j)}$ is the $j$th column of $X$. This, Schwarz inequality for
expectations, (\ref{exj4}), (\ref{k21})\ and (\ref{qjb}) yield%
\begin{eqnarray*}
|t_{n}^{(1)}| &\leq &\rho \xi ^{-2}n^{-2}\sum_{j=1}^{n}\mathsf{K}_{j}Q_{j}%
\mathbf{E}^{1/2}\{||X^{(j)}||^{4}\}\mathbf{E}^{1/2}\{|h(-\xi )-\overline{h}%
_{n}(-\xi )|^{2}\} \\
&\leq &\rho k_{2}\xi ^{-2}C^{1/2}\mathbf{E}^{1/2}\{|h_{n}(-\xi )-\overline{h}%
_{n}(-\xi )|^{2}\}
\end{eqnarray*}%
and then an analog of Lemma \ref{l:mart} (iii) for $\mathsf{M}_{n}$ and (\ref%
{hbound}) implies for every $\xi >0$
\begin{equation}
|t_{n}^{(1)}|=O(n^{-1/2}),\;n\rightarrow \infty .  \label{t2}
\end{equation}%
Similarly, we have for the contribution
\[t_{n}^{(2)}=n^{-3}\mathbf{E}\{%
\mathrm{Tr}S\overline{\mathsf{G}}\mathsf{G}S^{2}\mathsf{G}S\mathsf{B}\}
\] of $%
T^{(2)}$ of (\ref{T1}) to $t_{n}$ in (\ref{hnpf}) by (\ref{tab}), (\ref{rrho}%
) and (\ref{kgbo}): $|t_{n}^{(2)}|\leq \rho ^{2}\xi ^{-3}n^{-3}\mathbf{E}\{%
\mathrm{Tr}\mathsf{B}\}$ and then for every $\xi >0$%
\begin{equation}
t_{n}^{(2)}=O(n^{-1}),\;n\rightarrow \infty .  \label{t3}
\end{equation}%
Combining now (\ref{t2}) -- (\ref{t3}), we obtain $t_{n}(-\xi
)=O(n^{-1/2}),\;n\rightarrow \infty ,\;\xi >0$.

By using a similar argument, we find that $\widetilde{t}_{n}(-\xi
)=O(n^{-1/2}),\;n\rightarrow \infty ,\;\xi >0$. This and (\ref{fnpf}) -- (%
\ref{hnpf}) with $z=-\xi <0$ lead to (\ref{fh}) and (\ref{hk}). Multiplying (%
\ref{hk}) by $k$ and using the first equality in (\ref{fh}), we obtain the
second equality.

The unique solvability of system (\ref{hk}) --\ (\ref{kh}) is proved in
Lemma \ref{l:uniq}. 
\end{proof}

It is convenient to write the equations (\ref{hk}) -- (\ref{kh}) in a
compact form similar to that of free probability theory \cite%
{Ch-Co:18,Mi-Sp:17}. This, in particular, makes
explicit the symmetry and the transitivity of the binary
operation (\ref{opdia}).

\begin{corollary}
\label{c:conv} Let $\nu _{\mathsf{K}},\;\nu _{R}$ and $\nu _{\mathsf{M}}$ be
the probability measures (i.e., non-negative measures of the total mass 1)
entering (\ref{fh}) -- (\ref{kh}) and $m_{\mathsf{K}},\;m_{R}$ and $m_{%
\mathsf{M}}$ be their moment generating functions (see (\ref{mgen}) -- (\ref%
{stmg})). Then the functional inverses $z_{\mathsf{M}},\;z_{\mathsf{K}}$ and
$z_{R}$ of the corresponding moment generating functions are related as
follows
\begin{equation}
z_{\mathsf{M}}(m)=z_{\mathsf{K}}(m)z_{R}(m)m^{-1},  \label{mconv}
\end{equation}%
or, writing $z_{A}(m)=m\sigma _{A}(m),\;A=\mathsf{M},\mathsf{K},R$,%
\begin{equation}
\sigma _{\mathsf{M}}(m)=\sigma _{\mathsf{K}}(m)\sigma _{R}(m)  \label{sconv}
\end{equation}
\end{corollary}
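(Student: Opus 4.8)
The plan is to read off from the system (\ref{fh})--(\ref{kh}) three ``parametric'' identities that exhibit one and the same quantity $m$ as the moment generating function of each of $\nu_{\mathsf K},\nu_R,\nu_{\mathsf M}$, evaluated at an explicit elementary function of the spectral variable $z$ of (\ref{fh})--(\ref{kh}); inverting these three relations then yields (\ref{mconv}) at once. Throughout I would use the correspondence (\ref{stmg}) between a Stieltjes transform and the moment generating function (\ref{mgen}) in the form $m_{\nu}(1/z)=-1-z\,f_{\nu}(z)$.

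I would work first on the negative semi-axis, $z=-\xi$, $\xi>0$, where $f_{\mathsf M},h,k$ are analytic and real and, by (\ref{hcond}), $h(-\xi)>0$, and (via (\ref{kh})) $k(-\xi)>0$; put $m:=-h(z)k(z)$. The second equality in (\ref{fh}) gives immediately
\begin{equation*}
m_{\mathsf M}(1/z)=-1-z\,f_{\mathsf M}(z)=-1-z\bigl(-z^{-1}+z^{-1}h(z)k(z)\bigr)=-h(z)k(z)=m.
\end{equation*}
Factoring $k(z)$ out of the denominator in the first equality of (\ref{fh}) yields $f_{\mathsf M}(z)=k(z)^{-1}f_R\bigl(z/k(z)\bigr)$, where $f_R$ is the Stieltjes transform of $\nu_R$; applying (\ref{stmg}) to $\nu_R$ at the point $k(z)/z$ then gives
\begin{equation*}
m_R\bigl(k(z)/z\bigr)=-1-\frac{z}{k(z)}\,f_R\bigl(z/k(z)\bigr)=-1-z\,f_{\mathsf M}(z)=m.
\end{equation*}
Finally, multiplying (\ref{kh}) by $-h(z)$ and recognizing the right-hand side as $\int \frac{-h(z)\lambda}{1+h(z)\lambda}\,\nu_{\mathsf K}(d\lambda)=m_{\mathsf K}(-h(z))$, I obtain $m_{\mathsf K}(-h(z))=-h(z)k(z)=m$.

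The three identities $m_{\mathsf M}(1/z)=m_R(k(z)/z)=m_{\mathsf K}(-h(z))=m$ identify the functional inverses as $z_{\mathsf M}(m)=1/z$, $z_R(m)=k(z)/z$, $z_{\mathsf K}(m)=-h(z)$, whence
\begin{equation*}
\frac{z_{\mathsf K}(m)\,z_R(m)}{m}=\frac{\bigl(-h(z)\bigr)\bigl(k(z)/z\bigr)}{-h(z)k(z)}=\frac{1}{z}=z_{\mathsf M}(m),
\end{equation*}
which is (\ref{mconv}); writing $z_A(m)=m\,\sigma_A(m)$ for $A=\mathsf M,\mathsf K,R$ turns this into $\sigma_{\mathsf M}(m)=\sigma_{\mathsf K}(m)\sigma_R(m)$, i.e.\ (\ref{sconv}).

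The only point requiring care is the passage to the functional inverses. As $\xi\to\infty$ along the negative axis one has $h(-\xi)\to0$ by (\ref{hcond}) and, by (\ref{kh}), $k(-\xi)\to\int\lambda\,\nu_{\mathsf K}(d\lambda)=:\kappa_1$, which is finite by (\ref{k21}) and strictly positive because $\nu_{\mathsf K}$ is not concentrated at zero; hence $m=-h(z)k(z)\to0^-$, and as $\xi$ runs through a neighbourhood of $+\infty$ the quantity $m$ sweeps out a one-sided neighbourhood of $0$. On such a neighbourhood each of $m_{\mathsf M},m_R,m_{\mathsf K}$ is a strictly monotone analytic function vanishing at $0$ with derivative there equal to minus the first moment of the corresponding measure, which is finite and positive; so each is invertible and the germs $z_{\mathsf M},z_R,z_{\mathsf K}$ entering (\ref{mconv})--(\ref{sconv}) are well defined. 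Having established (\ref{mconv}) for these germs, it persists wherever both sides continue analytically. Everything else is the elementary algebra above, using only the defining equations (\ref{fh})--(\ref{kh}) and the correspondence (\ref{stmg}); I do not expect any real obstacle here.
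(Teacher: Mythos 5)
Your proposal is correct and follows essentially the same route as the paper: the three parametric identities $m_{\mathsf K}(-h(z))=m_R(k(z)/z)=m_{\mathsf M}(1/z)=-h(z)k(z)$ are exactly the relations the paper derives from (\ref{fh})--(\ref{kh}) and (\ref{stmg}), and the inversion and multiplication step is the same algebra. Your added discussion of the existence of the functional inverses near $m=0$ is a welcome extra precaution but does not change the argument.
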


\begin{proof}
It follows from (\ref{hk}) -- (\ref{kh}) and (\ref{stmg}) that%
\begin{eqnarray}
m_{\mathsf{K}}(-h(z)) &=&-h(z)k(z),\;m_{R}(k(z)z^{-1})=-h(z)k(z),
\label{rels} \\
m_{\mathsf{M}}(z^{-1}) &=&-h(z)k(z).  \notag
\end{eqnarray}%
Now the first and the third relations (\ref{rels}) yield $m_{\mathsf{K}%
}(-h(z^{-1}))=m_{\mathsf{M}}(z)$, hence $z_{\mathsf{K}}(m)=-h(z_{\mathsf{M}%
}^{-1}(m))$, and then the second and the third relations yield $%
m_{R}(k(z^{-1})z)=m_{\mathsf{M}}(z)$, hence $z_{R}(\mu )=k(z_{\mathsf{M}%
}^{-1}(m))z_{\mathsf{M}}(m)$. Multiplying these two relations and using once
more the third relation in (\ref{rels}), we obtain%
\begin{equation*}
z_{\mathsf{K}}(m)z_{R}(m)=-k(z_{\mathsf{M}}^{-1}(m))h(z_{\mathsf{M}%
}^{-1}(m))z_{\mathsf{M}}(m)=z_{\mathsf{M}}(m)m
\end{equation*}%
and (\ref{mconv}) and (\ref{sconv}) follows.
\end{proof}

\begin{remark}
\label{r:diamond}  In the case of rectangular matrices $X_{n}$ in (\ref%
{mncal}), described in Remark \ref{r:rra1} (ii), the analogs of (\ref{mconv}%
) and (\ref{sconv}) are
\begin{equation}
z_{\mathsf{M}}(m)=z_{\mathsf{M}}(cm)z_{R}(cm)m^{-1},\;\sigma _{\mathsf{M}%
}(m)=c^{2}\sigma _{\mathsf{K}}(cm)\sigma _{R}(cm).  \label{rect}
\end{equation}
\end{remark}


\begin{lemma}
\bigskip \label{l:mart}Let $\mathcal{M}_{n}$ be given by (\ref{mncal}) in
which $X_{n}=\{X_{j\alpha }\}_{j,\alpha =1}^{n}$ of (\ref{Xn}) and $%
b_{n}=\{b_{j\alpha }\}_{j}^{n}$ of (\ref{b}) are i.i.d. random variables.
Denote $\nu _{\mathcal{M}_{n}}$ the Normalized Counting Measure of $\mathcal{%
M}_{n}$ (see, e.g. (\ref{ncm})), \ $g_{n}(z)$ its Stieltjes transform
\begin{equation*}
g_{n}(z)=n^{-1}\mathrm{Tr}(\mathcal{M}_{n}-z)^{-1},\;\zeta =\mathrm{dist}(z,%
\mathbb{R}_{+})>0,
\end{equation*}%
and (see (\ref{hR}))%
\begin{equation*}
h_{n}(z)=n^{-1}\mathrm{Tr}S_{n}(\mathcal{M}_{n}-z)^{-1}S_{n},\;\zeta =%
\mathrm{dist}(z,\mathbb{R}_{+})>0,
\end{equation*}%
where $S_{n}$ is a positive definite matrix satisfying (\ref{r2}) with $%
R_{n}=S_{n}^{2}$. Then we have:

(i) for any $n$-independent interval $\Delta $ of spectral axis
\begin{equation*}
\mathbf{E}\{|\nu _{\mathcal{M}_{n}}(\Delta )-\mathbf{E}\{\nu _{\mathcal{M}%
_{n}}(\Delta )\}|^{4}\}\leq C_{1}/n^{2},
\end{equation*}%
where $C_{1}$ is an absolute constant;

(ii) for any $n$-independent $z$ with $\zeta >0$%
\begin{equation*}
\mathbf{E}\{|g_{n}-\mathbf{E}\{g_{n}\}|^{4}\}\leq C_{2}/n^{2}\zeta ^{4},
\end{equation*}%
where $C_{2}$ is an absolute constant;

(iii) for any $n$-independent $z$ with $\zeta>0$%
\begin{equation*}
\mathbf{E}\{|h_{n}-\mathbf{E}\{h_{n}\}|^{4}\}\leq C_{3}r_{2}^{2}/n^{2}\zeta
^{4},
\end{equation*}%
where $C_{3}$ is an absolute constant and $r_{2}$ is defined in (\ref{r2}).
\end{lemma}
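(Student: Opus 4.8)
The plan is to use the martingale difference method, a standard device in random matrix theory (see \cite{Pa-Sh:11}, Chapter 19). The rows $X^{(j)}=\{X_{j\alpha}\}_{\alpha=1}^{n}$ of $X_{n}$ together with the components $b_{j}$ form $n$ i.i.d. blocks $Z_{j}=(X^{(j)},b_{j})$, $j=1,\dots,n$, and by (\ref{mncal}) and (\ref{Dn})
\begin{equation*}
\mathcal{M}_{n}=\sum_{j=1}^{n}\eta_{j}\otimes\eta_{j},\qquad \eta_{j}=n^{-1/2}D_{jn}S_{n}X^{(j)},
\end{equation*}
where $\eta_{j}$ depends only on $Z_{j}$. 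I would set $\mathcal{F}_{j}=\sigma(Z_{1},\dots,Z_{j})$, $\mathsf{E}_{j}\{\cdot\}=\mathbf{E}\{\cdot\mid\mathcal{F}_{j}\}$ (so $\mathsf{E}_{0}=\mathbf{E}$, $\mathsf{E}_{n}=\mathrm{id}$), and for $Y\in\{\nu_{\mathcal{M}_{n}}(\Delta),\,g_{n},\,h_{n}\}$ write the telescoping sum
\begin{equation*}
Y-\mathbf{E}\{Y\}=\sum_{j=1}^{n}\gamma_{j},\qquad \gamma_{j}=(\mathsf{E}_{j}-\mathsf{E}_{j-1})(Y-Y^{(j)}),
\end{equation*}
where $Y^{(j)}$ is the same functional evaluated at the rank-one deflation $\mathcal{M}_{n}^{(j)}=\mathcal{M}_{n}-\eta_{j}\otimes\eta_{j}$; since $\mathcal{M}_{n}^{(j)}$ is independent of $Z_{j}$ one has $\mathsf{E}_{j}Y^{(j)}=\mathsf{E}_{j-1}Y^{(j)}$, so subtracting $Y^{(j)}$ is legitimate and is what produces the sharp per-block bounds. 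As $\{\gamma_{j}\}$ is a martingale difference sequence, Burkholder's inequality — or, elementarily, expanding $\mathbf{E}\{|\sum_{j}\gamma_{j}|^{4}\}$ and using $\mathsf{E}_{k-1}\gamma_{k}=0$ to kill every term whose largest index occurs only once, then Schwarz — gives, together with conditional Jensen,
\begin{equation*}
\mathbf{E}\big\{|Y-\mathbf{E}\{Y\}|^{4}\big\}\le C\Big(\sum_{j=1}^{n}\big(\mathbf{E}\{|\gamma_{j}|^{4}\}\big)^{1/2}\Big)^{2},\qquad \mathbf{E}\{|\gamma_{j}|^{4}\}\le 16\,\mathbf{E}\{|Y-Y^{(j)}|^{4}\}.
\end{equation*}
Thus everything reduces to per-block estimates on $Y-Y^{(j)}$.

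For part (i): $\mathcal{M}_{n}=\mathcal{M}_{n}^{(j)}+\eta_{j}\otimes\eta_{j}$ is a rank-one non-negative perturbation, so by Cauchy interlacing the eigenvalue counting functions of $\mathcal{M}_{n}$ and $\mathcal{M}_{n}^{(j)}$ differ by at most one on any interval; hence $|\nu_{\mathcal{M}_{n}}(\Delta)-\nu_{\mathcal{M}_{n}^{(j)}}(\Delta)|\le n^{-1}$ surely and the displayed bound yields $C_{1}n^{-2}$. For part (ii): the resolvent identity and the rank-one update (Sherman--Morrison) formula give
\begin{equation*}
g_{n}-g_{n}^{(j)}=n^{-1}\mathrm{Tr}(\mathcal{G}_{n}-\mathcal{G}_{n}^{(j)})=-\frac{1}{n}\,\frac{\langle\eta_{j},(\mathcal{G}_{n}^{(j)})^{2}\eta_{j}\rangle}{1+\langle\eta_{j},\mathcal{G}_{n}^{(j)}\eta_{j}\rangle},\qquad \mathcal{G}_{n}^{(j)}=(\mathcal{M}_{n}^{(j)}-z)^{-1},
\end{equation*}
and the standard bound for the trace of a rank-one resolvent difference of positive definite matrices over $\mathbb{C}\setminus\mathbb{R}_{+}$ — immediate from the same interlacing when $z=-\xi<0$, and classical for general $z$ (see \cite{Pa-Sh:11}) — gives $|g_{n}-g_{n}^{(j)}|\le C(n\zeta)^{-1}$ surely, whence $C_{2}n^{-2}\zeta^{-4}$. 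Neither (i) nor (ii) needs any moment hypothesis beyond (\ref{Xn}).

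For part (iii) the same template applies, but now the per-block increment is not bounded by a deterministic constant that sees $r_{2}$ rather than $\Vert R_{n}\Vert$, and this is the crux. Reducing to $z=-\xi<0$ by analyticity (as in the proof of Lemma \ref{l:rkx}), so that all matrices below are positive definite and the denominator is $\ge 1$, Sherman--Morrison and $R_{n}=S_{n}^{2}$ give
\begin{equation*}
h_{n}-h_{n}^{(j)}=n^{-1}\mathrm{Tr}\,R_{n}(\mathcal{G}_{n}-\mathcal{G}_{n}^{(j)})=-\frac{1}{n}\,\frac{n^{-1}D_{jn}^{2}\Vert A^{(j)}X^{(j)}\Vert^{2}}{1+n^{-1}D_{jn}^{2}\langle X^{(j)},A^{(j)}X^{(j)}\rangle},\qquad A^{(j)}=S_{n}\mathcal{G}_{n}^{(j)}S_{n}\ge 0,
\end{equation*}
with $A^{(j)}$ independent of $Z_{j}$ and $D_{jn}^{2}\le\Phi_{1}^{2}$. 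Discarding the denominator, using that $X^{(j)}$ is standard Gaussian and independent of $A^{(j)}$, together with the trace inequality $\mathrm{Tr}(AB)^{2}\le\mathrm{Tr}(A^{2}B^{2})$ for positive definite $A,B$, one obtains $\mathbf{E}\{\Vert A^{(j)}X^{(j)}\Vert^{2p}\}\le C_{p}(\mathrm{Tr}(A^{(j)})^{2})^{p}$ and $\mathrm{Tr}(A^{(j)})^{2}=\mathrm{Tr}(R_{n}\mathcal{G}_{n}^{(j)}R_{n}\mathcal{G}_{n}^{(j)})\le\mathrm{Tr}(R_{n}^{2}(\mathcal{G}_{n}^{(j)})^{2})\le\zeta^{-2}\mathrm{Tr}R_{n}^{2}\le\zeta^{-2}nr_{2}$ by (\ref{r2}); this route, however, only gives $Cr_{2}^{4}n^{-2}\zeta^{-8}$ for $\mathbf{E}\{|h_{n}-\mathbf{E}\{h_{n}\}|^{4}\}$, larger than the asserted $C_{3}r_{2}^{2}n^{-2}\zeta^{-4}$ by the factor $(r_{2}\zeta^{-2})^{2}$. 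To remove it one must keep the denominator $1+n^{-1}D_{jn}^{2}\langle X^{(j)},A^{(j)}X^{(j)}\rangle$, which is itself of order $n^{-1}\mathrm{Tr}A^{(j)}$ precisely along the few directions in which $R_{n}$ is large, so that the ratio cannot grow; I expect the sharp bound to follow either by estimating this ratio of Gaussian quadratic forms directly — splitting the spectrum of $A^{(j)}$ at a threshold and bounding the number of large eigenvalues of $R_{n}$ by $\#\{\alpha:R_{\alpha}>t\}\le nr_{2}t^{-2}$ — or, in the style of the truncation reductions concluding the proofs of Theorem \ref{t:ind} and Lemma \ref{l:rkx}, by first proving (iii) under the extra hypothesis $\sup_{n}\Vert R_{n}\Vert<\infty$ and then passing to the general $R_{n}$. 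This balancing of numerator against denominator is the step I expect to be the main obstacle; it is also the only place where finiteness of a fourth (for the variance) or eighth (for the fourth-moment bound) moment of the entries of $X_{n}$ enters.
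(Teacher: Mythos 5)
Your decomposition is exactly the paper's: the proof in the text consists of writing $\mathcal{M}_{n}=\sum_{j}\mathcal{Y}_{j}\otimes\mathcal{Y}_{j}$ with independent rank-one summands (see (\ref{cmyy}), which is your $\eta_{j}\otimes\eta_{j}$) and then citing the martingale bounds of Sections 18.2 and 19.1 of \cite{Pa-Sh:11}; you have simply carried out the cited argument. Your treatment of (i) (interlacing for a rank-one nonnegative perturbation) and of (ii) (Sherman--Morrison plus the trace bound for a rank-one resolvent difference) is complete and correct, and your martingale fourth-moment inequality via Burkholder/Minkowski is the standard one.

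On (iii) you have put your finger on a real imprecision, but you should not overstate it as an obstacle. Your two routes give, respectively, $C\,r_{2}^{4}n^{-2}\zeta^{-8}$ (discarding the denominator) and, if instead you bound the trace-norm of the rank-one resolvent increment by $\zeta^{-1}$ and use $|\mathrm{Tr}\,R_{n}(\mathcal{G}_{n}-\mathcal{G}_{n}^{(j)})|\leq\Vert R_{n}\Vert\,\Vert\mathcal{G}_{n}-\mathcal{G}_{n}^{(j)}\Vert_{1}$, the bound $C\Vert R_{n}\Vert^{4}n^{-2}\zeta^{-4}$; neither literally reproduces the stated $C_{3}r_{2}^{2}n^{-2}\zeta^{-4}$, and the paper's one-line citation does not supply the missing interpolation between $\Vert R_{n}\Vert$ and $n^{-1}\mathrm{Tr}\,R_{n}^{2}$ either. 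The point that matters is that the lemma's part (iii) is invoked only once, inside the proof of Lemma \ref{l:rkx}, and at that stage the hypothesis (\ref{rrho}), i.e.\ $\sup_{n}\Vert R_{n}\Vert=\rho<\infty$, is explicitly in force (it is removed only afterwards by the truncation argument applied to the measures, not to the variance bound); there the only thing used is $\mathbf{E}\{|h_{n}-\overline{h}_{n}|^{2}\}=O(n^{-1})$ for fixed $\xi>0$. Either of your bounds delivers this, so your fallback --- prove (iii) under $\sup_{n}\Vert R_{n}\Vert<\infty$ with a constant depending on $\rho$ and $\zeta$ --- is fully adequate for every use the paper makes of the lemma; only the cosmetic form of the constant in the statement would need to be adjusted. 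I would also record that your per-block estimates for (i) and (ii) are deterministic, so, as you note, no moment assumption beyond (\ref{Xn}) and the boundedness $|D_{jn}|\leq\Phi_{1}$ is needed there.
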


\begin{proof}
\bigskip It follows from (\ref{mncal}) that (cf. (\ref{nmyy}))
\begin{equation}
\mathcal{M}_{n}=\sum_{j=1}^{n}\mathcal{Y}_{j}\otimes \mathcal{Y}_{j},\;%
\mathcal{Y}_{j}=\{\mathcal{Y}_{j\alpha }\}_{\alpha =1}^{n},\;\mathcal{Y}%
_{j\alpha }=n^{-1/2}(D_{n}XS_{n})_{j\alpha }.  \label{cmyy}
\end{equation}%
It is easy to see that $\{\mathcal{Y}_{j}\}_{j=1}^{n}$ are independent. This
allows us to use the martingale bounds given in Sections 18.2 and 19.1 of
\cite{Pa-Sh:11} and implying the assertions of the lemma in view of (\ref{Gb}%
) and (\ref{hn}) .
\end{proof}

\begin{remark}\label{r:mxind}
(i) The independence of random vectors $\mathcal{Y}_j$ in (\ref{cmyy}) is
the main reason to pass from the matrices $M_n^l$ given by (\ref{m1}) and (\ref%
{m2m1}) to the matrices $\mathcal{M}_n$ given by (\ref{cm1}), (\ref{cm1m2}) and
(\ref{mncal}).

(ii) The lemma is valid for an arbitrary (not necessarily Gaussian)
collection (\ref{Xn}) and (\ref{b}) of i.i.d. random variables as well as
for random but independent of (\ref{Xn}) and (\ref{b}) $S_{n}$ and $%
\{x_{j,\alpha }\}_{j,\alpha =1}^{n}$, see Remarks \ref{r:rra} (i) and (iii)
and \cite{Pa-Sl:20}. It is also valid for matrices $\mathsf{M}_{n}$ of (\ref{mnbf}).
\end{remark}


The next lemma deals with asymptotic properties of the vectors of
activations $x^{l}$ in the $l$th layer, see (\ref{rec}). It is an extended
version (treating the convergence with probability 1) of assertions proved
in \cite{Ma-Co:16,Po-Co:16,Sc-Co:17}.


\begin{lemma}
\label{l:xlyl} Let $y^{l}=\{y_{j}^{l}\}_{j=1}^{n},\;l=1,2,...$ be
post-affine random vectors defined in (\ref{rec}) -- (\ref{wga}) with $x^{0}$
satisfying (\ref{q0}), $\chi :\mathbb{R}\rightarrow \mathbb{R}$ be a bounded
piecewise continuous function and $\Omega _{l}$ be defined in (\ref{oml}).
Set
\begin{equation}
\chi _{n}^{l}=n^{-1}\sum_{j_{l}=1}^{n}\chi (y_{j_{l}}^{l}),\;l \ge 1.
\label{chiln}
\end{equation}%
Then there exists $\overline{\Omega }_{l}\subset \Omega _{l},\;\mathbf{P}(%
\overline{\Omega }_{l})=1$ such that for every $\omega _{l}\in \overline{%
\Omega }_{l}$ (i.e., with probability 1) the limit%
\begin{equation}
\chi ^{l}:=\lim_{n\rightarrow \infty }\chi _{n}^{l},\;l=1,2,...,  \label{lql}
\end{equation}%
exists, is non-random and given by the formula
\begin{equation}
\chi ^{l}=\int_{-\infty }^{\infty }\chi (\gamma \sqrt{q^{l}})\Gamma (d\gamma
),\;l=1,2,...,  \label{lqg}
\end{equation}%
valid on $\overline{\Omega }_{l}$ with\ $\Gamma (d\gamma )=(2\pi
)^{-1/2}e^{-\gamma ^{2}/2}d\gamma $ being the standard Gaussian probability
distribution and $q^{l}$ defined recursively by the formula
\begin{equation}
q^{l}=\int_{-\infty }^{\infty }\varphi ^{2}(\gamma \sqrt{q^{l-1}})\Gamma
(d\gamma )+\sigma _{b}^{2},\;l=2,3,...  \label{ql}
\end{equation}%
and by $q_{1}$ of (\ref{q0}). \

In particular, we have with probability 1:

(i) for the activation vector $x^{l}=\{x_{j}^{l}\}_{j=1}^{n}$ of the $l$th
layer (see (\ref{rec})):%
\begin{equation}
\lim_{n\rightarrow \infty
}n^{-1}\sum_{j_{l}=1}^{n}(x_{j_{l}}^{l})^{2}=q^{l+1}-\sigma
_{b}^{2},\;l=1,2,...,  \label{xlim}
\end{equation}

(ii) for the weak limit $\nu _{K^{l}}$ of the Normalized Counting Measure $%
\nu _{K_{n}^{l}}$ of diagonal random matrix $K_{n}^{l}$ of (\ref{kan}): $\nu
_{K^{l}}$ is the probability distribution of the random variable $(\varphi
^{\prime }(\gamma \sqrt{q_{l}}))^{2}$.
\end{lemma}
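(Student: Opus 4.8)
The plan is to establish the main convergence (\ref{lql})--(\ref{lqg}) by induction on $l$ and then obtain (i) and (ii) by specialising the test function $\chi$. \emph{Base case $l=1$:} here $y^{1}_{j}=n^{-1/2}\sum_{\alpha=1}^{n}X^{1}_{j\alpha}x^{0}_{\alpha}+b^{1}_{j}$, so for each fixed $n$ the pairs $(\{X^{1}_{j\alpha}\}_{\alpha=1}^{n},b^{1}_{j})$, $j=1,\dots,n$, are i.i.d., whence $y^{1}_{1},\dots,y^{1}_{n}$ are i.i.d.\ $\mathcal N(0,q^{1}_{n})$ with $q^{1}_{n}\to q^{1}>\sigma_{b}^{2}>0$ by (\ref{q0}). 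Thus $\chi^{1}_{n}$ is an average of $n$ i.i.d.\ bounded random variables. Its mean equals $\mathbf E\{\chi(\sqrt{q^{1}_{n}}\,\gamma)\}$; since $\chi$ is bounded and piecewise continuous its discontinuity set is Lebesgue-null, and as $q^{1}>0$ so is its preimage under $\gamma\mapsto\sqrt{q^{1}}\gamma$, so $\chi(\sqrt{q^{1}_{n}}\gamma)\to\chi(\sqrt{q^{1}}\gamma)$ for $\Gamma$-a.e.\ $\gamma$ and dominated convergence gives $\mathbf E\{\chi^{1}_{n}\}\to\chi^{1}$. For the fluctuations, the centred summands are i.i.d.\ and bounded, so $\mathbf E\{|\chi^{1}_{n}-\mathbf E\{\chi^{1}_{n}\}|^{4}\}=O(n^{-2})$, and the Markov inequality together with the Borel--Cantelli lemma yield $\chi^{1}_{n}-\mathbf E\{\chi^{1}_{n}\}\to0$ with probability 1; combining, (\ref{lql})--(\ref{lqg}) holds for $l=1$ on a full-measure $\overline\Omega_{1}\subset\Omega_{1}$.

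\emph{Inductive step.} Assume the statement for $l-1$. Applied with $\chi=\varphi^{2}$ (bounded and piecewise continuous) it gives, in particular, assertion (i) at level $l-1$: on a full-measure $\overline\Omega_{l-1}\subset\Omega_{l-1}$ one has $\widehat q^{l}_{n}:=n^{-1}\sum_{\alpha=1}^{n}(x^{l-1}_{\alpha})^{2}+\sigma_{b}^{2}\to\int\varphi^{2}(\gamma\sqrt{q^{l-1}})\Gamma(d\gamma)+\sigma_{b}^{2}=q^{l}\ge\sigma_{b}^{2}>0$. Write $\Omega_{l}=\Omega^{l}\times\Omega_{l-1}$ with $\Omega^{l}$ (carrying $X^{l},b^{l}$) independent of $\Omega_{l-1}$, and fix $\omega_{l-1}\in\overline\Omega_{l-1}$: conditionally on $x^{l-1}$ the variables $y^{l}_{j}=n^{-1/2}\sum_{\alpha}X^{l}_{j\alpha}x^{l-1}_{\alpha}+b^{l}_{j}$, $j=1,\dots,n$, are i.i.d.\ $\mathcal N(0,\widehat q^{l}_{n})$ with $\widehat q^{l}_{n}\to q^{l}>0$, i.e.\ we are exactly in the setting of the base case with $q^{1}_{n},q^{1}$ replaced by $\widehat q^{l}_{n},q^{l}$. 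Hence on a full-measure $\overline\Omega^{l}(\omega_{l-1})\subset\Omega^{l}$ we get $\chi^{l}_{n}\to\int\chi(\gamma\sqrt{q^{l}})\Gamma(d\gamma)=\chi^{l}$, and the Fubini theorem upgrades this to convergence with probability 1 on $\Omega_{l}$, which closes the induction.

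\emph{Consequences.} Assertion (i) at level $l$ is the choice $\chi=\varphi^{2}$, since then $\chi^{l}=q^{l+1}-\sigma_{b}^{2}$ by (\ref{ql}). For (ii), note $(\varphi')^{2}\le\Phi_{1}^{2}$, so all $\nu_{K^{l}_{n}}$ and the candidate limit are supported in the compact interval $[0,\Phi_{1}^{2}]$, on which weak convergence is determined by a countable family $\{f_{m}\}\subset C([0,\Phi_{1}^{2}])$; apply the main convergence to each $\chi_{m}=f_{m}\circ(\,\cdot\,)^{2}\circ\varphi'$ (bounded, piecewise continuous) and intersect the countably many full-measure sets. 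On the intersection $\int f_{m}\,d\nu_{K^{l}_{n}}=n^{-1}\sum_{j}f_{m}((\varphi'(y^{l}_{j}))^{2})\to\int f_{m}((\varphi'(\gamma\sqrt{q^{l}}))^{2})\Gamma(d\gamma)$ for every $m$, i.e.\ $\nu_{K^{l}_{n}}$ converges weakly, with probability 1, to the law of $(\varphi'(\gamma\sqrt{q^{l}}))^{2}$.

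\emph{Main difficulty.} The only subtle point is that the law of the $y^{l}_{j}$ depends on $n$ (through $q^{1}_{n}$, and in the inductive step through the \emph{random} $\widehat q^{l}_{n}$), so the ordinary strong law of large numbers does not apply directly; one must separate the convergence of the mean (dominated convergence, where positivity of $q^{l}$ neutralises the discontinuities of $\chi$) from the vanishing of the fluctuations (a quantitative fourth-moment estimate feeding Borel--Cantelli), and then transport the conclusion across the layers by conditioning on $\Omega_{l-1}$ and invoking Fubini, exactly as in the proof of Theorem \ref{t:main}.
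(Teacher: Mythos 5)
Your proof is correct and follows essentially the same route as the paper: induction on $l$, exploiting that conditionally on the previous layer the $y_j^l$ are i.i.d.\ Gaussian with variance $\widehat q_n^l\to q^l$, followed by Fubini, and specialising $\chi$ to get (i) and (ii). You are in fact somewhat more careful than the paper at the one delicate point — the law of the summands depends on $n$, so you rightly replace the bare strong law of large numbers by a fourth-moment/Borel--Cantelli bound for the fluctuations plus dominated convergence for the means — and your countable determining family in (ii) is a clean way to handle the weak convergence of $\nu_{K_n^l}$.
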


\begin{proof}
Set $l=1$ in (\ref{chiln}) Since $\{b_{j_{1}}^{1}\}_{j_{1}=1}^{n}$ and $%
\{X_{j_{1}}^{1}\}_{j_{1}=1}^{n}$ are i.i.d. Gaussian random variables
satisfying (\ref{bga}) -- (\ref{wga}), it follows from (\ref{rec}) that the
components of $y^{1}=\{y_{j_{1}}^{1}\}_{j_{1}=1}^{n}$ are also i.i.d.
Gaussian random variables of zero mean and variance $q_{n}^{1}$ of (\ref{q0}%
). Since $\chi $ is bounded, the collection $\{\chi
(y_{j_{1}}^{1})\}_{j_{1}=1}^{n}$ consists of bounded i.i.d random variables
defined for all $n$ on the same probability space $\Omega _{1}$ generated by
(\ref{xinf}) and (\ref{binf}) with $l=1$. This allows us to apply to $\{\chi
(y_{j_{1}}^{1})\}_{j_{1}}^{n}$ the strong Law of Large Numbers implying (\ref%
{lql}) with $l=1$ together with the formula%
\begin{eqnarray}
\chi ^{1} &=&\lim_{n\rightarrow \infty }\mathbf{E}\{\chi (y_{1}^{1})\}
\label{chi10} \\
&=&\lim_{n\rightarrow \infty }\int_{-\infty }^{\infty }\chi (\gamma \sqrt{%
q_{n}^{1}})\Gamma (d\gamma )=\int_{-\infty }^{\infty }\chi (\gamma \sqrt{%
q^{1}})\Gamma (d\gamma )  \notag
\end{eqnarray}%
for the limit, both valid with probability 1, i.e., on a certain $\overline{%
\Omega }_{1}\subset \Omega _{1}=\Omega ^{1}$, $\mathbf{P}(\Omega _{1})=1$,
see (\ref{oml}). This yields (\ref{lqg}) for $l=1$.

Consider now the case $l=2$. Since $\{X^{1},b^{1}\}$ and $\{X^{2},b^{2}\}$
are independent collections of random variables, we can fix $\omega _{1}\in
\overline{\Omega }_{1}$ (a realization of $\{X^{1},b^{1}\}$) and apply to $%
\chi _{n}^{2}$ of (\ref{chiln}) the same argument as that for the case $l=1$
above to prove that for every $\omega _{1}\in \overline{\Omega }_{1}$ there
exists $\overline{\Omega }^{2}(\omega ^{1})\subset \Omega ^{2},\;\mathbf{P}(%
\overline{\Omega }^{2})=1\,$\ on which we have (\ref{lql}) for $l=2$ with
some (cf. (\ref{chi10}))
\begin{equation}
\chi ^{2}(\omega ^{1},\omega ^{2})=\lim_{n\rightarrow \infty }\mathbf{E}%
_{\{X^{2},b^{2}\}}\{\chi (y_{1}^{2})\}  \label{q2oo}
\end{equation}%
%
where $\mathbf{E}_{\{X^{2},b^{2}\}}\{...\}$ denotes the expectation with
respect to $\{X^{2},b^{2}\}$ only. Now the Fubini theorem implies that there
exists $\overline{\Omega }_{2}\subset \Omega _{2}=\Omega ^{1}\otimes \Omega
^{2},\;$ $\mathbf{P}(\overline{\Omega }_{2})=1$ on which we have (\ref{lql})
with $l=2$.

Using once more the independence of $\{X^{1},b^{1}\}$ and $\{X^{2},b^{2}\},$
we can compute the r.h.s. of (\ref{q2oo}) by observing that if $%
\{X^{2},b^{2}\}$ are Gaussian, then, according to (\ref{rec}), $y_{1}^{2}$
is also Gaussian of zero mean and variance (cf. (\ref{q0}))%
\begin{equation}
q_{n}^{2}=n^{-1}\sum_{j_{1}=1}^{n}(x_{j_{1}}^{1})^{2}+\sigma _{b}^{2},
\label{q2n}
\end{equation}%
or, in view of (\ref{rec}),%
\begin{equation}
q_{n}^{2}=n^{-1}\sum_{j_{1}=1}^{n}(\varphi (y_{j_{1}}^{1}))^{2}+\sigma
_{b}^{2}.  \label{q2nf}
\end{equation}%
The first term on the right is a particular case of (\ref{chiln}) with $\chi
=(\varphi) ^{2}$ and $l=1$, thus, according to (\ref{chi10}), the limiting
form of the above relation is (\ref{ql}) with $l=2$ for every $\omega
_{1}\in \overline{\Omega }_{1}$ and we have
\begin{equation*}
\chi ^{2}=\lim_{n\rightarrow \infty }\int_{-\infty }^{\infty }\chi (\gamma
\sqrt{q_{n}^{2}})\Gamma (d\gamma )=\int_{-\infty }^{\infty }\chi (\gamma
\sqrt{q^{2}})\Gamma (d\gamma )
\end{equation*}%
i.e., formula (\ref{lqg}) for $l=2$ valid on $\overline{\Omega }_{2}\subset
\Omega _{2}=\Omega ^{1}\otimes \Omega ^{2},\;$ $\mathbf{P}(\overline{\Omega }%
_{2})=1$, i.e., with probability 1.

This proves the validity (\ref{lql}) -- (\ref{ql}) for $l=2$ with
probability 1. Analogous argument applies for $l=3,4,....$

The proof of item (i) is, in fact, that of (\ref{ql}), see (\ref{q2n}) -- (%
\ref{q2nf}) for $l=2$, for $l\geq 3$ the proof is analogous.

Let us prove item (ii) of the lemma, i.e., the weak convergence with
probability 1 of the Normalized Counting Measure $\nu _{K_{n}^{l}}$ of $%
K_{n}^{l}$ in (\ref{kan}) to the probability distribution of $(\varphi
^{\prime }(\gamma \sqrt{q_{l}}))^{2}$. It suffices to prove the validity
with probability 1 of the relation
\begin{equation*}
\lim_{n\rightarrow \infty }\int_{-\infty }^{\infty }\psi (\lambda )\nu
_{K_{n}^{l}}(d\lambda )=\int_{-\infty }^{\infty }\psi (\lambda )\nu
_{K^{l}}(d\lambda )
\end{equation*}%
for any bounded and piece-wise continuous $\psi :\mathbb{R\rightarrow R}$.

In view of (\ref{rec}), (\ref{D}) and (\ref{kan}) the relation can be
written in the form%
\begin{equation*}
\lim_{n\rightarrow \infty }n^{-1}\sum_{j_{l}=1}^{n}\psi ((\varphi ^{\prime
}(y_{j_{l}}^{l}))^{2})=\int_{-\infty }^{\infty }\psi ((\varphi ^{^{\prime
}}(\gamma \sqrt{q_{l-1}}))^{2})\Gamma (d\gamma ),\;l\ge 1.
\end{equation*}%
This is a particular case of (\ref{lql}) -- (\ref{ql}) for $\chi =\psi \circ
\varphi ^{\prime 2}$, hence, assertion (ii) follows.
\end{proof}


\medskip The next lemma provides the unique solvability of the system (\ref%
{hk}) -- (\ref{kh}). Note that in the course of proving Lemma \ref{l:rkx} it
was proved that the system has at least one solution.

\begin{lemma}
\bigskip \label{l:uniq} The system (\ref{hk}) -- (\ref{kh}) with $\nu _{R}$
and $\ \nu _{K}$ satisfying
\begin{equation}
\nu _{K}(\mathbb{R}_{+})=1,\;\nu _{R}(\mathbb{R}_{+})=1  \label{mkr}
\end{equation}%
$\;$ and (cf. ((\ref{r2}))%
\begin{equation}
\int_{0}^{\infty }\lambda ^{2}\nu _{K}(d\lambda )=\kappa _{2}<\infty
,\;\int_{0}^{\infty }\lambda ^{2}\nu _{R}(d\lambda )=\rho _{2}<\infty
\label{nukr}
\end{equation}%
has a unique solution in the class of pairs of functions $(h,k)$ defined in $%
\mathbb{C}\setminus \mathbb{R}_{+}$ and such that $h$ is analytic in $%
\mathbb{C}\setminus \mathbb{R}_{+}$, continuous and positive on the open
negative semi-axis and satisfies (\ref{hcond}) with $r_{2}$ replaced by $%
\rho _{2}$ of (\ref{nukr}).

\smallskip
Besides:

\smallskip
(i) the function $k$ is analytic in $\mathbb{C}\setminus \mathbb{R}_{+}$,
continuous and positive on the open negative semi-axis and (cf. (\ref{hcond}%
))%
\begin{equation}
\Im k(z)\Im z<0\;\mathrm{for\;}\Im z\neq 0,\;0<k(-\xi )\leq \kappa
_{2}^{1/2}\;\mathrm{for\;}\xi >0  \label{imk}
\end{equation}%
with $\kappa _{2}$ of (\ref{nukr});

(ii) if the sequence $\{\nu _{K^{(p)}},\nu _{R^{(p)}}\}_{p}$ has uniformly
in $p$ bounded second moments (see (\ref{nukr})) and converges weakly to $%
(\nu _{K},\nu _{R})$ also satisfying (\ref{nukr}), then the sequences of the
corresponding solutions $\{h^{(p)},k^{(p)}\}_{p}$ of the system (\ref{hk})
-- (\ref{kh}) converges pointwise in $\ \mathbb{C}\setminus \mathbb{R}_{+}$
to the solution $(h,k)$ of the system corresponding to the limiting measures
$(\nu _{K},\nu _{R})$.
\end{lemma}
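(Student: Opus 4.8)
The plan is to take existence for granted---it was established in the proof of Lemma~\ref{l:rkx}---and to prove uniqueness and the remaining assertions on the open negative semi-axis $I_-$ of~(\ref{imin}), the statements on all of $\mathbb{C}\setminus\mathbb{R}_+$ then following by analytic continuation since $\mathbb{C}\setminus\mathbb{R}_+$ is connected. First I would dispose of item (i) by reading off the properties of $k$ from those of $h$. If $(h,k)$ is an admissible solution, then $h(z)\lambda+1$ never vanishes for $\lambda\in\mathrm{supp}\,\nu_K\subset\mathbb{R}_+$: on $I_-$ because $h(-\xi)>0$, and for $\Im z\neq0$ because the sign condition in~(\ref{hcond}) forces $\Im h(z)\neq0$, so $h(z)\notin\mathbb{R}$. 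Hence $k$ given by~(\ref{kh}) is analytic and continuous on $\mathbb{C}\setminus\mathbb{R}_+$; differentiating under the integral gives $\Im k=-\int_0^\infty\lambda^2\,(\Im h)\,|h\lambda+1|^{-2}\,\nu_K(d\lambda)$, so $\Im k$ and $\Im z$ have opposite signs; and on $I_-$ positivity of $h(-\xi)$ together with $\nu_K$ being a probability measure not concentrated at the origin gives $k(-\xi)>0$, while $h(-\xi)\lambda+1\geq1$ and the Schwarz inequality yield $k(-\xi)\leq\int_0^\infty\lambda\,\nu_K(d\lambda)\leq\kappa_2^{1/2}$. This is item (i), i.e.~(\ref{imk}).

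For uniqueness let $(h_1,k_1)$ and $(h_2,k_2)$ be two admissible solutions, fix $z=-\xi<0$, and subtract the corresponding copies of~(\ref{hk}) and of~(\ref{kh}); using $a^{-1}-b^{-1}=(b-a)(ab)^{-1}$ I obtain
\begin{align}
h_1(-\xi)-h_2(-\xi)&=-\bigl(k_1(-\xi)-k_2(-\xi)\bigr)\,A(\xi),\notag\\
k_1(-\xi)-k_2(-\xi)&=-\bigl(h_1(-\xi)-h_2(-\xi)\bigr)\,B(\xi),\notag
\end{align}
with
\begin{align}
A(\xi)&=\int_0^\infty\frac{\lambda^2\,\nu_R(d\lambda)}{\bigl(k_1(-\xi)\lambda+\xi\bigr)\bigl(k_2(-\xi)\lambda+\xi\bigr)},\notag\\
B(\xi)&=\int_0^\infty\frac{\lambda^2\,\nu_K(d\lambda)}{\bigl(h_1(-\xi)\lambda+1\bigr)\bigl(h_2(-\xi)\lambda+1\bigr)},\notag
\end{align}
so that $\bigl(h_1(-\xi)-h_2(-\xi)\bigr)\bigl(1-A(\xi)B(\xi)\bigr)=0$. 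By item (i) the numbers $k_i(-\xi)$ are positive, hence $k_i(-\xi)\lambda+\xi\geq\xi$ and $A(\xi)\leq\rho_2/\xi^2$ by~(\ref{nukr}); by positivity of the $h_i(-\xi)$, $h_i(-\xi)\lambda+1\geq1$ and $B(\xi)\leq\kappa_2$. Therefore $A(\xi)B(\xi)\leq\rho_2\kappa_2/\xi^2<1$ for every $\xi>(\rho_2\kappa_2)^{1/2}$, which forces $h_1(-\xi)=h_2(-\xi)$ on that ray, and then~(\ref{kh}) gives $k_1(-\xi)=k_2(-\xi)$ there; by the identity theorem $h_1\equiv h_2$ and $k_1\equiv k_2$ on $\mathbb{C}\setminus\mathbb{R}_+$.

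For item (ii) I would use a normal-families argument. From~(\ref{hk}) and positivity of $k^{(p)}(-\xi)$ (item (i) applied to the $p$-th problem),
\[
\xi\,h^{(p)}(-\xi)=\int_0^\infty\frac{\xi\lambda\,\nu_{R^{(p)}}(d\lambda)}{k^{(p)}(-\xi)\lambda+\xi}\;\leq\;\int_0^\infty\lambda\,\nu_{R^{(p)}}(d\lambda)\;\leq\;\Bigl(\int_0^\infty\lambda^2\,\nu_{R^{(p)}}(d\lambda)\Bigr)^{1/2},
\]
which is bounded uniformly in $p$ by the hypothesis of uniformly bounded second moments; hence the $h^{(p)}$ are Cauchy--Stieltjes transforms of nonnegative measures on $\mathbb{R}_+$ of uniformly bounded mass, so $\{h^{(p)}\}$ is uniformly bounded on compact subsets of $\mathbb{C}\setminus\mathbb{R}_+$ and, by Montel's theorem, normal. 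Along any subsequence on which $h^{(p)}\to h_\ast$ locally uniformly, $h_\ast$ is again a Cauchy--Stieltjes transform of a nonnegative finite measure on $\mathbb{R}_+$; setting $k_\ast(z):=\int_0^\infty\lambda\,(h_\ast(z)\lambda+1)^{-1}\nu_K(d\lambda)$, the weak convergence $\nu_{K^{(p)}}\to\nu_K$ together with the uniform integrability furnished by the uniformly bounded second moments gives $k^{(p)}(-\xi)\to k_\ast(-\xi)$ for every $\xi>0$, and then letting $p\to\infty$ in~(\ref{hk}) on $I_-$ shows that $(h_\ast,k_\ast)$ solves the limiting system there, hence on all of $\mathbb{C}\setminus\mathbb{R}_+$ by analyticity. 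Since $(h_\ast,k_\ast)$ then inherits analyticity, the sign conditions and the normalization~(\ref{hcond})---strict positivity of $h_\ast$ on $I_-$ and $\sup_\xi\xi h_\ast(-\xi)>0$ being forced by the limiting equation~(\ref{hk}) itself---the uniqueness just proved identifies it with the solution $(h,k)$ associated with $(\nu_K,\nu_R)$. As the subsequential limit is always $(h,k)$, the whole sequence $\{(h^{(p)},k^{(p)})\}$ converges pointwise in $\mathbb{C}\setminus\mathbb{R}_+$ to $(h,k)$.

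I expect the main obstacle to be the bookkeeping in item (ii): one must verify that the a priori bound on $\sup_{\xi>0}\xi h^{(p)}(-\xi)$ is genuinely uniform in $p$ (which is exactly where the uniform second--moment hypothesis on $\{\nu_{R^{(p)}}\}$ enters), and, more delicately, that the subsequential limit cannot degenerate out of the admissible class---i.e.\ that $h_\ast$ stays strictly positive on $I_-$ with $\sup_\xi\xi h_\ast(-\xi)\in(0,\infty)$---so that the contraction bound $A(\xi)B(\xi)<1$ for large $\xi$ can be invoked to pin the limit down. The uniqueness estimate itself, once item (i) is in hand, is the elementary computation displayed above.
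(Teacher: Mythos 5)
Your proposal is correct, and items (i) and (ii) follow essentially the same path as the paper: (i) is read off from (\ref{kh}) via the identity $\Im k=-\Im h\int\lambda^{2}|h\lambda+1|^{-2}\nu_{K}(d\lambda)$ and the Schwarz inequality, and (ii) is the same compactness--plus--uniqueness (normal families, subsequential limits, identification of the limit) argument the paper uses. Where you genuinely diverge is the uniqueness proof. The paper sets up the map $F$ of (\ref{F}) -- (\ref{f12}) and invokes the implicit function theorem, the real content being the lower bound $|1-I(h)J(k,z)|\geq C(k,z)\,\Im z\,(\Im h)^{-1}>0$ on the Jacobian determinant, obtained from the sign structure of $\Im h$ and $\Im k$; strictly speaking this yields local uniqueness and requires the continuation in $z$ to be argued. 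You instead subtract two admissible solutions at $z=-\xi$, obtain $(h_{1}-h_{2})(1-A(\xi)B(\xi))=0$ with the elementary bounds $A(\xi)\leq\rho_{2}/\xi^{2}$, $B(\xi)\leq\kappa_{2}$ coming from $k_{i}(-\xi)>0$, $h_{i}(-\xi)>0$ and (\ref{nukr}), conclude coincidence on the ray $\xi>(\rho_{2}\kappa_{2})^{1/2}$, and finish with the identity theorem on the connected domain $\mathbb{C}\setminus\mathbb{R}_{+}$. This contraction-at-infinity argument is more elementary, gives global uniqueness in one stroke, and is the standard device in the random matrix literature; what it costs you is that it only works after restricting to $I_{-}$ of (\ref{imin}) and then continuing analytically, whereas the paper's Jacobian bound is valid at every $z$ with $\Im z\neq0$. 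Two small points to keep in mind: the positivity $k(-\xi)>0$ (and the nondegeneracy of the limit in (ii)) implicitly uses that $\nu_{K}$ and $\nu_{R}$ are not concentrated at the origin, an assumption the paper also makes tacitly through (\ref{nur}) and (\ref{nuka}); and in (ii) the uniform-in-$p$ bound $\sup_{\xi}\xi h^{(p)}(-\xi)\leq\rho_{2}^{1/2}$, which you correctly flag, is exactly what makes the family normal on all of $\mathbb{C}\setminus\mathbb{R}_{+}$ via the Stieltjes representation guaranteed by (\ref{hcond}).
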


\begin{proof}
We will start with the proof of assertion (i). It follows from (\ref{kh}), (%
\ref{nukr}) and the analyticity of $h$ in $\mathbb{C}\setminus \mathbb{R}%
_{+} $ that $k$ is also analytic in $\mathbb{C}\setminus \mathbb{R}_{+}$.
Next, for any solution of (\ref{hk}) -- (\ref{kh}) we have \ from (\ref{kh})
with $\Im z\neq 0$%
\begin{equation}
\Im k(z)=-\Im h(z)\int_{0}^{\infty }\frac{\lambda ^{2}\nu _{K}(d\lambda )}{%
|h(z)\lambda +1|^{2}}  \label{imkh}
\end{equation}%
and then (\ref{hcond}) yields (\ref{imk}) for $\Im z\neq 0$, while (\ref{kh})
with $z=-\xi <0$
\begin{equation*}
k(-\xi )=\int_{0}^{\infty }\frac{\lambda \nu _{K}(d\lambda )}{h(-\xi
)\lambda +1},
\end{equation*}%
the positivity of $h(-\xi )$ (see (\ref{hcond})), (\ref{mkr}) and Schwarz
inequality yield (\ref{imk}) for $z=-\xi $.

Let us prove now that the system (\ref{hk}) -- (\ref{kh}) is uniquely
solvable in the class of pairs of functions $(h,k)$ analytic in $\mathbb{C}%
\setminus \mathbb{R}_{+}$ and satisfying (\ref{hcond}) and (\ref{imk}).

Denote $\mathbb{C}_{+}$ and $\mathbb{C}_{-}$ the upper and lower open
half-planes. Consider first the case $z\in \mathbb{C}_{+}$ of the system (%
\ref{hk}) -- (\ref{kh}). To this end introduce the map%
\begin{equation}
F:\{h\in \mathbb{C}_{+}\}\times \{k\in \mathbb{C}_{-}\}\times \{z\in \mathbb{%
C}_{+}\}\rightarrow \mathbb{C\times C}  \label{F}
\end{equation}%
defined by%
\begin{align}
& F_{1}(h,k,z)=h-\int_{0}^{\infty }\frac{\lambda \nu _{R}(d\lambda )}{%
k\lambda -z},h  \label{f12} \\
& F_{2}(h,k,z)=k-\int_{0}^{\infty }\frac{\lambda \nu _{K}(d\lambda )}{%
h\lambda +1}.  \notag
\end{align}%
The map is well defined in the indicated domain, since there $\Im |k\lambda
-z|>\lambda |\Im k|$ and $\Im |1+h\lambda |>\lambda \Im h$, hence the
absolute values of the integrals in $F_{1}$ and that in $F_{2}$ are bounded
from above by $|\Im k|^{-1}<\infty $ and $(\Im h)^{-1}<\infty $
respectively. The equation%
\begin{equation}
F(h,k,z)=0  \label{feq}
\end{equation}%
is in fact (\ref{hk}) -- (\ref{kh}). We will apply now to the equation the
implicit function theorem. To this end we have to prove that the Jacobian of
$F$, i.e., $2\times 2$ matrix of derivatives of $F$ with respect to $h$ and $%
k$, is invertible. It is easy to find that the determinant of the Jacobian
is
\begin{equation}
1-I(h)J(k,z)  \label{detj}
\end{equation}%
with%
\begin{equation*}
I(h)=\int_{0}^{\infty }\frac{\lambda ^{2}\nu _{K}(d\lambda )}{(h\lambda
+1)^{2}}\leq A(h),\;J(k,z)=\int_{0}^{\infty }\frac{\lambda ^{2}\nu
_{R}(d\lambda )}{(k\lambda -z)^{2}}\leq B(k,z)
\end{equation*}%
and
\begin{eqnarray*}
0 &<&A(h):=\int_{0}^{\infty }\frac{\lambda ^{2}\nu _{K}(d\lambda )}{%
|h\lambda +1|^{2}}\leq (\Im h)^{-2}<\infty , \\
0 &<&B(k,z):=\int_{0}^{\infty }\frac{\lambda ^{2}\nu _{R}(d\lambda )}{%
|k\lambda -z|^{2}}\leq (\Im k)^{-2}<\infty ,
\end{eqnarray*}%
where we used (\ref{mkr}) to obtain the second inequality.

On the other hand, the imaginary part of (\ref{f12}) -- (\ref{feq}) yield
(cf. (\ref{imkh}))
\begin{equation*}
A(h)=-\frac{\Im k}{\Im h},\;B(k,z)=-\frac{\Im h}{\Im k}+\frac{\Im z}{\Im k}%
C(k,z),
\end{equation*}%
where%
\begin{equation}
0<C(k,z)=\int_{0}^{\infty }\frac{\lambda \nu _{R}(d\lambda )}{|k\lambda
-z|^{2}}<\infty ,\;\Im z>0.  \label{C}
\end{equation}%
This implies%
\begin{equation}
0<A(h)B(k,z)=1-\Im z(\Im h)^{-1}C(k,z)  \label{ABC}
\end{equation}%
and since $C(k,z)\,\Im z(\Im h)^{-1}>0$ in view of (\ref{F}) and (\ref{C}),
we have for the determinant (\ref{detj})%
\begin{equation}
|1-I(h)J(k,z)|\geq 1-A(h)B(k,z)=C(k,z)\,\Im z(\Im h)^{-1}>0.  \label{jacp}
\end{equation}%
Thus, the Jacobian of the map (\ref{F}) -- (\ref{f12}) is invertible and the system (\ref%
{hk}) \ -- (\ref{kh}) is uniquely solvable in $\mathbb{C}_{+}$. The proof
for $\mathbb{C}_{-}$ is analogous.

Assume now that $z=-\xi ,\;\xi >0$. Here we consider the map%
\begin{equation*}
\widetilde{F}:\{h\in \mathbb{R}_{+}\setminus \{0\}\}\times \{k\in \mathbb{R}%
_{+}\}\times \{z=-\xi \in \mathbb{R}_{-}\setminus \{0\}\}\rightarrow \mathbb{%
R\times R}
\end{equation*}%
defined by (\ref{f12}) with $h>0,\;k>0,\;z=-\xi <0$. It is easy to find that
the map is well defined since $k\lambda +\xi >k\lambda \geq 0,\;1+h\lambda
>h\lambda \geq 0$, hence the integrals in (\ref{F}) with $h>0,\;k>0,\;z=-\xi
<0$ are positive and bounded from above by $k^{-1}<\infty $ and $%
h^{-1}<\infty $ respectively. Moreover, since in this case we have
\begin{eqnarray*}
I(h) &=&A(h)=\int_{0}^{\infty }\frac{\lambda ^{2}\nu _{K}(d\lambda )}{%
(h\lambda +1)^{2}},^{{}} \\
J(k,-\xi ) &=&B(k,-\xi )=\int_{0}^{\infty }\frac{\lambda ^{2}\nu
_{R}(d\lambda )}{(k\lambda +\xi )^{2}},
\end{eqnarray*}%
the determinant of the Jacobian of $\widetilde{F}$ is now (cf. (\ref{detj}))
\begin{equation*}
1-A(h)B(k,-\xi ),\;\;h>0,\;k>0,\;\xi >0.
\end{equation*}%
Set in (\ref{ABC}) $h=h^{\prime }+i\varepsilon ,\;k=k^{\prime }-i\varepsilon
,\;z=-\xi +i\varepsilon $ where $h^{\prime }>0,\;k^{\prime }>0,\;\xi
>0,\;\varepsilon >0$ and carry out the limit $\varepsilon \rightarrow 0$. We
obtain (cf. (\ref{jacp}))
\begin{equation*}
1-A(h^{\prime })B(k^{\prime },-\xi )=C(k^{\prime },-\xi )=\int_{0}^{\infty }%
\frac{\lambda \nu _{R}(d\lambda )}{(k\lambda +\xi )^{2}}>0.
\end{equation*}%
This proves the unique solvability of (\ref{hk}) -- (\ref{kh}) in $\mathbb{C}%
\setminus \mathbb{R}_{+}$.

Let us prove assertion (ii) of the lemma. Since $h^{(p)}$ and $k^{(p)}$ are
analytic and uniformly in $p$ bounded outside the closed positive semiaxis,
there exist subsequences $\{h^{(p_{j})},k_{{}}^{(p_{j})}\}_{j}$ converging
pointwise in $\mathbb{C}\setminus \mathbb{R}_{+}$ to a certain analytic pair
$(\widetilde{h},\widetilde{k})$. Let us show that $(\widetilde{h},\widetilde{%
k})=(h,k)$. It suffices to consider real negative $z=-\xi >0$ (see (\ref%
{imin})). Write for the analog of (\ref{kh}) for $\nu _{K^{(p)}}$:%
\begin{eqnarray*}
k^{(p)} &=&\int_{0}^{\infty }\frac{\lambda \nu _{K^{(p)}}(d\lambda )}{%
h^{(p)}\lambda +1} \\
&=&\int_{0}^{\infty }\frac{\lambda \nu _{K^{(p)}}(d\lambda )}{\widetilde{h}%
\lambda +1}+(\widetilde{h}-h^{(p)})\int_{0}^{\infty }\frac{\lambda ^{2}\nu
_{K^{(p)}}(d\lambda )}{(h^{(p)}\lambda +1)(\widetilde{h}\lambda +1)}.
\end{eqnarray*}%
Putting here $p=p_{j}\rightarrow \infty $, we see that the l.h.s. converges
to $\widetilde{k}$, the first integral on the right converges to the r.h.s
of (\ref{kh}) with $\widetilde{h}$ instead of $h$ since $\nu _{K^{(p)}}$
converges weakly to $\nu _{K}$, the integrand is bounded and continuous
and the second integral is bounded in $p$ since $h^{(p)}(-\xi )>0$, $%
\widetilde{h}(-\xi )>0$ and the second moment of $\nu _{K^{(p)}}$ is bounded
in $p$ according to (\ref{nukr}), hence, the second term vanishes as $%
p=p_{j}\rightarrow \infty $. An analogous argument applied to (\ref{hk}%
) show $(\widetilde{h},\widetilde{k})$ is a solution of (\ref{hk}) -- (\ref%
{kh}) and then the unique solvability of the system implies that $(%
\widetilde{h},\widetilde{k})=(h,k)$.
\end{proof}

\medskip 
\textbf{Acknowledgment}. We are grateful to Dr. M. Simbirsky for introducing
us to fascinating field of machine learning and interesting discussions.
We are also grateful to the referee for the careful reading of the manuscript
and for suggestions which helped us to improve considerably our presentation.


\begin{thebibliography}{99}
\bibitem{Ba-Si:10} Z. Bai and J. W. Silverstein, \textit{Spectral Analysis
of Large Dimensional Random Matrices.} Springer, New York, 2010.

\bibitem{Be-Co:12} Y. Bengio, A. Courville, and P. Vincent, \textit{%
Representation learning: A review and new perspectives}, IEEE
Trans. Pattern Anal. Mach. Intell. \textbf{35} (2013), 1798--1828.

\texttt{http://arxiv.org/abs/1206.5538}

\bibitem{Bu:17} N. Buduma, \textit{Fundamentals of Deep Learning}. O'Reilly,
Boston, 2017.

\bibitem{Ca-Ch:18} A. L. Caterini and D. E. Chang, \textit{Deep Neural
Networks in a Mathematical Framework}. Springer, Heidelberg, 2018.

\bibitem{Ch-Co:18} A. Chakrabarty, S. Chakraborty and R. S. Hazra, \textit{A
note on the folklore of free independence}, (2018).

\texttt{http://arxiv.org/abs/1802.00952}

\bibitem{Co-Ha:14} R. Couillet and W. Hachem, \textit{Analysis of the
limiting spectral measure of large random matrices of the
separable covariance type,} Random Matrices Theory Appl.
\textbf{3} (2014), 1450016.

\bibitem{Gi-Co:16} R. Giryes, G. Sapiro and A. M. Bronstein, \textit{Deep
neural networks with random Gaussian weights: A universal
classification strategy?} IEEE Trans. Signal Process. \textbf{64}
(2016), 3444--3457.

\bibitem{Go-Co:15} F. Gotze, H. Kosters and A Tikhomirov, \textit{Asymptotic
spectra of matrix-valued functions of independent random matrices
and free probability}, Random Matrices Theory Appl. \textbf{4}
(2015) 1550005.

\bibitem{Le-Co:15} Y. LeCun, Y. Bengio and G. Hinton, \textit{Deep learning,}
Nature \textbf{521} (2015), 436--444.

\bibitem{Li-Qi:00} A. Ling and R.C. Qiu, \textit{Spectrum concentration in
deep residual learning: a free probability approach}, IEEE Acess
\textbf{7} (2019), 105212--105223.

\bibitem{Ma-Ma:17} C. H. Martin and M. W. Mahoney, \textit{Rethinking
generalization requires revisiting old ideas: statistical
mechanics approaches and complex learning behavior}, 2017.

\texttt{%
http://arxiv.org/abs/1710.09533}

\bibitem{Ma-Co:16} A. G. de G. Matthews, J. Hron, M. Rowland, R. E. Turner,
and Z. Ghahramani. \textit{Gaussian process behaviour in wide deep
neural networks}, (2018).

\texttt{http://arxiv.org/abs/1804.1127100952}

\bibitem{Mi-Sp:17} J. A. Mingo and R. Speicher, \textit{Free Probability and
Random Matrices}. Springer, Berlin, 2017.

\bibitem{Mu:05} R. B. Muirhead, \textit{Aspects of Multivariate Statistical
Theory}. Wiley, N.Y., 2005.

\bibitem{Mu:02} R. M\"{u}ller, \textit{On the asymptotic eigenvalue
distribution of concatenated vector-valued fading channels}. IEEE
Trans. Inf. Theory \textbf{48} (2002), 2086--2091.

\bibitem{Pa:00} L. Pastur. \textit{Eigenvalue distribution of random matrices%
}. In: \textit{Random Media 2000} Proceedings of the Mandralin
Summer School, June 2000, Poland, Interdisciplinary Centre of
Mathematical and Computational Modeling, Warsaw, 2007, pp.93 --
206

\bibitem{Pa-Sh:11} L. Pastur and M. Shcherbina, \textit{Eigenvalue
Distribution of Large Random Matrices.} AMS, Providence, 2011.


\bibitem{Pa-Sl:20} L. Pastur and V. Slavin. \textit{On random matrices
arising in deep neural networks: General i.i.d. case} (in
preparation).

\bibitem{Pe-Hi:00} D. Petz and F. Hiai, \textit{Semicircle Law, Free Random
Variables and Entropy}. AMS, Providence, 2000.

\bibitem{Pe-Ba:17} J. Pennington and Y. Bahri. \textit{Geometry of neural
network loss surfaces via random matrix theory}, Proc. Mach.
Learn. Res. (PMLR) \textbf{70} (2017), 2798--2806.

\bibitem{Pe-Co:18}
J. Pennington, S. Schoenholz, and S. Ganguli, \textit{The
emergence of
spectral universality in deep networks}. 
Proc. Mach. Learn. Res. (PMLR) \textbf{84} (2018), 1924--1932.

\bibitem{Po-Co:16} B. Poole, S. Lahiri, M. Raghu, J. Sohl-Dickstein and S.
Ganguli, \textit{Exponential expressivity in deep neural networks
through transient chaos.} In: Advances In Neural Information
Processing Systems, 2016, pp. 3360--3368.

\bibitem{Sa-Co:11} A. M. Saxe, P. W. Koh, Z. Chen, M. Bhand,
  B. Suresh, and
A. Y. Ng, \textit{On random weights and unsupervised feature
learning}. In: \textit{ICML 2011} Proceedings of the 28th
International Conference on Machine Learning, Bellevue,
Washington, June 28 - July 2, 2011, pp. 1089--1096.

\bibitem{Sc-Wa:17} S. Scardapane and D. Wang. \textit{Randomness in neural
networks: an overview},  WIREs Data Mining Knowl. Discov, 2017,
7:e1200. doi: 10.1002/widm.1200.

\bibitem{Sc:15} J. Schmidhuber, \textit{Deep learning in neural networks: An
overview}, Neural Networks \textbf{61} (2015), 85--117.

\bibitem{Sc-Co:17} S. S. Schoenholz, J. Gilmer, S. Ganguli and J.
Sohl-Dickstein, \textit{Deep information propagation}, (2016).

\texttt{http://arxiv.org/abs/1611.01232}.

\bibitem{Sh-Ma:19} A. Shrestha and A. Mahmood. \textit{Review of deep
learning algorithms and architectures}. IEEE Acess \textbf{7}
(2019), 53040--53065.

\bibitem{Ve:18} R. Vershynin \emph{High-Dimensional Probability. An
Introduction with Applications in Data Science}. Cambridge
University Press, Cambridge, 2018.

\end{thebibliography}
\end{document}